\documentclass[12pt,twoside]{article}
\usepackage[utf8]{inputenc}
\usepackage{amsmath,graphicx,microtype,amssymb,verbatim,multirow,url,bbm,rotating,booktabs,amsthm}
\usepackage{float,hhline,afterpage,soul,pgfplots,pdfpages,subcaption}

\usepackage{pgfplots}
\pgfplotsset{compat=1.18}

\usepackage[font={normal}]{caption}
\usepackage{libertine}
\usepackage{color} 
\usepackage{sgame}
\usepackage{enumitem}
\usepackage{comment}
\usepackage{subcaption}
\usepackage{tikz}
\usetikzlibrary{positioning}
\usetikzlibrary{arrows.meta}
\usetikzlibrary{fit}

\makeatletter
	\renewcommand{\abstract}[1]{\def \@abstract {#1}}
	\newcommand{\jelcodes}[1]{\def \@jelcodes {#1}}
	\newcommand{\keywords}[1]{\def \@keywords {#1}}
	\newcommand{\thanknotes}[1]{\def \@thanknotes {#1}}
	\newcommand{\contact}[1]{\def \@contact {#1}}
	\newcommand{\shortauthor}[1]{\def \@shortauthor {#1}}
	\newcommand{\shorttitle}[1]{\def \@shorttitle {#1}}
\makeatother

\jelcodes{}
\keywords{}
\thanknotes{}
\shortauthor{}
\shorttitle{}
\abstract{}

\newcommand{\abs}[1]{\left\lvert{#1}\right\rvert} 

\usepackage[paperwidth=8.5in, paperheight=11in]{geometry}
\usepackage{setspace}
\usepackage[hang,flushmargin]{footmisc} 
\usepackage[authoryear]{natbib}

\makeatletter
\usepackage{fancyhdr}
\makeatother

\newcommand\blfootnote[1]{%
  \begingroup
  \renewcommand\thefootnote{}\footnote{#1}%
  \addtocounter{footnote}{-1}%
  \endgroup
}

\usepackage{authblk}
\makeatletter
\def \maketitle { 
	\thispagestyle{empty}
	\vspace*{0.1in}
	\blfootnote{\textsc{Contact.} \@contact.  \@thanknotes\\}
	\begin{center}
	\begin{minipage}{5.2in}
	\begin{center}
	{\large {\textbf{\@title}}}
	
	\vspace{0.2in}
	
	{\textsc{\@author}}
	
	\vspace{0.2in}
	
	{\@date}
	\end{center}
	
	\ifx\@abstract\@empty
	\relax
	\else
	{\small{\textsc{Abstract.} \@abstract}}
	\fi
	
	\ifx\@keywords\@empty
	\relax
	\else
	\vspace{0.2in}
	
	{\small\textsc{Keywords.} \@keywords.}
	\fi
	
	\ifx\@jelcodes\@empty
	\relax
	\else
	{\small\textsc{JEL Codes.} \@jelcodes.}
	\fi
	
	\end{minipage}
	\end{center} }
\makeatother

\makeatletter
\def\@seccntformat#1{\csname the#1\endcsname.\ }
\makeatother

\usepackage{sectsty}
\allsectionsfont{\noindent\normalsize}
\sectionfont{\centering\normalsize\uppercase}
\paragraphfont{\textnormal}

\DeclareMathOperator*{\argmin}{arg\,min}

\providecommand{\U}[1]{\protect\rule{.1in}{.1in}}

\newtheorem{theorem}{Theorem}

\newtheorem{subtheorem}{Theorem}[theorem]

\newtheorem{lemma}{Lemma}

\newtheorem{proposition}{Proposition}

\newtheorem{assumption}{Assumption}

\newtheorem{prop}[proposition]{Proposition}

\usepackage{hyperref}

\makeatletter
\renewcommand\@biblabel[1]{}
\makeatother

\hypersetup{hidelinks}

\begin{document}

\title{\Large Organization Design for Complex Worlds} 
\author{\textsc{Jonathan Libgober}}

\affil{University of Southern California}

\date{\today}

\abstract{I study the role of \emph{horizontal complexity}---defined as the variation in actions that similar tasks require---in organization design. A continuum of workers each choose an action to adapt to a local state that follows a Gaussian process across locations. Headquarters can group workers into \emph{teams}, simplifying the attention problem it faces in coordinating across the organization, at the cost of inhibiting adaptation. Tools from spectral theory identify how horizontal complexity affects team size: roughly speaking, variation left unresolved by headquarters expands teams, while variation concentrated along dimensions that headquarters absorbs through attention shrinks them.}

\keywords{Organizational hierarchy, Karhunen-Lo\'eve decomposition, complexity}

\shortauthor{Jonathan Libgober}
\shorttitle{}

\contact{\textsc{libgober.economics@gmail.com}}
\thanknotes{I am deeply grateful to Vittorio Bassi for several inspiring conversations, without which this paper would not have been written. I thank Arjada Bardhi, Krishna Dasaratha, Duarte Gon{\c c}alves, Navin Kartik, Nicolas Lambert, Luciano Pomatto, Eduard Talamas, Omer Tamuz and Nate Yoder for helpful comments. I used ChatGPT, Gemini, and Claude for AI-assisted research assistance. This assistance included exploratory proof discussion and algebra checks, as well as literature searches and language editing. Refine.ink was similarly used for proofreading. All arguments and exposition were selected, developed, and verified by me; I take full responsibility for the final manuscript and any errors.}

\setcounter{page}{1}

\maketitle

\vspace{20mm}

\begin{center}
\end{center} 

\normalsize

\setcounter{page}{0}

\newpage

\section{Introduction}

\footnotesize

\begin{quote} 
Our business is strong. Gross profit continues to grow, we continue to serve more and more customers, and profitability is improving. But something has changed. We're already seeing that the intelligence tools we're creating and using, paired with smaller and flatter teams, are enabling a new way of working which fundamentally changes what it means to build and run a company. And that's accelerating rapidly.\\ ~~
\\
---February 26, 2026 post on X by Jack Dorsey announcing a restructuring of Block.
\end{quote}

\normalsize

This paper studies the relationship between an organization's design and the \emph{horizontal complexity} of the environment it faces. I use ``horizontal'' to distinguish this notion from task difficulty, which provides a natural ``vertical'' dimension of complexity. Horizontal complexity instead measures how correlated the ideal executions across different tasks are as a function of task similarity. If, fixing similarity, this interdependence is weaker, then the environment would be described as more horizontally complex---even if task difficulty is itself unchanged. 

To illustrate this form of complexity and its interaction with organization design, consider the Manhattan Project, the United States' effort to develop the atomic bomb during World War II. A major difficulty of managing a project of its scale---at its peak, the Manhattan Project employed approximately 130,000 individuals---was the need to coordinate a disparate set of scientific, engineering, and production tasks. The tasks formed a \emph{landscape} of related decisions, horizontally differentiated by topic but not obviously in terms of difficulty. For instance, different researchers studied different properties of plutonium, with some of these properties in turn influencing more distant choices related to the bomb assembly method \citep{hoddeson1993critical}. A key challenge faced by the project's leaders---General Leslie Groves and J. Robert Oppenheimer central among them---was how to aggregate this information and integrate it into an overarching vision \citep{thorpe2000oppenheimer}. One difficulty here was managing coordination given limitations on what each team could be informed about \citep{borpujari2025adaptive}. Several features of the institutional design of the Manhattan Project grew out of the need to effectively manage and process this information.

This underlying problem is one that organizations frequently confront, even if the stakes are typically not quite so dramatic. Many production processes involve a bundled set of disparate tasks varying in relatedness: for instance, planes, cars, AI products, and point-of-sale systems are all the sum of various interrelated and interdependent components, with management facing the goal of interpreting the organization's opportunities and deficiencies to develop an overarching vision. The Manhattan Project is one salient example where an organizational hierarchy developed as part of managing a project with a high degree of horizontal complexity. This paper seeks to understand the relationship between horizontal complexity and hierarchy, with an eye toward how changes in the environment that influence the former become reflected in the latter.   

The framework I propose builds on a line of work in economic theory, dating back to at least \citet{MarschakRadner1972}, on the tradeoffs organizations face when balancing the necessity of letting workers \emph{adapt} to their local circumstances effectively while still \emph{coordinating} information across dispersed locations. The baseline model focuses on \emph{team size} as the primary margin of organization design. Team size influences the tradeoff between adaptation and coordination, facilitating the latter but inhibiting the former. Larger teams economize on the points of contact headquarters must track, but raise the cost of adapting each action to local conditions (for instance, by making the execution of each worker's task dependent on the other workers). I consider other organization design margins in extensions, including richer hierarchy layers.

My main results identify how complexity influences team size. While horizontal complexity \emph{generally} has an ambiguous impact on team size, the framework enables a sharp characterization of \emph{how}  horizontal complexity enters into this decision. The simple intuition is that larger teams help organizations manage complexity
that falls outside the dimensions headquarters attends to when coordinating the
organization. An increase in complexity primarily along those dimensions makes larger teams less effective at managing complexity, in a relative sense. An increase in complexity outside of those dimensions has the opposite impact. This decomposition illustrates why, for instance, a technology that simplifies worker tasks can increase team size but one that simplifies management's coordination efforts can decrease it.

Toward this end, I model uncertainty using Gaussian processes and apply tools from spectral theory to decompose the resulting state process. The use of Gaussian processes to describe a rich collection of interdependent states places my paper in a tradition following the influential work of \citet{Callander2011}, which deployed this framework in a search-and-learning problem. The reason the Gaussian process framework is particularly well-suited is that it allows (1) a tractable description of the joint correlation across a rich set of locations, while (2) correlation varies depending on the ``closeness'' across different locations in a natural way. It is therefore a natural framework to use to study horizontal complexity as defined above. Moreover, as in \citet{Callander2011} and the subsequent literature, I will exploit a significant degree of tractability facilitated by the continuous model to speak to the relationship between organization design (i.e., optimal team size) and horizontal complexity. 

In the model I propose, the Gaussian process determines the state process that each worker in the unit interval seeks to match. To explain the mechanics of the model, here I focus on the model's discrete approximation where $n+1$ workers are located on a fine grid $\{0,1/n, \ldots, 1\}$, with states being multivariate normal random vectors; this alternative is described formally in Appendix \ref{app:discrete}.  Members of a team are constrained to be adjacent, and team sizes are assumed to be regular---i.e., all the same size, as in \citet{DesseinSantos2006}.\footnote{An advantage of the continuum limit, however, is that unlike discrete models, I can consider comparative statics in team size that do not require the number of workers to be divisible by team size. The model treats team size as a real number, which as I describe could correspond to having some heterogeneity in the sizes of adjacent teams; e.g., a team size of 2.5 can be approximated by alternating teams of size 2 and 3.}

The organization's adaptation loss is the average loss across workers in their individual decision problems. The model posits that these losses are exogenously larger for larger teams, reflecting that each worker's action must be coordinated with more teammates, which magnifies losses across workers on the same team: if one team member is worse off, so are other team members.\footnote{Appendix \ref{app:discrete} describes how to build this feature into the discrete model. The only challenge in doing so endogenously in the continuum model is that each team is infinitesimally small; as I view these issues as a detour, the main text only focuses on the continuum setting directly with the understanding that it reflects a particular limit of such a discrete model. } The actions of all members of the team determine the \emph{team state}, which is the average of all team member actions. The choice of team size in the discrete approximation determines the number (or more precisely in the continuous limit, mass) of team-level states headquarters must track.

To model coordination, I assume that headquarters must predict the average action within each team. A loss is incurred per team that depends on the distance between this prediction and the true average action. This formulation reflects the idea that headquarters' coordination activities such as giving instructions happen at the team level, and that within-team communication is cheap relative to team-to-headquarters communication. The challenge is that headquarters has a fixed \emph{attention capacity} and can acquire only information about the team state process up to a fixed budget. I measure this amount using mutual information, but allow headquarters to observe any signal process jointly distributed with the team state process. The solution to this attention problem is classical, taking the form of the \emph{water-filling algorithm} \citep{CoverThomas2006}, previously applied in economics to a rational inattention problem by \cite{KoszegiMatejka2020}. The idea is to represent the correlated state process as a collection of independent components and then allocate attention to the components responsible for the most variation. Posterior variance is equalized across the components that receive attention.

To see why changes in horizontal complexity yield ambiguous predictions on team size in the baseline model, recall that team size balances the tradeoff between adaptation and coordination loss. Multiplying the variance of the individual states by a constant multiplies the organization's loss by a constant. But this increased loss is the same for adaptation and coordination, so the underlying tradeoff is unchanged. Hence, team size is unchanged as well. But some changes in the underlying state distribution increase coordination loss by less than adaptation loss. In particular, increases in variance along dimensions that headquarters pays attention to increase adaptation loss by a larger factor than coordination loss, since attention dampens the impact on coordination. Such increases therefore lead to more siloed teams. By contrast, increases in variance along dimensions outside headquarters' attention problem are not similarly dampened; these increases therefore strengthen the relative value of larger teams.

This dichotomy, between the components which significantly influence the attention problem and those that do not, drives the relationship between complexity and team size. The technical innovations yielding this insight involve extending the aforementioned results for (discrete-location) multivariate normal vectors to (continuous-location) Gaussian processes. These tools are useful because they identify which components of the organization's state process headquarters attends to.  Fundamentally, these extensions are possible because a similar decomposition applies to Gaussian processes, thanks to a result known as the \emph{Karhunen-Lo\'eve theorem}. The tools from spectral theory that I apply essentially translate the key ideas from the finite-dimensional case to the infinite-dimensional case. I show that, for any finite attention budget, the optimal solution allocates attention to only finitely many dimensions and none to the rest. This contrasts with the finite-dimensional case, where every dimension eventually enters the attention problem as the budget grows.

Despite this reduction, the continuum formulation allows me to convey the main economic message of the paper more transparently than the discrete-location formulation would allow. The familiar reason is that the continuum formulation enables simple, transparent comparative statics based on calculus.\footnote{Note that in the continuum limit, this is both a continuum of workers \emph{and} a continuum of teams, the latter of which has mass inversely related to team size. Section \ref{sec:divisions} describes a version with discrete \emph{divisions}, where the organization faces the decision of whether to merge the divisions or not.}  But the continuum formulation does more than translate comparative statics into calculus. A finite grid provides only a fixed collection of task distances, making it awkward to study how correlation changes while distance is held fixed. Indexing tasks on a continuum holds the location space fixed while allowing the covariance kernel to vary, so that changes in relatedness at any given distance can be isolated.

The tools from spectral theory I apply also help interpret the relationship between horizontal interdependence and organization design. Section \ref{sec:teamsize} proposes sample path ``jaggedness'' as one possible local measure of complexity. Using the spectral properties of the Gaussian process, I show that, under some conditions on the eigenfunctions, decreasing the eigenvalue decay rate while holding total variance fixed weakly increases both jaggedness and team size for all attention levels, strictly so away from the relevant edge cases. This result has a natural empirical implication, with the caveat that a given environmental change may not match a pure change in jaggedness as I describe it. Still, it suggests that researchers should be careful with associating volatility itself with complexity, as what matters may be how this complexity is spread out across the organization---namely, whether workers face correlated or idiosyncratic uncertainty. 

In extensions I show how these insights translate into other margins of organization design, such as training or decentralization decisions. For now, two are worth highlighting more thoroughly. First, I describe how team size roughly corresponds to hierarchy depth---while these different margins may have different costs in practice, my results on the relationship between horizontal complexity and team size also translate to hierarchy expansion more broadly. Second, I show how the ability to more efficiently simplify an organization's environment---say, due to artificial intelligence technologies---naturally leads to larger teams. The simple reason is that the technology has a larger impact on the components that headquarters pays attention to, since these tend to be the most significant ones (which is why they receive attention in the first place). The \emph{unattended} components therefore become relatively \emph{more} important, increasing team size by the logic above. This result highlights the potential dual impact of such technologies. Consistent with the Jack Dorsey quote above, I find that an increase in the attention budget itself decreases team size. But my framework shows that the way such changes influence an organization overall should depend on where the impact is concentrated. This paper provides a unified framework to think about such changes systematically, showing that the resolution of these conflicting forces should depend on how much residual complexity lies outside the dimensions
to which management allocates attention.

\section{Literature} 

The main tradeoff this paper studies, between adaptation and coordination, is inspired by the theoretical literature on organization design. The framework introduced in \cite{DesseinSantos2006} defines adaptation losses as those that depend on the primary action each worker takes and his local state, while coordination losses depend on how well each worker knows about the primary actions of other workers. The organization design problem in \cite{DesseinSantos2006} consists of the assignment of tasks to workers, where assigning more tasks to a given worker would increase the cost associated with carrying out each task, but would lessen communication frictions across workers (and hence facilitate coordination).\footnote{My results in Section \ref{sec:divisions} also speak to the question of when organizations are better off merging versus separating divisions, which is related to the question of centralization that some work following \cite{DesseinSantos2006} has also considered---notably, \cite{Rantakari2008GoverningAdaptation} and \cite{AlonsoDesseinMatouschek2008}.   }

While my main focus is also on this aspect of organization design, two other ingredients are shared with the literature on organization design following \cite{MarschakRadner1972}. The first is the possibility of direct interdependence between tasks, studied in a general linear-quadratic framework by  \cite{CalvoArmengolDeMartiPrat2015} and applied to modular production technologies and the \emph{mirroring hypothesis} by \cite{MatouschekPowellReich2025}. However, a key property of these papers is that states themselves are \emph{in}dependent, with this state interdependence being a fundamental component of my notion of horizontal complexity. I show how allowing for correlation between states provides nuanced insights for how the \emph{form} of complexity influences the shape of organizations. But on the other hand, the production technology I study does not feature such dependence, so that unlike in those papers, interdependence largely relates to the coordination problem.

The second is the presence of a centralized attention problem. While the notion that managers generally face attention constraints is a fundamental component of \cite{MarschakRadner1972} and in the following literature (e.g., \cite{GeanakoplosMilgrom1991Hierarchies, Radner1993DecentralizedInformationProcessing,BoltonDewatripont1994,VanZandt1999}), in my model this is actively chosen by headquarters alongside the rest of the organization design problem. In this sense, the problem resembles \cite{DesseinGaleottiSantos2016} and \cite{DesseinSantos2021}, where the allocation of attention is also a design choice of the organization.

While the baseline model follows much theoretical work on organization in treating all workers or tasks horizontally, this paper will also allow for richer hierarchies, which can sometimes balance adaptation and coordination more effectively than single-layer organizations. An influential example that illustrates how hierarchies can emerge to manage task difficulty is \cite{Garicano2000}. But in \cite{Garicano2000}, tasks can be directly ordered by difficulty; thus, changes that increase productivity as in \cite{GaricanoRossiHansberg2012OrganizingGrowth} would more appropriately reflect \emph{vertical complexity} rather than horizontal.\footnote{Another paper studying hierarchy design in such a horizontal model is \cite{IchihashiLiZou2025}, which instead focuses on the question of how to \emph{assign} tasks to agents to incentivize effort. } This framework has proven useful in applications; \cite{IdeTalamas2025AIKnowledgeEconomy} study the impact of artificial intelligence on organizational hierarchy in a model building on \cite{Garicano2000}. Methodologically, however, team size in my paper essentially parameterizes the ``span-of-control'' as in that literature, bundling lower-level problems into a single higher-level coordination unit.

The particular approach to modeling the coordination loss borrows heavy inspiration from the literature on rational inattention. Technically, I solve the coordination problem by applying ideas from the water-filling algorithm, described in \cite{CoverThomas2006}. \cite{KoszegiMatejka2020} build on similar ideas to describe mental budgeting. The basic insight here is that the posterior variance is minimized along the principal components of the random vector the agent faces.\footnote{The use of principal-components analysis in economic theory by now has significant precedent and is familiar in several lines of work; see, for instance, \cite{CalvoArmengolDeMartiPrat2015,GaleottiGolubGoyal2020,GaleottiEtAl2025}.} Relative to this literature, I consider a setting with a \emph{continuum of states}. While there is some precedent for modelling using continuum formulations in rational inattention problems (e.g., \cite{MackowiakWiederholt2009}), I am not aware of past work in the economics literature on attention where the state itself is a function on the continuum. Having said that, some recent papers have also found that eigenvalue and eigenvector methods can enable tractable analysis in continuum models---notably, \cite{Miyashita2024} and \cite{FarboodiKohXia2024}.

Lastly, I formalize the complexity of the world faced by the organization using the formal structure pioneered by \cite{Callander2011}. The influence of this modelling device in economic theory has been substantial, as surveyed by \citet{BardhiCallander2026LearningCorrelatedWorld}. A partial list includes voting \citep{BardhiBobkova2023LocalEvidence}, communication \citep{callander2021power,dong2024communication}, coordination problems \citep{DallAra2025Coordination}, and contracting \citep{Bardhi2024Attributes}. Relative to this literature, a key technical novelty is identifying the usefulness of the Karhunen-Lo\'eve theorem in assessing ``complexity.'' I show that team size itself reacts to ``local complexity'' more than ``global complexity,'' to the extent that the latter is measured by the process' variance and the former reflects the ``jaggedness'' of the sample paths.

\section{Model}
\label{sec:model}

The model considers an organization choosing a continuum of actions to match a stochastic process defined on that continuum. In Appendix \ref{app:discrete}, I describe a discrete-location version that converges to the continuum model as the grid becomes arbitrarily fine.

\medskip 

\noindent \textbf{Setting} An organization consists of a continuum of \emph{tasks}, $x \in [0,1]$, with each task associated with a \emph{worker}. Associated with each task location $x$ is a state denoted  $W_{0}(x)$. Tasks are interdependent, so that $W_{0}(x)$ is a stochastic process correlated across space. In particular, I assume that $W_{0}(x)$ is a mean-zero Gaussian process with symmetric covariance kernel $C(x,x')$. I assume that $C(x,x')$ is positive definite and continuous. I let $\sigma^{2}(x)$ denote the variance of this stochastic process, and assume that $\int_{0}^{1} \sigma^{2}(x)dx < \infty$. I refer to workers undertaking these tasks as the \emph{ground layer} of the organization.  Below, I describe the individual decision problem that each worker in the ground layer faces. The goal of the organization is to balance losses due to coordination and adaptation. 

\medskip

\noindent \textbf{Hierarchy} Workers can be placed into \emph{teams}. To simplify the analysis, in this paper I focus on the case of \emph{symmetric organizations}, where all teams are the same size, as in \cite{DesseinSantos2006}. While I will enrich the model to allow for the organization to design more elaborate hierarchies, for now I omit this possibility and defer this to Section \ref{sec:hierarchies}. The organization chooses teams (and later hierarchies) as a way of optimally balancing two different kinds of losses, which I refer to as \emph{adaptation loss} and  \emph{coordination loss}.

\medskip

\noindent \textbf{Adaptation Loss} Each worker chooses an action $a_{x}$ to best match their own state $W_{0}(x)$. However, matching this state is costly---not only that, doing so is more costly the larger the group size is. Specifically, the loss a worker at location $x$ faces is: 

\begin{equation*} 
\ell_{x}(a_{x}, W_{0}(x)) = f(k) \cdot [a_{x}^{2} + (a_{x} - W_{0}(x))^{2}]. 
\end{equation*}

\noindent I assume that the function $f(k)$ is strictly increasing and differentiable in $k$, equal to 1 at $k=1$ with $\lim_{k \rightarrow \infty} f(k)=\infty$. I have in mind that the choice of an action $a_{x}$ requires coordination with other group members. If $a_{x}$ is a level of precision in some input, and if each point of contact in the team must be contacted, then the marginal cost of $a_{x}$ should increase with the team size. Appendix \ref{app:discrete} shows how this microfoundation can generate losses that scale multiplicatively with team size in the discrete-location version of the model, but I treat it as an assumption since the model focuses on the continuous-location limit. 

I mention that my model will allow team size to be any real number $k$; this assumption follows \citet{Garicano2000} and the subsequent literature as a way of continuously measuring span of control in a large organization.\footnote{\citet{Garicano2000} assumes that the organization is sufficiently large that integer constraints can be ignored. \citet{FuchsGaricanoRayo2015} provide a related pointwise-matching interpretation, in which noninteger team sizes approximate matches between small positive masses of producers and consultants. See also \citet{GaricanoRossiHansberg2012OrganizingGrowth} and \citet{IdeTalamas2025AIKnowledgeEconomy}.} The same limiting team geometry could be obtained by assuming that, within any subinterval, some teams have size $\lfloor k \rfloor$ and some teams have size $\lceil k \rceil$---e.g., $k=7/3$ implies $1/3$ of teams are size 3 and $2/3$ are size 2. In the discrete approximation as the worker grid becomes fine, this formulation yields the same limiting mass of teams and the same limiting team-state process described below. I take $f(k)$ to measure the average adaptation cost associated with a span of control $k$, rather than deriving it mechanically by averaging a fixed cost schedule defined only at integer team sizes. I work with continuous $k$ and a smooth $f$ so that optimal team size can be characterized transparently by balancing marginal benefit and marginal cost.

I assume that $a_{x}$ is chosen to minimize the loss at location $x$. Under this assumption, it is immediate that: \begin{equation*} a_{x} = \frac{W_{0}(x)}{2}~~~ \text{ and } ~~~\ell_{x} \left(\frac{W_{0}(x)}{2}, W_{0}(x) \right)= \frac{W_{0}(x)^{2} f(k)}{2}. \end{equation*}

\noindent Abusing notation, I write $l(W_{0}(x))$ (without the $x$ subscript) to denote $\frac{W_{0}(x)^{2}f(k)}{2}$, the realized loss. 

A leading case of interest is one where $f(k)=1+\beta(k-1)$, reflecting the idea that the loss-per-team member of coordinating with others is $\beta$. This reflects the benchmark studied in \cite{DesseinSantos2006} in the special case where the ``primary task'' at location $x$ has a fixed weight while loss for ``bundled tasks'' is linear-per-task. I use this specification to derive sharp predictions in some later sections, but for now only mention that in principle nonlinear scaling is allowed. Notice that in this formulation, the action itself is not influenced by team size.

\medskip 

\noindent \textbf{Team States} The organization's coordination loss will depend on what the organization knows about the \emph{team states} across the organization. Heuristically, each team state can be associated with a ``team leader'' who is not a member of the ground layer of the organization and whose state is the average action of all team members. Appendix \ref{app:discrete} walks through the interpretation of the team state as the ``average action of team members'' in the discrete approximation of the model and derives the resulting specification of the covariance operator of the team-state process $W_{1}(x)$, presented below.  I defer this from the main text as the main analysis is all done in the continuum limit, but briefly summarize how this exercise motivates my specification for $W_{1}(x)$. As the worker grid is made fine with team size fixed, each team's width shrinks to zero; by the (mean-square) continuity of the task-state process, the average of team members' task states converges to the state at a representative point (Appendix \ref{app:discrete}); the resulting team-state process is supported on an interval of mass $1/k$ rather than 1.

These observations motivate my formal assumption: that the team-state process is $W_{1}(u)=W_{0}(ku)/2$ on $[0,1/k]$, so that the covariance between $W_{1}(u)$ and $W_{1}(u')$ is $\frac{1}{4}C(ku, ku')$ for all $u, u' \in [0,1/k]$. The division of the task state by 2 reflects that $a_{x} = \frac{W_{0}(x)}{2}$, and that the team-state concerns worker actions rather than states. Mathematically, larger $k$ rescales the location index of the task process while reducing the mass of teams; the underlying task process on $[0,1]$ is unchanged.  

The two extreme cases are worth commenting on. When $k=1$, then groups consist of individuals and hence the group state is simply $\frac{1}{2}W_{0}(x)$. But as $k \rightarrow \infty$, headquarters tracks vanishingly few points of contact, so the coordination burden becomes negligible. Choosing $k$ large is beneficial for coordination, but limits adaptation, while choosing $k$ small makes coordination more difficult, but facilitates adaptation. 

\medskip

\noindent \textbf{Coordination Loss} Headquarters incurs a loss depending on how much information it has about the team states. I formalize this in a manner similar to the rational-inattention literature, using mutual information. Headquarters chooses a signal $S$ that is informative about $W_{1}$. A signal $S$ is \emph{admissible} if it is a real-valued stochastic process on $[0,1/k]$, defined jointly with $W_{1}$ on a common probability space and jointly measurable in the underlying state $\omega$ and location $x$.\footnote{That is, the map $(\omega, x ) \mapsto S(\omega,x)$ is assumed to be jointly measurable, where $(\Omega, \mathcal{F}, \mathbb{P})$ is the probability space generating all uncertainty and $\omega \in \Omega$ is a sample point. The common probability space may be enlarged, if necessary, to accommodate independent signal noise. Formally, I also assume $S$ is separable in the sense that $\mathcal{F}_{S}:=\sigma(S(x) : x \in [0,1/k])$ is (modulo null events) generated by the values of $S$ at a countable collection of locations. Conditioning on $S$ means conditioning on $\mathcal{F}_{S}$; a priori, no Gaussianity or path-continuity restrictions are imposed on $S$.}

I define the mutual information between $W_{1}$ and $S$ to be: 

\begin{equation} 
I \left( W_{1};S \right)= \sup_{m < \infty, x_{1}, \ldots, x_{m} \in [0,1/k]} I(\{W_{1}(x_{1}), \ldots, W_{1}(x_{m})\};\{S(x_{1}), \ldots, S(x_{m})\}). \label{eq:MIDef}
\end{equation}

\noindent The constraint headquarters faces is that $S$ is restricted so that $I(W_{1}; S) \leq \rho$, where $\rho$ is exogenous. Headquarters then attempts to match the team states in the population. Letting $b(x) = \mathbb{E}[W_{1}(x) \mid S]$, the coordination loss incurred is: 

\begin{equation*} 
\int_{0}^{1/k} (b(x) - W_{1}(x))^{2} dx.
\end{equation*}

\noindent An important feature is that this loss is incurred \emph{per team} rather than per worker. I interpret this as a headquarters-level loss: headquarters chooses a mission, target, or intervention for each team, and the cost depends on how well informed it is about that team's state. This feature is in the spirit of \citet{Cremer1980PartialTheory}, where organizational boundaries determine the units over which local adjustment is possible and the residual coordination loss is incurred per unit.\footnote{In \citet{Cremer1980PartialTheory}, the primitive units are shops and the chosen groups are ``services.'' Transfers across services must be planned before uncertainty is realized, while allocations within a service can adapt ex post. The fully integrated organization is used only as an infeasible benchmark; here, full centralization is feasible but suboptimal due to adaptation costs.} \citet{GeanakoplosMilgrom1991Hierarchies} also obtain this feature, in a model where hierarchies economize on scarce managerial attention. The quadratic loss is a tractable way to capture the cost of acting on team-level information imperfectly.

\medskip 

\noindent \textbf{Summary and Timing} To review, the timing of the model is as follows: 

\begin{itemize} 
\item The organization decides on the organizational structure (the size of each group), as well as which signal $S$ will be subsequently observed subject to the aforementioned constraints.

\item States are realized, and actions are chosen as described above. 

\item Headquarters observes $S$ and chooses $b(x)$ for all $x$. 

\item The organization's expected payoff is expected output $\Pi$, minus the coordination and adaptation losses:

\begin{equation*} 
\Pi-\overbrace{\mathbb{E} \left[\int_{0}^{1/k} (b(x) - W_{1}(x))^{2}dx \right]}^{(\text{Coordination})} - \overbrace{ \mathbb{E} \left[\int_{0}^{1} l(W_{0}(x))dx \right]}^{(\text{Adaptation})}. 
\end{equation*}

\end{itemize}

\subsection{Warm-Up: Zero Attention Capacity}
\label{subsec:noattention} 

To help isolate the role of headquarters' information constraint, I first walk through the solution to the team design problem in the case where $\rho=0$. Here, the organization's incentives for larger teams are strongest, reflecting the idea that within-team communication has the highest importance when headquarters is fully in the dark about the organization's activities. Substituting in the solution for $a_{x}$ above, using the linearity of expectation and that $W_{1}(x)$ is mean zero (so that $b(x)=0$), I have that the objective can be written as: 

\begin{equation*} 
\Pi-\int_{0}^{1/k}\mathbb{E} \left[ W_{1}(x)^{2} \right]dx - \frac{f(k)}{2} \int_{0}^{1} \mathbb{E} \left[W_{0}(x)^{2} \right]dx .
\end{equation*}

\noindent Again using that both $W_{1}(x)$ and $W_{0}(x)$ are mean 0, I have that both integrals in the objective are simply variances. I have: 

\begin{equation*}
\int_{0}^{1/k}\mathbb{E} \left[ W_{1}(x)^{2} \right]dx= \frac{1}{4}\int_{0}^{1/k} C(kx,kx) dx = \frac{1}{4k} \int_{0}^{1}C(u,u)du = \frac{1}{4k} \int_{0}^{1} \sigma^{2}(x)dx. 
\end{equation*} 

\noindent Thus, $k$ is chosen to minimize: 

\begin{equation*} 
 \left( \frac{1}{4k} + \frac{f(k)}{2} \right) \int_{0}^{1} \sigma^{2}(x)dx. 
\end{equation*}

\noindent This illustration shows that the optimal team size has a simple form, which in particular depends only on $f$ (and does not depend on any properties of the process $W_{0}(x)$). The optimal team size manages the tradeoff between coordination loss, which is decreasing in $k$, and adaptation loss, which is increasing in $k$. The first-order condition for $k$ implies team size, if interior, should satisfy: 

\begin{equation*} 
1=2k^{2}f'(k). 
\end{equation*}

\noindent Furthermore, convexity of $f$ implies that as long as $f'(1) < 1/2$, the optimal team size will indeed be strictly greater than 1.

On the other hand, this argument also yields a stark prediction that as long as $\int_{0}^{1} \sigma^{2}(x)dx > 0$, team size is constant. Thus, as long as the underlying process is nondegenerate, the details do not influence the organizational structure. Once the attention constraint is introduced, this property will fail, and one of this paper's contributions is to illustrate \emph{which} characteristics matter and why.

\section{Using Spectral Theory to Solve the Attention Problem}  \label{sec:spectral}

I now walk through the solution to the organization's attention problem. The underlying structure used will also drive the solution to the organization design problem, but for the moment I take this as given. The solution makes use of tools from spectral theory which will be useful for several extensions as well.   Throughout the rest of the paper, I assume $\rho > 0$. 

A key difference between the attention problem in Section \ref{sec:model} and those from past work is the fact that there is a continuum of random variables, rather than a discrete number. For the case of multivariate normal random variables, the solution to the problem of minimizing posterior variance subject to a constraint on mutual information takes the form of the well-known \emph{water-filling} algorithm \citep{CoverThomas2006}, applied to a rational inattention setting by \cite{KoszegiMatejka2020}. The idea is the following: 

\begin{itemize} 
\item For a random vector $(\theta_{1}, \ldots, \theta_{n}) \sim \mathcal{N}(0, \Sigma^{2})$, note that there exists $\lambda_{1}, \ldots, \lambda_{n} \in \mathbb{R}_{+}$ and $v_{1}, \ldots, v_{n} \in \mathbb{R}^{n}$, such that, for $Z_{1}, \ldots, Z_{n}$ IID standard normal, this vector has the same distribution as $\sum_{j=1}^{n} \sqrt{\lambda_{j}} v_{j} Z_{j}$.  

\item Then, the attention problem involves obtaining signals informative of each $Z_{j}$. In this case, there is some $\mu$ such that either: (1) $\lambda_{j} < \mu$, in which case no information about $Z_{j}$ is obtained, or (2) $\lambda_{j} \geq \mu$, in which case information \emph{is} obtained, and the posterior variance of $\sqrt{\lambda_{j}} Z_{j}$ is equal to $\mu$.  
\end{itemize}

\noindent The first bullet makes use of the principal-components representation of normal random variables, which has found extensive use in economics as surveyed above. The parameter $\mu$ is referred to as the ``water level.'' It turns out that the optimal posterior mean admits a finite-location representation, summarized by the Proposition below:

\begin{proposition} \label{lem:RISol}
For any $\rho < \infty$, there exists an optimal signal and a finite mesh
$\{x_{1}, \ldots, x_{n}\}$ such that $b$ is determined by the vector
$(b(x_{1}), \ldots, b(x_{n}))$: for every $x$, there exist deterministic
coefficients $\alpha_1(x),\ldots,\alpha_n(x)$ such that
\[
b(x)=\sum_{i=1}^n \alpha_i(x)b(x_i).
\]
\end{proposition}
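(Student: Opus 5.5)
The plan is to pass from the continuum of team states to the countable family of independent Karhunen–Loève coordinates, solve the attention problem there by water-filling, and then show that only finitely many coordinates receive attention. That finiteness is what produces the finite mesh in Proposition \ref{lem:RISol}.

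\textbf{Step 1: Karhunen–Loève representation.} Since $C$ is continuous and positive definite on $[0,1]^2$, Mercer's theorem applies to the covariance operator of $W_1$ on $L^2[0,1/k]$, whose kernel is $\frac14 C(ku,ku')$. It gives eigenvalues $\lambda_1\ge\lambda_2\ge\cdots>0$ with $\sum_j\lambda_j=\frac{1}{4k}\int_0^1\sigma^2<\infty$. It also gives continuous orthonormal eigenfunctions $\phi_j$, and
\[
W_1(x)=\sum_j\sqrt{\lambda_j}\,\phi_j(x)Z_j
\]
in mean square, uniformly in $x$, with $Z_j$ i.i.d.\ standard normal.

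\textbf{Step 2: water-filling for an arbitrary admissible signal.} Let $S$ be any admissible signal. Separability lets me write $\mathcal F_S$ as generated by countably many coordinates, and continuity of $W_1$ in mean square makes the supremum in \eqref{eq:MIDef} equal to the mutual information between the sequence $(Z_j)$ and $S$. I would verify this by approximating $\sigma(W_1)$ from finitely many locations and using lower semicontinuity and monotonicity of mutual information. By Parseval, the expected coordination loss equals $\sum_j \mathbb E[(Z_j\sqrt{\lambda_j}-\mathbb E[\cdot\mid S])^2]$.

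The standard argument then gives the lower bound. Gaussian maximum entropy, subadditivity across independent $Z_j$, and conditional entropy bounds together imply $\sum_j \frac12\log^+(\lambda_j/d_j)\le\rho$, where $d_j$ is the posterior variance of component $j$. Minimizing $\sum_j d_j$ subject to this constraint yields the water level $\mu$ and $d_j=\min(\lambda_j,\mu)$. The bound is attained by observing $Y_j=Z_j+\varepsilon_j$ for $j$ with $\lambda_j>\mu$, and nothing otherwise.

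\textbf{Step 3: finitely many attended components.} The budget identity is $\rho=\sum_{j:\lambda_j>\mu}\frac12\log(\lambda_j/\mu)$. For any $\mu>0$ only finitely many $\lambda_j$ exceed $\mu$, because $\lambda_j\to0$. I would then argue that $\mu>0$ whenever $\rho<\infty$: if $\mu\downarrow0$, infinitely many terms enter the sum, each term grows without bound, and so $\rho\to\infty$. Hence some finite $N$ satisfies $\lambda_j\le\mu$ for all $j>N$. The optimal signal is therefore $(Y_1,\dots,Y_N)$, and
\[
b(x)=\sum_{j\le N}\sqrt{\lambda_j}\,\phi_j(x)\,c_j Y_j,
\]
where $c_j=\frac{1}{1+\sigma_\varepsilon^2}$ is the posterior shrinkage factor. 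So $b$ lies in the $N$-dimensional space spanned by $\phi_1,\dots,\phi_N$.

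\textbf{Step 4: the mesh.} Because $\phi_1,\dots,\phi_N$ are linearly independent continuous functions, I can choose points $x_1,\dots,x_N$ so that the matrix $\Phi=[\phi_j(x_i)]$ is invertible. I would pick the points greedily, each time using the failure of linear dependence to find a location where the current determinant does not vanish. Writing $b=\sum_j \beta_j\phi_j$, the coefficients satisfy $\beta=\Phi^{-1}(b(x_1),\dots,b(x_N))$. This gives the required deterministic $\alpha_i(x)=\sum_j\phi_j(x)(\Phi^{-1})_{ji}$.

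\textbf{Main obstacle.} I expect Step 2 to be the hardest: proving that no non-Gaussian or non-separable-in-modes signal beats water-filling when mutual information is defined as a supremum over finite grids. The key is to identify that grid supremum with the information about $(Z_j)_j$. I would try the finite-grid approximation first. On a fine grid, the finite-dimensional water-filling bound applies with that grid's eigenvalues. As the mesh refines, those eigenvalues converge to the $\lambda_j$ (Riemann-sum approximation of the operator), which gives the limiting lower bound on loss.
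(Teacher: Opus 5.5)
Your proposal is correct and follows essentially the paper's route: a Karhunen--Lo\'eve decomposition of $W_1$; a Riemann-sum plus lower-semicontinuity argument showing that information about finitely many modes is bounded by the grid-defined $I(W_1;S)$, which is exactly the paper's preliminary lemma, so the grid-eigenvalue fallback in your last paragraph is unnecessary; water-filling with only finitely many active modes because $\lambda_j\to 0$; and an inductively constructed invertible evaluation matrix $[\phi_j(x_i)]$. The differences are only in packaging: you run the Gaussian entropy converse directly on the whole mode sequence and assert full equality with $I((Z_j)_j;S)$, whereas the paper truncates to $l$ modes, invokes finite-dimensional water-filling, and sandwiches $V_l\le V\le V_l+\sum_{j>l}\nu_j$. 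You should still embed the optimal vector $(Y_1,\dots,Y_N)$ as an admissible process, e.g.\ $S(x)=\sum_{j\le N}Y_j\phi_j(x)$, as the paper does with $\widehat{Y}_{\bar n}$.
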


\noindent This Proposition reduces the infinite-dimensional attention problem to a finite-dimensional one---where headquarters' prediction of the team state only depends on its prediction of the team state at a finite number of locations.\footnote{This result does \emph{not} say that, given a mesh of $n$ points, the problem is equivalent to one where these $n$ points are used to define a multivariate normal vector. This statement is indeed generally false. Only the posterior mean function $b(x)$ is determined by its realization at finitely many points, although these values will typically depend on the full sample-path realization.} 

This extension to the continuous location case makes use of the \emph{Karhunen-Lo\'eve Theorem},\footnote{While the application of Karhunen-Lo\'eve Theorem to the attention literature is novel to my knowledge, the observation that it enables an extension of water-filling to the Gaussian process case was made by \cite{kolmogorov1956shannon} (see also \cite{pinsker1964information} and \cite{donoho1998datacompression}). The proof of Proposition \ref{lem:RISol} adapts this argument to the present location-space formulation, where mutual information is defined through finite-dimensional meshes rather than on the processes themselves. The additional implication obtained is that the optimal posterior mean can be represented by its values at finitely many locations.} which states that a mean-zero Gaussian process on $[0,1]$ with continuous covariance kernel admits the representation

\begin{equation} 
G(x) = \sum_{j} \sqrt{\lambda_{j}} \phi_{j}(x) Z_{j},  \label{eq:KLExpansion}
\end{equation}

\noindent with the sum ranging over the strictly positive eigenvalues (and converging in $L^{2}$, uniformly in $x$), where $Z_{1}, Z_{2}, \ldots$ is a sequence of IID standard normal random variables and $\lambda_{j} > 0$.\footnote{Without the restriction to Gaussian processes one loses the IID property of the $Z_{j}$ variables; aside from this feature, this result holds for more general stochastic processes as well.} The formulation (\ref{eq:KLExpansion}) is the Karhunen-Lo\'eve decomposition for the process. Each $Z_{k}$ is referred to as a \emph{mode}. The pairs $(\lambda_{j}, \phi_{j}(x))$ are the eigenvalue-eigenfunction pairs of the covariance operator of $G$. In the present setting, this is the covariance of the outcome process, which is expressed above. Not only is this the case, but the functions $\phi_{k}(x)$ form an \emph{orthonormal eigenbasis}. Thus, each $Z_{k}$ variable can be constructed given the outcome realization as: 

\begin{equation*} 
Z_{k} =\frac{1}{\sqrt{\lambda_{k}}}\int_{0}^{1} G(x) \phi_{k}(x) dx
\end{equation*}

The proof of Proposition 1 shows that there exist independent mean-zero normally distributed random variables $S_{1}, \ldots, S_{n}$ such that
\begin{equation*}
b(x)=\sum_{j=1}^{n} S_{j}\psi_j(x),
\end{equation*}
where $\psi_j$ are the eigenfunctions of the team-state process. Each $S_j$ is the posterior mean of the scaled mode $\sqrt{\nu_{j}}Z_{j}$ given headquarters' optimally chosen signal about that mode, where the signal's precision is set so that the information constraint is satisfied.

Proposition \ref{lem:RISol} shows that the attention problem retains the key aforementioned features from the finite-dimensional case: headquarters decides which $Z_{k}$ variables to pay attention to, and chooses $S_{j}$ so that the posterior variance along each attended dimension is equal. Henceforth I denote this posterior variance level as $\mu$.

In particular, there is a closed form expression for the coordination loss in terms of $k$ and the eigenvalues of the covariance operator of the stochastic process. Indeed, the adaptation loss is $\frac{f(k)}{2} \mathbb{E}[W_{0}(x)^{2}]$; the identity that $\int_{0}^{1} \sigma^{2}(x)dx= \sum_{i=1}^{\infty} \lambda_{i}$ allows us to express this term using the eigenvalues directly.\footnote{As $\sigma^{2}(x)=C(x,x)$, this identity is the continuous-location version of the fact that the sum of the eigenvalues is equal to the trace.}  The coordination term is more subtle but also straightforward in light of the above. Recall that the team-state process is equal to $\frac{W_{0}(kx)}{2}$ on the interval $[0,1/k]$ with covariance $\frac{1}{4} C(kx,kx')$. The change of variables formula and the eigenvalue-eigenfunction identity for the task process implies that: 

\begin{equation*} 
\int_{0}^{1/k}\frac{1}{4}C(kx, kx') \phi_{i}(kx')dx' = \frac{1}{4k}\int_{0}^{1} C(kx,u)\phi_{i}(u) du = \frac{\lambda_{i}}{4k} \phi_{i}(kx). 
\end{equation*}

\noindent Hence $(\lambda_{i}, \phi_{i}(x))$ are an eigenvalue-eigenfunction pair for the task process if and only if $(\frac{\lambda_{i}}{4k}, \psi_{i}(x))$ are an eigenvalue-eigenfunction pair for the team process, with $\psi_{i}(x)=\sqrt{k} \phi_{i}(kx)$.\footnote{The scaling by $\sqrt{k}$ ensures $\psi_{i}(x)$ has norm 1, as is the convention.} Putting this together, the organization's objective is to choose $k$ to maximize:

\begin{equation} 
\overbrace{-n \mu}^{(a)} - \overbrace{\sum_{m=n+1}^{\infty} \frac{1}{4k}  \lambda_{m}}^{(b)} - \frac{f(k)}{2}\sum_{m=1}^{\infty} \lambda_{m} ,\label{eq:solvedobj}
\end{equation}

\noindent where $\mu$---referred to as the \emph{water level}---is set so the attention constraint is satisfied: \begin{equation} \rho = \frac{1}{2} \sum_{m=1}^{n} \log \frac{\frac{1}{4k}  \lambda_{m}}{\mu}. \label{eq:attention} \end{equation}

\noindent In (\ref{eq:attention}), $n$ refers to the number of variables that enter into headquarters' attention problem. The \emph{active set} consists of the modes that enter this attention problem; the \emph{inactive set} consists of the modes that do not. I refer to the eigenvalues that do not enter into the attention problem as the \emph{tail} and eigenvalues that do as the \emph{head}. I maintain the following assumption throughout the remainder of the paper: 

\begin{assumption}  \label{assum:infinite}
The Karhunen-Lo\'eve decomposition involves $\lambda_{k} > 0$ for infinitely many $k$. 
\end{assumption}

\noindent While not essential for many of my results, making Assumption \ref{assum:infinite} allows me to ensure that tail eigenvalues always exist and focus on the truly infinite-dimensional case. This assumption is also quite weak; it is necessarily satisfied for any Gaussian process with nondifferentiable sample paths where $C(x,x')$ is smooth.\footnote{Indeed, if $C(x, x')$ is smooth, then smoothness of the eigenfunctions follows from differentiating the identity $\lambda_{k} \phi_{k}(x) = \int_{0}^{1} C(x,x') \phi_{k}(x)$, so nondifferentiability is only possible if $\lambda_{k} > 0$ infinitely often.} Its role is to focus on the key distinguishing feature compared to other work in economics that uses similar machinery.

\section{Complexity and Team Size} \label{sec:teamsize}

I now turn to the organization's optimization over $k$, showing how the attention problem influences team size---and especially that the conclusion from Section \ref{subsec:noattention} no longer holds once the attention constraint is introduced. Writing $\mu$ to be a function of $k$, note that as $k$ varies, $\mu(k)$ adjusts so that (\ref{eq:attention}) is satisfied. Furthermore, since the right-hand side of (\ref{eq:attention}) is monotone in $\mu$, there is a unique value for which equation is satisfied. On the other hand, conjecturing that $\mu$ scales according to $1/4k$, the constraint is satisfied with equality that does not depend on $k$. Since all eigenvalues of the team-state process are obtained from the task-state eigenvalues by the common scaling factor $1/(4k)$, $\mu$ scales in the same way: Thus, there exists $\mu^{*}$ such that: 
\begin{equation*} \mu(k) = \frac{1}{4k} \mu^{*}. \end{equation*}  Making this substitution allows us to differentiate the objective with respect to $k$ while ensuring the attention constraint still holds. Doing so yields a characterization of optimal team size. 

This observation allows us to derive sharp predictions about the relationship between complexity and team size. I focus on two potential interpretations of ``complexity'' in the main model. The first is in terms of the attention capacity $\rho$. The idea is that greater complexity might reflect a lower value of $\rho$, i.e., less processing capacity on behalf of headquarters.

The following result says that under this interpretation, greater complexity is associated with larger teams. Recall that $f'(1) <1/2$ is required for team size to be greater than 1 when headquarters cannot acquire any information about the team-state process:  

\begin{theorem} \label{thm:optimalk}
Suppose $f$ is convex with $0 < f'(1) < 1/2$. There exists $\bar{\rho}$ such that the organization's optimal team size is strictly decreasing in $\rho$ for all $\rho < \bar{\rho}$ and equal to 1 for all $\rho \geq \bar{\rho}$.  
\end{theorem}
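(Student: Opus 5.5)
The plan is to reduce the choice of $k$ to a one-dimensional convex problem whose only dependence on $\rho$ runs through a single scalar, the \emph{normalized residual variance} of the task process. Using (\ref{eq:solvedobj}), (\ref{eq:attention}) and the scaling $\mu(k)=\mu^{*}/(4k)$ from the discussion preceding the theorem, I would first note two facts. The active set $n$ does not depend on $k$, because all team eigenvalues are the task eigenvalues $\lambda_m$ scaled by the common factor $1/(4k)$. The pair $(n,\mu^{*})$ is pinned down by $\rho=\frac12\sum_{m\le n}\log(\lambda_m/\mu^{*})$ alone. Hence the organization's loss is
\[
L(k;\rho)=\frac{D(\rho)}{4k}+\frac{f(k)}{2}\Lambda,\qquad D(\rho):=n\mu^{*}+\sum_{m>n}\lambda_m,\quad \Lambda:=\sum_{m}\lambda_m ,
\]
where $D(\rho)$ is the classical reverse water-filling distortion of the task process at rate $\rho$.

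The first step is to establish the properties of $D$. At $\rho=0$ we have $D(0)=\Lambda$. On each interval of $\rho$ where $n$ is constant, $\mu^{*}=\bigl(\prod_{m\le n}\lambda_m\bigr)^{1/n}e^{-2\rho/n}$, so $dD/d\rho=-2\mu^{*}<0$. At the thresholds where a new mode enters, $\mu^{*}=\lambda_{n+1}$, so the two expressions for $D$ coincide and $D$ is continuous. Thus $D$ is continuous and strictly decreasing on $[0,\infty)$. It tends to $0$ as $\rho\to\infty$, because $\mu^{*}\to0$ forces $n\to\infty$, $n\mu^{*}\le\sum_{m\le n}\lambda_m\cdot(\mu^{*}/\lambda_n)\to$ bounded by the tail of a convergent series, and $\sum_{m>n}\lambda_m\to0$. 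Assumption \ref{assum:infinite} guarantees $D(\rho)>0$ for every finite $\rho$. Writing $r(\rho):=D(\rho)/\Lambda$, the function $r$ is a continuous strictly decreasing bijection from $[0,\infty)$ onto $(0,1]$.

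The second step is the optimization in $k$ for fixed $\rho$. On $k\ge1$, $D/(4k)$ is strictly convex and $f$ is convex, so $L(\cdot;\rho)$ is strictly convex with a unique minimizer. Convexity and $f'(1)>0$ give $f'(k)\ge f'(1)>0$, so $g(k):=2k^{2}f'(k)$ is continuous and strictly increasing, with $g(k)\to\infty$. The first-order condition $\partial_k L=0$ is equivalent to $g(k)=r(\rho)$. Therefore:
\begin{itemize}
\item if $r(\rho)\le g(1)=2f'(1)$, the optimum is the corner $k^{*}=1$;
\item otherwise $k^{*}(\rho)=g^{-1}(r(\rho))>1$, which is strictly increasing in $r$.
\end{itemize}

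Finally, since $2f'(1)\in(0,1)$ and $r$ is a continuous strictly decreasing bijection onto $(0,1]$, there is a unique $\bar\rho>0$ with $r(\bar\rho)=2f'(1)$. For $\rho<\bar\rho$ we have $r(\rho)>2f'(1)$, so $k^{*}(\rho)=g^{-1}(r(\rho))$ is strictly decreasing in $\rho$. For $\rho\ge\bar\rho$ we have $k^{*}=1$, which proves the theorem.

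I expect the main obstacle to be the first step, namely the continuity and strict monotonicity of $D$ across changes of the active set, together with $D\to0$. This requires careful use of the ordering of eigenvalues, the threshold identity $\mu^{*}=\lambda_{n+1}$, and the summability $\sum_m\lambda_m<\infty$. The remaining steps are routine convex analysis.
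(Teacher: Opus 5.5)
Your proposal is correct and follows essentially the paper's own argument: you write the loss as $D(\rho)/(4k)+f(k)\Lambda/2$ with $D=n\mu^{*}+\sum_{m>n}\lambda_m$ independent of $k$, show that $D$ is continuous, strictly decreasing and tends to $0$, and read off $\bar\rho$ from $D(\bar\rho)/\Lambda=2f'(1)$; your explicit inversion $k^{*}=g^{-1}(r(\rho))$ stands in for the paper's increasing-differences step. One sub-step needs repair: your bound $n\mu^{*}\le\sum_{m\le n}\lambda_m(\mu^{*}/\lambda_n)$ does not tend to zero (for geometric eigenvalues it stays between roughly $\tfrac12$ and $1$), so instead use $D(\rho)=\sum_m\min\{\lambda_m,\mu^{*}\}\le n^{*}\mu^{*}+\sum_{m>n^{*}}\lambda_m$ for each fixed $n^{*}$, as the paper does, or the fact that $n\mu^{*}\le n\lambda_n\to0$ for a nonincreasing summable sequence.
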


\noindent This result, that more attention capacity corresponds to lower team sizes, is consistent with the stated justification for the reorganization of Block highlighted in the introduction. It is also broadly consistent with evidence from \cite{bloom_garicano_sadun_vanreenen_2014_distinct} that information technologies can increase local autonomy by lowering the cost of acquiring and processing relevant information, if smaller teams are interpreted as preserving greater local autonomy. And indeed,  the conclusion of Theorem \ref{thm:optimalk} is straightforward once the simplifications outlined in Section \ref{sec:spectral} are made. As highlighted in Section \ref{subsec:noattention}, the organization's choice of team size essentially solves a simple optimization problem involving a tradeoff between the adaptation and coordination loss: Adaptation loss increases in team size, but coordination loss decreases. Attention has no impact on the costs of increased team size (adaptation), but it does influence the benefits. In particular, the \emph{marginal} benefit is lowered by a multiplicative constant, specifically: 

\begin{equation} 
\frac{n\mu^{*} +\sum_{i=n+1}^{\infty} \lambda_{i}}{\sum_{i=1}^{\infty} \lambda_{i}}.  \label{eq:factor}
\end{equation}

\noindent In particular, $\mu^{*}$ corresponds to the posterior variance of the ``task state process'' eigenvalues for which the corresponding $Z_{j}$ term in the Karhunen-Lo\'eve representation shows up in the organization's attention problem. Hence, it is larger than all $\lambda_{i}$ for $i > n$ and weakly less than all $\lambda_{i}$ for $i \leq n$, so that indeed this constant is less than 1. Furthermore, as $\rho \rightarrow \infty$, $\mu^{*} \rightarrow 0$ and $n \rightarrow \infty$. Thus, $\bar{\rho}$ is the value for $\rho$ such that the corresponding value for $\mu^{*}$ and $n$ satisfies: 

\begin{equation*} 
\frac{n\mu^{*} +\sum_{i=n+1}^{\infty} \lambda_{i}}{\sum_{i=1}^{\infty} \lambda_{i}} = 2f'(1).
\end{equation*}

To discuss the relationship between \emph{horizontal complexity} and team size more directly, I turn to the underlying stochastic process that defines $W_{0}(x)$. While it may be intuitive that the degree of horizontal complexity should be reflected in the covariance operator of the Gaussian process---since this object determines how correlation depends on location---it may not be immediately obvious how precisely to do so. To be clear, the correct notion may very well depend on application or context; my only goal is to illustrate that different predictions may emerge depending on the precise origin of the horizontal complexity of interest. In a survey of the literature on correlated learning using Gaussian process realization as the underlying state, \cite{BardhiCallander2026LearningCorrelatedWorld} associate this with the volatility of the process. While this form of complexity does \emph{not} influence team size, varying individual eigenvalues in the Karhunen-Lo\'eve representation can. 

To make this precise, I consider perturbations of the covariance operator that hold the eigenbasis fixed. Specifically, fixing a covariance operator $C_{0}(x,x')$ with eigenvalue-eigenfunction pairs $(\lambda_{i}, \phi_{i}(x))_{i=1}^{\infty}$, and fixing $i$, consider covariance operators of the form: \begin{equation*} 
C_{i,t}(x,x')=C_{0}(x,x')+t\phi_{i}(x)\phi_{i}(x'), \end{equation*} where $t$ is in a neighborhood of $0$ such that $\lambda_{i} + t \geq 0$. Notice that under $C_{i,t}$, the eigenvalue associated with $\phi_{i}$ is $\lambda_{i} + t$, while all other eigenvalues and the chosen eigenfunctions remain unchanged. Denote the optimal team size under $C_{i,t}$ by $k_{i}^{*}(t)$. 

\begin{theorem} \label{prop:compstat}
Suppose $f(k)$ is convex and fix $\rho\in(0,\bar{\rho})$,
where $\bar{\rho}$ is defined in Theorem \ref{thm:optimalk}.
For the perturbation $C_{i,t}$ defined above, there exists a cutoff
$\bar{\lambda}$ such that
\[
\left.\frac{d k_i^*(t)}{dt}\right|_{t=0}>0
\quad\text{if }\lambda_i<\bar{\lambda},
\]
and
\[
\left.\frac{d k_i^*(t)}{dt}\right|_{t=0}<0
\quad\text{if }\lambda_i>\bar{\lambda},
\]
with derivative zero if $\lambda_i=\bar{\lambda}$.
\end{theorem}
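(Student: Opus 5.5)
The plan is to reduce the choice of team size to a one-dimensional first-order condition in which the eigenvalues enter only through a single ratio, and then to sign how that ratio responds to $t$.

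\textbf{Step 1: reduce $k^*$ to a ratio.} Substituting $\mu(k)=\mu^*/(4k)$ into (\ref{eq:solvedobj}), the organization minimizes
\[
\frac{A}{4k}+\frac{f(k)}{2}T,\qquad A:=n\mu^*+\sum_{m>n}\lambda_m=\sum_m\min(\mu^*,\lambda_m),\qquad T:=\sum_m\lambda_m .
\]
Here $\mu^*$ solves $\rho=\tfrac12\sum_m\log^+(\lambda_m/\mu^*)$. The quantities $A$ and $T$ depend on $\rho$ and the eigenvalues but not on $k$. The first-order condition is
\[
2k^2f'(k)=R:=\frac{A}{T}.
\]
The function $g(k):=2k^2f'(k)$ is strictly increasing, since $f'>0$ (from $f'(1)>0$ and convexity) and $f'$ is nondecreasing. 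Hence the objective is single-crossing and the minimizer is unique. As in Theorem \ref{thm:optimalk}, $\rho<\bar\rho$ means $R>2f'(1)=g(1)$, so $k^*=g^{-1}(R)>1$ is interior. Therefore $k^*$ is a strictly increasing function of $R$, and the sign of $dk_i^*/dt$ equals the sign of $dR/dt$. If $f$ is twice differentiable, the implicit function theorem gives differentiability. Otherwise I would work with one-sided derivatives, or assume enough smoothness of $f$ for $g^{-1}$ to be differentiable, and note that the sign conclusion only uses strict monotonicity of $g$.

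\textbf{Step 2: differentiate $R$ in $t$.} Under $C_{i,t}$ only $\lambda_i$ moves, so $dT/dt=1$. Water-filling is insensitive to how the eigenvalues are ordered, so a reordering caused by the perturbation is harmless. There are two cases.
\begin{itemize}
\item If $\lambda_i<\mu^*$ (mode $i$ inactive), then $\mu^*$ and the active set are locally unchanged, so $dA/dt=1$.
\item If $\lambda_i>\mu^*$ (mode $i$ active), differentiating the constraint $\sum_{\text{active}}\log\lambda_m-n\log\mu^*=2\rho$ gives $d\mu^*/d\lambda_i=\mu^*/(n\lambda_i)$. Hence $dA/dt=n\,d\mu^*/d\lambda_i=\mu^*/\lambda_i$.
\end{itemize}
Both formulas give $1$ at $\lambda_i=\mu^*$. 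So $A'(\lambda_i):=dA/dt=\min(1,\mu^*/\lambda_i)$ is well defined and continuous, including at the kink where the active set changes.

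\textbf{Step 3: sign the derivative and find the cutoff.} We have
\[
\frac{dR}{dt}=\frac{A'-R}{T}.
\]
Because $\rho>0$, some mode has $\lambda_m>\mu^*$, so $A<T$ and $R<1$. Consequently, for $\lambda_i\le\mu^*$ we get $A'-R=1-R>0$. For $\lambda_i>\mu^*$, the term $\mu^*/\lambda_i$ is strictly decreasing in $\lambda_i$ and equals $R$ exactly at
\[
\bar\lambda=\frac{\mu^*}{R}=\frac{\mu^*T}{A}>\mu^*.
\]
This yields a positive derivative for $\lambda_i<\bar\lambda$, a negative one for $\lambda_i>\bar\lambda$, and zero at $\bar\lambda$, as the theorem claims. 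Note that $\bar\lambda$ is defined at $t=0$ from $(\mu^*,A,T)$, so it is a single cutoff common to all $i$.

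\textbf{Main obstacle.} The delicate step is Step 2: justifying that $A$, defined through an infinite sum and an implicitly defined water level, is differentiable in $\lambda_i$. Once it is established, Steps 1 and 3 are mechanical. I would use Proposition \ref{lem:RISol}'s finiteness of the active set: for $\rho<\infty$, $n$ is finite and every tail eigenvalue is strictly below $\mu^*$. For small $|t|$ the active set therefore does not change, except when $\lambda_i=\mu^*$ exactly, which is handled by the matching one-sided derivatives. The tail sum then shifts by exactly $t$ or is unaffected, and the implicit function theorem applies to the finite-dimensional constraint.
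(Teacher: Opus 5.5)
Your proposal is correct and is essentially the paper's argument. The paper computes the cross-partial $\partial^{2}\text{Obj}/\partial k\,\partial\lambda_i$ and uses the Euler identity from degree-one homogeneity in $\lambda$ to rewrite it as $(\min\{1,\mu^{*}/\lambda_i\}-R)/(4k^{2})$, which is your $dR/dt=(A'-R)/T$ up to a positive factor, with the same cutoff $\bar{\lambda}=\mu^{*}/R$ and the same matching one-sided derivatives at $\lambda_i=\mu^{*}$. Your direct reduction to $2k^{2}f'(k)=R$ replaces the Euler step, although the paper also uses that identity, together with Assumption~\ref{assum:infinite}, to show $\lambda_1>\bar{\lambda}$; the statement itself does not require this.
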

\noindent The case where $\lambda_i$ is exactly at this threshold is a knife-edge
case in which the first-order effect is zero. Since eigenvalues are decreasing
in the index, and since the proof shows that $\lambda_{1} > \bar{\lambda}$ always holds (while $\lambda_{i} \rightarrow 0$ implies all sufficiently late eigenvalues are below $\bar{\lambda}$), the result implies that the leading eigenvalue decreases team size when it grows, while sufficiently late eigenvalues increase team size when they grow. It is worth emphasizing that the conclusion that varying individual eigenvalues can influence team size requires $\rho$ itself to be intermediate: At the two degenerate extremes where the inattention problem does not matter, team size is independent of $\lambda_{i}$. When $\rho=\infty$, this holds because there is simply no reason to have nondegenerate teams. But as illustrated in Section \ref{subsec:noattention}, the process variance enters the organization's objective multiplicatively when $\rho=0$, so that individual eigenvalues do not influence team size there either. 

Theorem \ref{prop:compstat} holds because $\lambda_{i}$ influences team size differently depending on whether it is part of the attention problem. Showing this in general is somewhat involved, but it is straightforward to see in the case where only $Z_{1}$ receives any attention, in which case (\ref{eq:factor}) can be written: 

\begin{equation*}
\frac{e^{-2\rho}\lambda_{1} +\sum_{i=2}^{\infty} \lambda_{i}}{\sum_{i=1}^{\infty} \lambda_{i}}. 
\end{equation*}

\noindent The fact that $\lambda_{1}$ shows up in the attention problem ``dampens'' (but does not eliminate) its influence on coordination loss. Since the impact on adaptation is unchanged, this factor decreases. More generally, whether a given change in volatility generates an increase or decrease in team size depends on how that change influences the attention budget, and in particular whether it enters strongly enough into the attention problem. This ambiguity is consistent with the broader message of \cite{dessein_lo_minami_2022_coordination} that the organizational implications of volatility depend on how volatility interacts with coordination. In the present model, the relevant distinction emerges due to the spectral properties of $C(x,x')$: away from the degenerate cases, increasing $\lambda_{i}$ for sufficiently high $i$ raises team size, whereas such a change can lower it if $\lambda_{i}$ enters headquarters' attention problem. 

While Theorem \ref{prop:compstat} shows that changes in \emph{individual} eigenvalues can change team size in a direction that depends on the attention level, the same techniques yield a derivation for the necessary and sufficient conditions for team size to increase given a change in the state process:

\begin{prop} \label{prop:robustcomplexity}
Fix an orthonormal basis $(\phi_i)_{i=1}^{\infty}$, and let
$\Lambda:[0,\varepsilon] \rightarrow \ell^1_{+}$ be continuously
differentiable as a map into $\ell^1$, where
\[
\Lambda(t)=(\lambda_1(t),\lambda_2(t),\ldots)
\]
is nonzero and weakly decreasing in $i$ for every $t$. Suppose further that $C_{t}(x,x') := \sum_{i=1}^{\infty} \lambda_{i}(t) \phi_{i}(x) \phi_{i}(x')$ is a continuous covariance kernel satisfying the feasibility assumptions for every $t \in [0, \varepsilon]$. Assume further that there is a uniquely optimal interior team size at all $t \in [0,\varepsilon]$. Let $n$ denote the number of modes in headquarters' attention problem at $t=0$\footnote{Recall that, as shown previously, this parameter is independent of $k$.} and define $\frac{1}{4k}\mu^{*}(0)$ as the water level at $t=0$ when team size is $k$. Team size is weakly increasing in $t$ at $t=0$ if and only if: 

\begin{equation} 
\sum_{i=1}^{\infty} \lambda_{i}'(0) \left( \min \left \{1, \frac{\mu^{*}(0)}{\lambda_{i}(0)} \right \} - \frac{n\mu^{*}(0) +\sum_{m=n+1}^{\infty} \lambda_{m}(0)}{\sum_{m=1}^{\infty} \lambda_{m}(0)} \right)\geq 0, \label{eq:NSCondition}
\end{equation}

\noindent and weakly decreasing if the reverse inequality holds. 
\end{prop}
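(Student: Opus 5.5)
Using $\mu(k)=\mu^{*}/(4k)$, I will write the loss as a function of $(k,t)$ in separable form
\[
L(k,t)=\frac{1}{4k}A(t)+\frac{f(k)}{2}B(t),\qquad A(t)=n(t)\mu^{*}(t)+\sum_{m>n(t)}\lambda_m(t),\quad B(t)=\sum_{m}\lambda_m(t).
\]
Here $\mu^{*}(t)$ solves $\rho=\tfrac12\sum_{m\le n(t)}\log(\lambda_m(t)/\mu^{*}(t))$, which does not depend on $k$. The interior optimum satisfies the first-order condition $1/(2k^{2})=f'(k)\,B(t)/A(t)$. Since the left side is strictly decreasing in $k$ and the right side is weakly increasing (by convexity of $f$), the optimum depends on $t$ only through the ratio $R(t)=A(t)/B(t)$ and is increasing in $R$. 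This is exactly the factor (\ref{eq:factor}) from Theorem \ref{thm:optimalk}. Uniqueness and interiority give that $k^{*}$ moves weakly in the same direction as $R$. To handle the ``weakly'' without assuming $f''>0$, I will use a monotone comparative statics argument. The cross-difference $L(k',t)-L(k,t)$ for $k'>k$ equals $B(t)\big[(\tfrac1{4k'}-\tfrac1{4k})R(t)+\tfrac{f(k')-f(k)}{2}\big]$, which is decreasing in $R$. Hence the unique minimizer is nondecreasing whenever $R$ is nondecreasing, which gives the stated direction.

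It therefore suffices to compute the sign of $R'(0)$, which is the sign of $A'(0)B(0)-A(0)B'(0)$. First, $B'(0)=\sum_i\lambda_i'(0)$; term-by-term differentiation is justified because $\Lambda$ is $C^{1}$ into $\ell^{1}$ and summation is a bounded linear functional on $\ell^{1}$.

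For $A$, I use the envelope (value-function) structure of water-filling. Define $A(t)=\sum_m\min\{\lambda_m(t),\mu^{*}(t)\}$, which equals $n\mu^{*}+\sum_{m>n}\lambda_m$ when the water level separates head from tail. Differentiating at $t=0$ with the active set held fixed (generic case) gives
\[
A'(0)=n\,\mu^{*\prime}(0)+\sum_{m>n}\lambda_m'(0).
\]
Differentiating the constraint gives $\sum_{m\le n}\lambda_m'/\lambda_m = n\,\mu^{*\prime}/\mu^{*}$, so
\[
n\,\mu^{*\prime}(0)=\sum_{m\le n}\lambda_m'(0)\,\mu^{*}(0)/\lambda_m(0).
\]
Combining these, $A'(0)=\sum_i\lambda_i'(0)\min\{1,\mu^{*}(0)/\lambda_i(0)\}$. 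Dividing the numerator of $R'(0)$ by $B(0)^2>0$ then shows that $R'(0)\ge0$ if and only if $\sum_i\lambda_i'(0)\big(\min\{1,\mu^{*}/\lambda_i\}-R(0)\big)\ge0$, which is (\ref{eq:NSCondition}).

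I expect the main obstacle to be the boundary case where some $\lambda_i(0)=\mu^{*}(0)$, so that the active set can change with $t$. I plan to handle it by showing that $A(t)=\min_{\mu}\big[\sum_m\min\{\lambda_m,\mu\}\big]$-type expressions are continuous, and that the formula above is valid because at a mode with $\lambda_i=\mu^{*}$ both branches of the min have the same value. In that case the weight $\min\{1,\mu^{*}/\lambda_i\}=1$ agrees from either side. Since $\mu^{*}$ is a continuous, piecewise-smooth function of $\Lambda$ whose one-sided derivatives coincide there, $A$ is differentiable at $0$.

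A secondary technical point is exchanging derivative and sum for the infinite tail. This reduces to $\sum_{m>n}\lambda_m(t)=B(t)-\sum_{m\le n}\lambda_m(t)$, a finite modification of the $\ell^{1}$ sum. Finally, the reverse inequality gives $R'(0)\le0$ and hence, by the same argument, weakly decreasing team size.
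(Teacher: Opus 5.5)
Your proposal is correct and follows the paper's proof. Both reduce the question to the sign of $R'(0)$ for the factor $R=(n\mu^{*}+\sum_{m>n}\lambda_m)/\sum_m\lambda_m$, differentiate the attention constraint to obtain the weights $\min\{1,\mu^{*}/\lambda_i\}$, and handle the boundary case by noting that the active and inactive formulas coincide when $\lambda_i=\mu^{*}$.

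The differences are minor. You apply the quotient rule to $A/B$, whereas the paper first normalizes by total variance. You also use a decreasing-differences argument in place of the implicit-function step borrowed from Theorem~\ref{prop:compstat}, which spares you the convexity of $f$. For the ``only if'' direction you should add that interiority and the first-order condition $R=2k^{2}f'(k)$ make $k^{*}$ strictly, not just weakly, increasing in $R$.
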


\noindent While stronger than the condition provided in Theorem \ref{prop:compstat}, this conclusion does require solving for the value of $\mu^{*}$ and is hence potentially harder to interpret. At the same time, Proposition \ref{prop:robustcomplexity} facilitates the derivation of \emph{robust} comparative statics: i.e., conditions under which team size increases for \emph{every} attention level.

To do this, notice that the expression simplifies if $\sum_{m=1}^{\infty} \lambda_{m}$ is fixed. In that case, $\sum_{i=1}^{\infty} \lambda_{i}'(0)=0$ and (\ref{eq:NSCondition}) becomes: 

\begin{equation*} 
\sum_{i=1}^{\infty} \lambda_{i}'(0) \left( \min \left \{1, \frac{\mu^{*}}{\lambda_{i}(0)} \right \}\right)\geq 0.
\end{equation*}

\noindent A simple sufficient condition that ensures this condition automatically holds is that there exists $j^{*} \geq 1$ such that $\lambda_{i}'(0) < 0$ for all $i \leq j^{*}$ and $\lambda_{i}'(0) > 0$ for all $i > j^{*}$. Indeed, since $\min \left \{1, \frac{\mu^{*}}{\lambda_{i}(0)}  \right\}$ is weakly increasing in $i$, this condition gives us:

\begin{equation*} 
\sum_{i=1}^{\infty} \lambda_{i}'(0) \left( \min \left \{1, \frac{\mu^{*}}{\lambda_{i}(0)} \right \}\right) \geq \sum_{i=1}^{\infty} \lambda_{i}'(0) \left( \min \left \{1, \frac{\mu^{*}}{\lambda_{j^{*}}(0)} \right \}\right)=\left( \min \left \{1, \frac{\mu^{*}}{\lambda_{j^{*}}(0)} \right \}\right)\sum_{i=1}^{\infty} \lambda_{i}'(0)  =0,
\end{equation*}

\noindent since replacing $ \min \left \{1, \frac{\mu^{*}}{\lambda_{i}(0)} \right \}$ with $ \min \left \{1, \frac{\mu^{*}}{\lambda_{j^{*}}(0)} \right \}$ puts (weakly) more weight on negative terms and (weakly) less weight on positive terms.  \emph{Any} change in the eigenvalue sequence that decreases earlier eigenvalues but increases later eigenvalues, fixing total variance, will increase team size.

What do these comparative statics correspond to economically in terms of the comparative statics of how team size changes with horizontal complexity? One way to think about this is in terms of the \emph{jaggedness} of the sample-path realizations. I illustrate this idea in Figure \ref{fig:kl:three}. Notice that sample-path realizations appear to vary more moving across panels from left to right. These changes are generated by successively decreasing the \emph{eigenvalue decay rate}, using the stationary Ornstein-Uhlenbeck process in the middle panel.

\begin{figure}[t]
    \centering

    \begin{subfigure}[t]{0.32\textwidth}
        \centering
        \includegraphics[width=\linewidth]{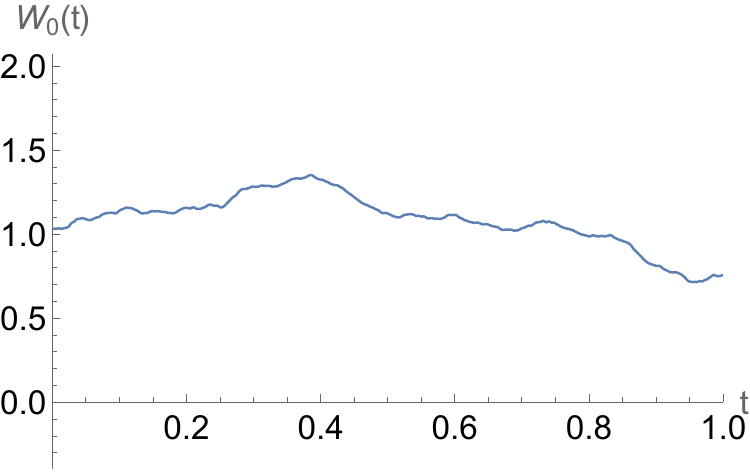}
        \caption{$\lambda_{k}$ decays like $ k^{-3}$}
        \label{fig:kl:1}
    \end{subfigure}\hfill
    \begin{subfigure}[t]{0.32\textwidth}
        \centering
        \includegraphics[width=\linewidth]{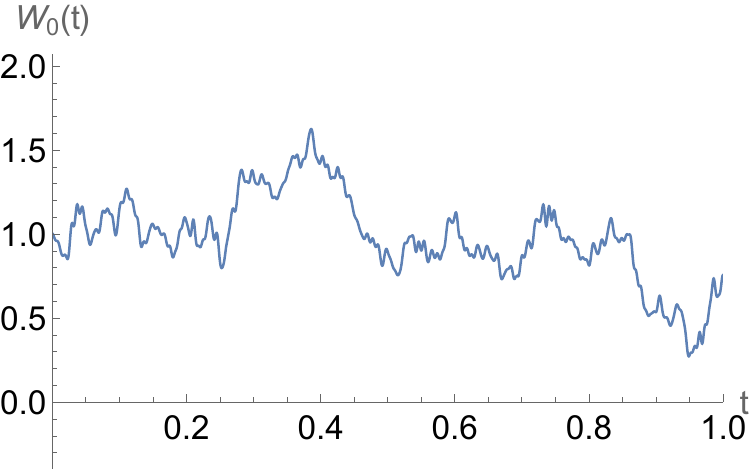}
        \caption{$\lambda_{k}$ decays like $k^{-2}$}
        \label{fig:kl:2}
    \end{subfigure}\hfill
    \begin{subfigure}[t]{0.32\textwidth}
        \centering
        \includegraphics[width=\linewidth]{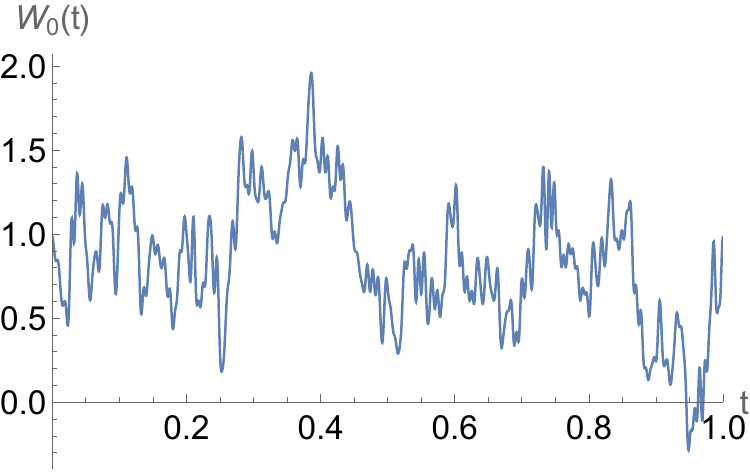}
        \caption{$\lambda_{k}$ decays like $k^{-3/2}$}
        \label{fig:kl:3}
    \end{subfigure}

    \caption{\footnotesize Illustration of the influence of eigenvalue decay on local complexity. In each case, eigenvalues are normalized to keep
$\int \mathbb{E}[W_0(x)^2]dx$ constant, and eigenfunctions corresponding to the stationary Ornstein-Uhlenbeck process are used. 
For illustrative purposes, each plot uses identical realizations for the 
$Z_k$ sequence, so differences in the sample paths are entirely due to the eigenvalues. Panel
(\ref{fig:kl:2}) uses the finite-interval Ornstein-Uhlenbeck eigenvalues
$\lambda_k=1/(1+\omega_k^2)$, where the frequencies $\omega_k$ solve the
Ornstein-Uhlenbeck boundary condition; these eigenvalues decay asymptotically at
rate $k^{-2}$. Panels (\ref{fig:kl:3}) and (\ref{fig:kl:1}) use normalized
powers of these eigenvalues, producing slower and faster decay, respectively. \normalsize }
    \label{fig:kl:three}
\end{figure}

The figure illustrates visually how a decrease in the eigenvalue decay rate makes the sample paths appear more jagged. Appendix \ref{app:jaggedness} provides conditions on the eigenbasis which allow this relationship to be stated formally. One way to quantify jaggedness is to consider the parameter:

\begin{equation} 
\alpha^{*}(x) := \sup \{ \alpha : \limsup_{h \rightarrow 0} \mathbb{E}[(W_{0}(x+h)-W_{0}(x))^{2}]/h^{\alpha} < \infty \}.\label{eq:complexitydef}
\end{equation}

\noindent To interpret this, notice that standard Brownian motion has $\mathbb{E}[(W_{0}(x+h)-W_{0}(x))^{2}] \propto h$, so that $\alpha^{*}(x)=1$. This also holds for the stationary Ornstein-Uhlenbeck process. A natural requirement for more jaggedness is that $\mathbb{E}[(W_{0}(x+h)-W_{0}(x))^{2}]$ approaches 0 \emph{more slowly} as $h \rightarrow 0$---so that the increments have larger variance even for locations that are even closer. Thus, $\alpha^{*}(x)$ being smaller suggests the sample-paths feature more jaggedness. Under a condition that roughly speaking requires higher-index eigenfunctions to oscillate more than lower-index eigenfunctions, $\alpha^{*}(x)$ is strictly increasing in $p$ provided it is less than 2; $\alpha^{*}(x)$ remains equal to 2 once this upper bound is reached, reflecting that this measure does not distinguish further increases in smoothness. The formal condition requires a technical detour that I defer to Appendix \ref{app:jaggedness}. 

The punchline is that changing the decay rate, at least within a parameterized family, gives one way of seeing how increasing jaggedness corresponds to larger teams. Specifically, decreasing the rate of eigenvalue decay while holding variance fixed will satisfy the hypothesis of Proposition \ref{prop:robustcomplexity} following the subsequent discussion, since the condition that $\lambda_{i}'(0)$ crosses 0 once will generally hold for such changes.\footnote{More concretely, this will always hold if $\lambda_{k}(t)=\gamma(t)/k^{s-t}$ where $\gamma(t)$ is such that $\sum_{k} \lambda_{k}(t)$ is fixed. Indeed, $\lambda_{k}'(t)/\lambda_{k}(t) = \log(k) + \frac{\gamma'(t)}{\gamma(t)}$ is strictly increasing in $k$. Since total variance is fixed, $\lambda_{1}'(t) < 0$, since otherwise monotonicity in $k$ would imply that the variance were increasing instead of constant. Since this expression is eventually positive as $k \rightarrow \infty$,  $\lambda_{k}'(t)$ switches signs exactly once.  } But under the additional eigenfunction conditions in Appendix \ref{app:jaggedness}, the same change will also weakly lower $\alpha^{*}(x)$, and strictly so before $\alpha^{*}(x)$ reaches its upper bound of 2. Thus, along such spectral changes, greater jaggedness is associated with weakly larger teams. 

This discussion helps illustrate why not all changes in volatility should have the same impact on team size. In some cases, the volatility of the process may increase, but do so in a way that actually makes nearby state realizations \emph{more correlated in a relative sense}---with the added variance concentrated on the dimension associated with the leading eigenvalue. In other cases, volatility may decrease, but do so in a way that makes nearby states more dissimilar, hence increasing jaggedness and potentially increasing team size. Of course, an increase in the decay rate that \emph{also} increased $\lambda_{1}$ by a significant enough margin could decrease team size. These caveats aside, this discussion underscores that predictions about how team size changes with horizontal complexity depend on the anatomy of such changes. While the formal conditions depend both on how the eigenvalues change and on properties of the associated eigenfunctions, the economic interpretation of these changes in terms of jaggedness is illustrated in Figure \ref{fig:kl:three} and could in principle be empirically measured using other variables to proxy for action correlation.

\section{Other Organizational Margins}

So far, I have outlined the basic relationship between environmental complexity and organizational design that emerges in the presence of attention constraints. The goal is to illustrate how horizontal complexity shows up in other aspects of organization design, since in practice, organizations have other margins for responding to complexity. Specifically, I show how the framework provided speaks to richer hierarchies (beyond team size), investments that either influence the task structure or complexity itself, and the tradeoff between division merger and separation. The role of these extensions is to show how the same spectral properties underlying the main analyses also help organize the analysis of these other margins. Doing so facilitates comparisons to other work in the organizational economics literature, both theoretical and empirical.

\subsection{Vertical Hierarchy}  \label{sec:hierarchies}

I now enrich the model to allow for more than 2 layers in the organizational hierarchy, and provide a formal sense in which the forces that push toward larger teams also favor more hierarchical layers. I do this keeping the baseline model fundamentally unchanged, in the sense that: (1) Headquarters seeks to coordinate at the ``top layer'' of the organization, while (2) Coordination loss depends on how well actions are tailored to the ``ground layer.''

To see how to naturally expand the model to allow for richer hierarchy, first consider the case where one additional layer is added. If ground layer workers are in teams of size $k_{1}$ and teams are in groups of size $k_{2}$, then the covariance operator becomes: 
$\frac{1}{4}C(k_{1}k_{2}x,k_{1}k_{2}x')$---the ``top layer'' process scales the ``middle layer'' process by $k_{2}$, with the middle layer scaling the ground layer by $k_{1}$ (as before). 

Why might richer hierarchies help? An immediate observation is that if $f(k_{1},k_{2})=g(k_{1}k_{2})$---i.e., $k_{1}$ and $k_{2}$ interact multiplicatively---then if $k_{1}$ is chosen optimally, $k_{2}=1$ is optimal: any value for the objective achieved by some choice of $k_{1}$ and $k_{2}$ can also be achieved by choosing $k_{1}$ alone. But consistent with the theme in the previous section, I have in mind that the coordination loss scales according to the inverse team size, whereas the adaptation loss scales linearly: what matters for adaptation is (i) the number of teammates per team and (ii) how many teams are grouped together. Here, I show that this feature also provides scope for more ``vertical'' organizations.

I start with the \emph{short-run} hierarchy design problem. Specifically, I assume that the organization has a ground layer whose size is fixed at some $k_{1}$. While I allow $k_{1}$ to be arbitrary, of particular interest is the case where $k_{1}$ is the optimal team size in the baseline model. On top of this ``lower layer,'' the organization can add a ``higher layer,'' with endogenously-chosen group size $k_{2}$. I call this the short-run problem because $k_{1}$ cannot adjust, so that all that can be done is group teams together---for instance, by introducing a ``hub'' which groups teams together, so that hubs would be chosen without adjusting the routines in the ground layer.  

I obtain a striking result in the case where $f(k_{1},k_{2})=1+\beta(k_{1}-1)+ \gamma (k_{2}-1)$, where $k_{1}, k_{2} \geq 1$. I take $\gamma \geq \beta$: 

\begin{prop} \label{thm:hierarchy}
Let $k_{1}$ be exogenously fixed, with the higher layer endogenously determined. Then provided $k_{2} > 1$, $k_{2} \propto \frac{1}{\sqrt{k_{1}}}$. Furthermore, if $k_{1}$ solves the one-layer problem and $\gamma=\beta$, then   $k_{1} \geq k_{2}$, with strict inequality if $k_{1} > 1$ (i.e., organizations are ``bottom-heavy''). 
\end{prop}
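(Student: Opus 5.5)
The plan is to reduce the two-layer problem to the same closed form used for the baseline, (\ref{eq:solvedobj})--(\ref{eq:attention}), with the single scaling factor $k$ replaced by the product $K:=k_1k_2$. Only the top-layer process enters headquarters' attention problem, and its covariance is $\frac14 C(Kx,Kx')$ on $[0,1/K]$.

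\textbf{Step 1: spectral reduction.} The change-of-variables computation from Section \ref{sec:spectral} shows that the eigenpairs of the top-layer process are $(\frac{\lambda_i}{4K},\sqrt{K}\phi_i(Kx))$. The argument preceding Theorem \ref{thm:optimalk} then applies verbatim with $k$ replaced by $K$. The water level is $\mu^*/(4K)$, where $\mu^*$ and the number $n$ of active modes are pinned down by $\rho$ and $(\lambda_i)$ alone, independently of $k_1$ and $k_2$.

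I expect this step to be the main (if modest) obstacle. It requires checking that Proposition \ref{lem:RISol} and the water-filling characterization carry over unchanged when the index rescaling is by a product of team sizes. Since only the product $K$ enters the top-layer kernel, this should be immediate.

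Writing $A:=n\mu^*+\sum_{m>n}\lambda_m$ and $T:=\sum_m\lambda_m$, the organization's loss becomes
\begin{equation*}
L(k_1,k_2)=\frac{A}{4k_1k_2}+\frac{1+\beta(k_1-1)+\gamma(k_2-1)}{2}\,T .
\end{equation*}

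\textbf{Step 2: short-run problem.} Fix $k_1$. The function $L$ is strictly convex in $k_2$ on $[1,\infty)$, so the optimum is interior exactly when the first-order condition has a root above $1$. That condition reads
\begin{equation*}
\frac{A}{4k_1k_2^2}=\frac{\gamma T}{2},
\end{equation*}
which gives $k_2=\sqrt{A/(2\gamma T)}\cdot k_1^{-1/2}$. Hence $k_2\propto 1/\sqrt{k_1}$ whenever $k_2>1$. Otherwise the corner $k_2=1$ obtains.

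\textbf{Step 3: comparison with the one-layer optimum.} The one-layer problem is $L(k_1,1)$. If its solution is interior, the first-order condition is
\begin{equation*}
\frac{A}{4k_1^2}=\frac{\beta T}{2},
\end{equation*}
so that $k_1^2=A/(2\beta T)$. Setting $\gamma=\beta$ in Step 2 gives
\begin{equation*}
k_2^2=\frac{A}{2\beta T\,k_1}=\frac{k_1^2}{k_1}=k_1 .
\end{equation*}
Therefore $k_2=\sqrt{k_1}$, which is at most $k_1$, with strict inequality when $k_1>1$.

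Two boundary cases remain.
\begin{itemize}
\item If the one-layer optimum is the corner $k_1=1$, then $A/(2\beta T)\le 1$. Step 2 then forces $k_2=1=k_1$, which is consistent with the statement.
\item If the short-run first-order root lies below $1$, then $k_2=1\le k_1$. The inequality is strict whenever $k_1>1$.
\end{itemize}
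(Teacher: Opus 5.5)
Your proposal is correct and follows essentially the same route as the paper. Like the paper, you substitute the water level $\mu^*/(4k_1k_2)$, solve the first-order condition in $k_2$ to get $k_2=\sqrt{A/(2\gamma T)}\,k_1^{-1/2}$, plug in the one-layer first-order condition for $k_1$, and treat the corner $k_1=1$ as the paper does; your observation that $\gamma=\beta$ gives exactly $k_2=\sqrt{k_1}$ is a slightly cleaner way to reach the comparison than the paper's fourth-root expression.
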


Call $k_{2}^{SR}$ the optimal higher-layer team size when ground-layer teams are fixed to be the optimal team size from the baseline model (i.e., with a single layer), and $k_{2}^{LR}$ the optimal higher-layer team size when ground-layer teams adjust. In other words, in the long run, the optimal team size may itself depend on the presence of the higher layer of the hierarchy. The following result shows how: 

\begin{prop} \label{prop:longrun}
If the first layer is endogenous, then adding a higher layer to the hierarchy decreases the size of the lower layer. Furthermore, $k_{2}^{SR} >1$ if and only if $k_{2}^{LR} > 1$. 
\end{prop}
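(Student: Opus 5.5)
The plan is to reduce the two-layer problem to an explicit function of $(k_{1},k_{2})$, using the scaling argument from Section \ref{sec:teamsize}. With ground-layer teams of size $k_{1}$ grouped into hubs of size $k_{2}$, the top-layer process has covariance $\frac14 C(Kx,Kx')$ on $[0,1/K]$, where $K=k_{1}k_{2}$. The change-of-variables computation preceding (\ref{eq:solvedobj}) then applies verbatim with $k$ replaced by $K$: the top-layer eigenvalues are $\lambda_{i}/(4K)$, the number $n$ of attended modes does not depend on $K$, and the water level is $\mu^{*}/(4K)$. Coordination loss is therefore $A/K$, where $A:=\frac14\big(n\mu^{*}+\sum_{m>n}\lambda_{m}\big)$ is a constant independent of $(k_{1},k_{2})$. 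Writing $\Lambda:=\sum_{m}\lambda_{m}$, the organization minimizes
\[
F(k_{1},k_{2})=\frac{A}{k_{1}k_{2}}+\frac{\Lambda}{2}\big(1+\beta(k_{1}-1)+\gamma(k_{2}-1)\big),\qquad k_{1},k_{2}\geq 1 .
\]
The same computation underlies Proposition \ref{thm:hierarchy}: the short-run first-order condition in $k_{2}$ gives $k_{2}^{2}=2A/(\gamma\Lambda k_{1})$.

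The first step is to record the one-layer optimum. Setting $k_{2}=1$, the optimal ground layer is $k_{1}^{0}=\max\{1,\sqrt{2A/(\beta\Lambda)}\}$. The second step establishes joint convexity of $F$ on $[1,\infty)^{2}$. The linear part is trivially convex. For $g(x,y)=1/(xy)$ the Hessian has diagonal entries $2/(x^{3}y)$ and $2/(xy^{3})$, off-diagonal entries $1/(x^{2}y^{2})$, and determinant $3/(x^{4}y^{4})>0$. So $g$ is strictly convex, which gives a unique long-run optimum $(k_{1}^{LR},k_{2}^{LR})$, and the KKT conditions are necessary and sufficient.

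For the first claim, fix the long-run $k_{2}^{LR}\geq 1$. Then $k_{1}^{LR}$ minimizes $A/(k_{1}k_{2}^{LR})+\beta\Lambda k_{1}/2$ over $k_{1}\geq1$, so
\[
k_{1}^{LR}=\max\Big\{1,\sqrt{2A/(\beta\Lambda k_{2}^{LR})}\Big\}\leq k_{1}^{0}.
\]
The inequality is strict whenever $k_{2}^{LR}>1$ and $k_{1}^{0}>1$. Hence adding the higher layer weakly decreases the lower layer, and strictly whenever the higher layer is actually used.

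For the equivalence, note that $k_{2}^{SR}>1$ if and only if $\partial_{k_{2}}F(k_{1}^{0},1)<0$, i.e.\ $A/k_{1}^{0}>\gamma\Lambda/2$. Suppose first that $k_{2}^{SR}=1$. Then $(k_{1}^{0},1)$ satisfies the KKT conditions in both coordinates: the $k_{1}$ condition holds by optimality of $k_{1}^{0}$ at $k_{2}=1$, and the $k_{2}$ condition holds by assumption. By convexity and uniqueness it is the long-run optimum, so $k_{2}^{LR}=1$. Conversely, suppose $k_{2}^{LR}=1$. By the formula above $k_{1}^{LR}=k_{1}^{0}$, and the KKT condition for $k_{2}$ at $(k_{1}^{0},1)$ is exactly $\partial_{k_{2}}F(k_{1}^{0},1)\geq0$, so $k_{2}^{SR}=1$.

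I expect the main obstacle to be the first step: justifying rigorously that the coordination constant $A$ is the same in the two-layer problem. This requires checking that the attention problem for the top-layer process depends on $K$ only through the common factor $1/(4K)$, so that $n$ and $\mu^{*}$ are invariant, exactly as argued for Theorem \ref{thm:optimalk}. After that, the remaining care is in handling the corner $k_{1}=1$, which the max formulas cover.
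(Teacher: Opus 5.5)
Your proof is correct, but it takes a different route from the paper's.

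\textbf{The paper's route.} Write $X:=2A/\Lambda$. The paper solves the two interior first-order conditions $k_1=\sqrt{X/(\beta k_2)}$ and $k_2=\sqrt{X/(\gamma k_1)}$ simultaneously. This gives $k_1/k_2=\gamma/\beta$ and the closed forms $k_1^{LR}=(\gamma X/\beta^{2})^{1/3}$ and $k_2^{LR}=(\beta X/\gamma^{2})^{1/3}$. The first claim is read off from $k_1^{LR}/k_1^{SR}=1/\sqrt{k_2^{LR}}$. For the equivalence, the paper notes that $k_2^{LR}>1$ exactly when $\beta X/\gamma^{2}>1$, which is the same condition as $k_2^{SR}=(\beta X/\gamma^{2})^{1/4}>1$ from Proposition~\ref{thm:hierarchy}. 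It invokes $\gamma\geq\beta$ so that $k_1\geq k_2$, which keeps $k_1$ interior whenever $k_2$ is.

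\textbf{Your route.} You prove strict joint convexity of $F$ and then use coordinatewise KKT conditions. These show that $(k_1^{0},1)$ is the unique long-run optimum exactly when $\partial_{k_2}F(k_1^{0},1)\geq 0$, which is also the short-run corner condition.

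\textbf{Comparison.} Your version handles the corners $k_1=1$ and $k_2=1$ explicitly, whereas the paper tacitly assumes interior solutions. It also delivers existence and uniqueness of the long-run optimum, and it never uses $\gamma\geq\beta$. Without that assumption the paper's cube-root formulas can return $k_1<1$, but your argument still goes through. The paper's version, in exchange, yields explicit formulas: the ratio $k_1/k_2=\gamma/\beta$ and the cube-root expressions. These drive the subsequent remark that the long-run shape of the organization does not respond to complexity, and the layer count in Proposition~\ref{prop:generallayers}.

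The step you flag as the main obstacle, invariance of $n$ and $\mu^{*}$ in $K=k_1k_2$, is exactly what the paper takes over from the single-layer scaling argument.
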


\noindent Taken together, these results identify some differences between the short-run and long-run outcomes of adding layers to the hierarchy. In the short run, adding a layer may yield gains of the same kind introduced previously. However, the result of adding this layer is to increase the efficiency, meaning that the lower layer need not be as large.  Thus, the lower layer scales down while the higher layer scales up.

Thus, the long-run versus short-run distinction does not influence \emph{whether} it is beneficial to add a layer to the hierarchy, although it does influence the shape of the organization. Not only this, but an immediate corollary of Propositions \ref{thm:hierarchy} and \ref{prop:longrun} is that the shape of the organization responds to complexity in the short run problem, but not the long run problem. In the long run, $k_{1}/k_{2}$ is determined entirely by $\beta$ and $\gamma$. In the short run problem when $k_{1}$ is the optimal one-layer team size, however, $k_{1}/k_{2}$ scales proportionally to $\left(\frac{n\mu^{*}+ \sum_{m=n+1}^{\infty} \lambda_{m}}{2\sum_{m=1}^{\infty} \lambda_{m}} \right)^{1/4}$. Recall that the ratio $\frac{n\mu^{*}+ \sum_{m=n+1}^{\infty} \lambda_{m}}{2\sum_{m=1}^{\infty} \lambda_{m}}$ also determines the size of the ground layer teams in the main model problem; thus, if teams are larger in the single-layer model, then more teams are grouped together with an additional layer. 

I now turn to the problem of allowing for even richer layers of hierarchy, focusing on the firm's long run solution, setting $k=\prod_{i=1}^{j} k_{i}$ in the covariance operator when there are $j$ layers. In this linear case above, there is always a positive gain to adding more layers of hierarchy: It is more efficient to increase group size at a higher layer starting at size 1 than at a lower layer. To speak to ``depth'' precisely, the following proposition assumes that if the firm adds a hierarchy layer, then it must involve $k_{i} \geq 1+\varepsilon$, for $\varepsilon > 0$; alternative microfoundations such as costs could be used to produce similar conclusions.

\begin{prop} \label{prop:generallayers}
When the number of hierarchical layers is endogenous, the optimal number of layers (including the ground layer) is: 

\begin{equation*} 
\max \left\{\left\lfloor \frac{\log\left(\beta/\gamma^{2} \right)+ \log \left(\frac{n \mu^{*} + \sum_{m=n+1}^{\infty} \lambda_{m}}{2\sum_{m=1}^{\infty} \lambda_{m}} \right)}{\log (1+\varepsilon)} -1 \right\rfloor, 1  \right \}. 
\end{equation*}
\end{prop}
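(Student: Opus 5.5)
The plan is to write the long-run objective with $j$ layers in closed form, using the reduction from Section \ref{sec:spectral}, and then solve the first-order conditions layer by layer. With $k=\prod_{i=1}^{j}k_i$ and $f=1+\beta(k_1-1)+\gamma\sum_{i=2}^{j}(k_i-1)$, the attention problem depends on $k$ only through the common scaling $1/(4k)$. This is the same argument as the observation that $\mu(k)=\mu^{*}/(4k)$, and it means the active set $n$ and $\mu^{*}$ do not depend on how $k$ is split across layers. Write $A:=n\mu^{*}+\sum_{m>n}\lambda_m$, $\Lambda:=\sum_{m}\lambda_m$ and $r:=A/(2\Lambda)$. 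The organization then minimizes
\[
\frac{A}{4\prod_{i}k_i}+\frac{\Lambda}{2}\Big[1+\beta(k_1-1)+\gamma\textstyle\sum_{i\ge 2}(k_i-1)\Big].
\]
For fixed $j$, this is a separable-in-logs problem, which is convex in $(\log k_1,\ldots,\log k_j)$.

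\textbf{Interior solution for fixed $j$.} Differentiating in $k_i$ gives $A/(4kk_i)=\gamma\Lambda/2$ for every $i\ge2$, and $A/(4kk_1)=\beta\Lambda/2$. Hence all higher layers share a common size $m$, and $k_1=(\gamma/\beta)m$, so that $k_1/k_i$ depends only on $\beta,\gamma$, consistent with the remark after Proposition \ref{prop:longrun}. Substituting $k=(\gamma/\beta)m^{j}$ into $kk_i=r/\gamma$ gives
\[
m^{j+1}=\frac{r\beta}{\gamma^{2}},\qquad\text{so}\qquad \log m=\frac{\log(\beta/\gamma^{2})+\log r}{j+1}.
\]
This is where the expression in the statement comes from.

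\textbf{Choosing $j$.} The minimum-size requirement says that a $j$-layer structure is admissible only if $m\ge 1+\varepsilon$, that is, $(j+1)\log(1+\varepsilon)\le \log(\beta/\gamma^2)+\log r$. Solving for the largest integer $j$ gives the floor expression. The $\max\{\cdot,1\}$ covers the case where even one extra layer is infeasible, so only the ground layer is used.

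Two monotonicity facts remain to be shown. First, whenever the $j$-layer interior solution is admissible, it does strictly better than every structure with fewer layers. The argument is the one sketched before the proposition. Because $\gamma$ enters additively per layer, the marginal adaptation cost of raising a new layer from size $1$ is $\gamma\Lambda/2$. The marginal coordination benefit at the old optimum is $A/(4k)=\beta\Lambda k_1/2\ge\gamma\Lambda/2$ whenever $k_1\ge\gamma/\beta$. Equivalently, one can evaluate the optimized value $V_j$ in closed form, as a function of $m$, and check that it decreases in $j$ as long as $m\ge1$. Second, once $m<1+\varepsilon$, adding the next layer does not help. In that case the constrained problem has $k_{j+1}=1+\varepsilon$ at a corner. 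I would show that this corner is dominated by dropping the layer, using convexity in logs together with the fact that the unconstrained optimum for $j+1$ layers lies below $1+\varepsilon$.

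The main obstacle is this last corner analysis. With $j+1$ layers, the constraint $k_{j+1}\ge 1+\varepsilon$ may bind while the other layers re-optimize. It is not immediate that the resulting constrained value exceeds the unconstrained $j$-layer value, rather than merely lying above the infeasible interior optimum. I would first try to reduce it to a one-dimensional comparison. Substitute the re-optimized $k_1,\ldots,k_j$ as functions of $k_{j+1}$ and show that the resulting value is increasing in $k_{j+1}$ beyond its interior optimum $m_{j+1}$. Since $m_{j+1}<1+\varepsilon$, any layer forced to have size at least $1+\varepsilon$ then lowers the payoff relative to $k_{j+1}=1$, which is the same as not adding the layer.
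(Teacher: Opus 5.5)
Your interior derivation is exactly the paper's proof. You obtain the common higher-layer size, $k_1/k_i=\gamma/\beta$, $m^{j+1}=\beta r/\gamma^2$, and then the floor. Your first monotonicity fact is also correct: the loss is strictly convex in $(\log k_1,\ldots,\log k_j)$, so an admissible interior critical point is the global minimizer over the positive orthant, and that orthant contains every fewer-layer structure padded with $k_i=1$. The paper's proof stops there. It treats an $\ell$-layer hierarchy as available exactly when the \emph{unconstrained} common size $(\beta r/\gamma^2)^{1/(\ell+1)}$ is at least $1+\varepsilon$, and it never examines corner solutions.

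Your corner step is where the plan fails, and it cannot be repaired because the claim is false. Already for $j=1$, let $V(y)$ be the loss minimized over $k_1$ given $k_2=y$. Then $V$ is convex in $\log y$ with minimizer $m_2$, so if $1<m_2<1+\varepsilon$, $V$ is \emph{decreasing} on $[1,m_2]$. Knowing that $V$ increases beyond $m_2$ gives $V(1+\varepsilon)\ge V(m_2)$, not $V(1+\varepsilon)\ge V(1)$. When $m_2$ is close to $1+\varepsilon$, continuity gives $V(1+\varepsilon)<V(1)$.

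Here is a concrete counterexample. Take $\gamma=\beta$, write $\hat r:=\beta r/\gamma^{2}=r/\beta$ with your $r=A/(2\Lambda)$, and set $c:=1+\varepsilon$. Subtracting $\Lambda/2$ and dividing by $\beta\Lambda/2$, the loss becomes $\hat r/\prod_i k_i+\sum_i(k_i-1)$.
\begin{itemize}
\item The best one-layer loss is $2\sqrt{\hat r}-1$, attained at $k_1=\sqrt{\hat r}$.
\item With $k_2=c$, the best loss is $2\sqrt{\hat r/c}+c-2$, attained at $k_1=\sqrt{\hat r/c}$.
\item The second is strictly smaller iff $\sqrt{\hat r}>\sqrt{c}\,(\sqrt{c}+1)/2$.
\item The formula allows a second layer only if $\hat r^{1/3}\ge c$, i.e.\ $\sqrt{\hat r}\ge c^{3/2}$, which is strictly more demanding for every $c>1$.
\end{itemize}
For instance, take $c=1.21$ and $\hat r=1.5625$, say with $\beta=\gamma=0.16$ and the factor (\ref{eq:factor}) equal to $1/2$, so the one-layer team is interior at $k_1=1.25$. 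The formula gives $L=\log 1.5625/\log 1.21\approx 2.34$ and hence one layer. But $2\sqrt{\hat r}-1=1.5$ exceeds $2(1.25/1.1)+1.21-2\approx 1.483$, with $k_1=1.25/1.1>1$, so two layers do strictly better.

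The stated formula therefore holds only under the reading the paper's proof implicitly adopts: a layer is added only if its unconstrained optimal size clears $1+\varepsilon$. Under that reading, your interior computation plus your first monotonicity fact already completes the argument, and no corner analysis is needed. Under the literal constraint $k_i\ge 1+\varepsilon$ with corner solutions permitted, the proposition is false, and no version of your last step can establish it.
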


\noindent Thus, while team size at a given layer shrinks as more layers are added, Proposition \ref{prop:generallayers} clarifies that the same statistic which determines how attention influences team size---i.e., (\ref{eq:factor})---also determines how attention influences the number of layers, with both team size and the number of layers increasing in this statistic. This mapping provides a way of interpreting findings on technological change and hierarchy depth, instead of or in addition to team size. For instance, \cite{babina_fedyk_he_hodson_2025_firm} finds that investments in artificial intelligence are associated with flatter organizational hierarchies, which in the language of the model is consistent with AI reducing the attention burden faced at higher organizational layers.\footnote{Such changes need not uniquely influence the attention channel: For instance, AI may also change the adaptation problem faced by workers. As an example, \cite{brynjolfsson_li_raymond_2025_generative} show that generative-AI assistance reduces customer requests to speak with a manager, suggesting a task-level change. I defer a description of how my model speaks to the gains from changing the state process in Section \ref{sec:endocomplex} and the adaptation problem in Section \ref{sec:otherdesign}.}

\subsection{Choosing the Complexity of Production} \label{sec:endocomplex}

In practice, organizations may have control over the kinds of problems they face, and hence the form of the relevant stochastic process. As an example, \cite{brynjolfsson_li_raymond_2025_generative} study a generative-AI assistant for customer-support agents, finding that the tool is most valuable for relatively rare problems and that AI assistance reduces customer requests to speak with a manager. Interpreted through the present framework, while such technologies do not necessarily change the \emph{primitive} distribution of tasks, they can reduce the \emph{residual} variation in the locally optimal actions that enters the coordination problem.  Here, I explore an extension in which the organization can choose this effective process and study how the coordination problem shapes that choice.

Specifically, I allow the organization, at a cost, to change the distribution of the Gaussian process $W_{0}(x)$ to be one that is relatively more favorable. To allow for a flexible formulation, I assume that this cost is proportional to the Kullback-Leibler Divergence between the ``baseline'' stochastic process and one that is faced, i.e., $\eta D(Q \| P)$, where $P$ is the initial process (with covariance operator $C(x,x')$, as before) and $Q$ is a mean-zero Gaussian process chosen by the organization. I call $Q$ feasible if its covariance kernel is continuous and positive definite and if the integral of the variance over all locations $x \in [0,1]$ is finite, as is assumed for $P$. 

The form of the KL-Divergence turns out to be particularly tractable. The first observation is that $Q$ must be absolutely continuous with respect to $P$, as otherwise the organization's objective would be $-\infty$ whereas setting $Q=P$ would lead to bounded losses. But any such $Q$ admits a representation where sample paths are $\sum_{j=1}^{\infty} \sqrt{\tilde{\lambda}_{j}} \tilde{\phi}_{j}(x) Z_{j}$, for some eigenvalues $(\tilde{\lambda}_{j})$ and eigenfunctions $(\tilde{\phi}_{j}(x))$. Under this formulation, there is a particular tractable representation for the KL-divergence, simplifying the solution to the organization's problem even further:

\begin{lemma} \label{lem:endoKL}
There exists an optimal feasible process whose Karhunen-Lo\'eve eigenfunctions satisfy $\tilde{\phi}_{k}(x)=\phi_{k}(x)$. For such a process with chosen eigenvalues $(\tilde{\lambda}_{j})_{j=1}^{\infty}$:
\begin{equation*} 
D(Q \| P) = \frac{1}{2} \sum_{j=1}^{\infty}  \left(\frac{\tilde{\lambda}_{j}}{\lambda_{j}}-1- \log \left( \frac{\tilde{\lambda}_{j}}{\lambda_{j}} \right) \right)
\end{equation*}

\end{lemma}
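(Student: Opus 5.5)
The plan has two parts. First, I would establish the displayed KL formula for processes that share the eigenbasis $(\phi_j)$ with $P$. Second, I would show that any feasible $Q$ can be replaced, at no loss in the objective and no increase in cost, by such a co-diagonal process. Existence of an optimum within the co-diagonal family then finishes the argument.

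\textbf{Step 1: the KL formula.} For the formula, I would use the coordinate map $G \mapsto (\xi_j)_{j\geq 1}$ with $\xi_j=\int_0^1 G(x)\phi_j(x)\,dx$. By the Karhunen-Lo\'eve theorem this map is measurable and injective up to null sets: the path is recovered in $L^2$ as $\sum_j \xi_j\phi_j$. Under $P$ the coordinates are independent with $\xi_j \sim \mathcal{N}(0,\lambda_j)$, and under a co-diagonal $Q$ they are independent with $\xi_j\sim\mathcal{N}(0,\tilde\lambda_j)$.

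Applying data processing in both directions, the KL divergence between the processes equals the KL divergence between these product measures. That divergence is the limit of the divergence on the first $N$ coordinates, by monotone convergence of KL along an increasing filtration that generates the full $\sigma$-field. On finitely many coordinates, the chain rule for independent components gives $\sum_{j\le N}\frac12\left(r_j-1-\log r_j\right)$ with $r_j=\tilde\lambda_j/\lambda_j$. Letting $N\to\infty$ yields the stated series, with the convention that $D=\infty$ when the series diverges; this is the Kakutani/Feldman--H\'ajek dichotomy.

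\textbf{Step 2: reduction to co-diagonal processes.} The organization's objective, apart from the cost $\eta D(Q\|P)$, depends on $Q$ only through the spectrum of its covariance operator. Adaptation loss is the trace, and coordination loss is given by the water-filling expressions (\ref{eq:solvedobj})--(\ref{eq:attention}), which involve only eigenvalues. So it suffices to show the following: for a given spectrum $\tilde\lambda_1\ge\tilde\lambda_2\ge\cdots$, the cost $D(Q\|P)$ is minimized by placing $\tilde\lambda_j$ on $\phi_j$.

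For Gaussian $Q\ll P$ with covariance operators $C_Q, C_P$, write $A=C_P^{-1/2}C_QC_P^{-1/2}$; Feldman--H\'ajek guarantees that $A-I$ is Hilbert--Schmidt. I would then work through finite-dimensional projections onto $\mathrm{span}(\phi_1,\dots,\phi_N)$, where
\[
D_N=\tfrac12\left(\mathrm{tr}(\Sigma_{P,N}^{-1}\Sigma_{Q,N})-N-\log\det(\Sigma_{P,N}^{-1}\Sigma_{Q,N})\right).
\]
The log-determinant term depends only on the spectra. For the trace term, von Neumann's (Ruhe's) trace inequality gives $\mathrm{tr}(\Sigma_{P,N}^{-1}\Sigma_{Q,N})\ge\sum_{j\le N} \tilde\lambda^{(N)}_j/\lambda_j$, where the eigenvalues of $\Sigma_{Q,N}$ are sorted decreasingly and paired with the increasing eigenvalues $1/\lambda_j$ of $\Sigma_{P,N}^{-1}$. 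Equality holds exactly when the eigenvectors are aligned in order. Hence aligning weakly lowers each $D_N$.

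\textbf{Step 3: existence.} With the reduction in hand, the problem becomes a choice of $(\tilde\lambda_j)\in\ell^1_+$ with a separable, convex, coercive cost. Each summand $r-1-\log r$ blows up as $r\to 0$ or $r\to\infty$, and the summands force $\tilde\lambda_j/\lambda_j\to 1$ in $\ell^2$. I would take a minimizing sequence, extract a coordinatewise convergent subsequence, and conclude using lower semicontinuity of the cost (Fatou) and continuity of the loss in the spectrum. For the latter I would use dominated convergence in $\ell^1$, which is available because bounded cost controls $\tilde\lambda_j\le c\lambda_j$ uniformly.

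\textbf{Main obstacle.} I expect the hardest step to be passing to the limit in Step 2. The eigenvalues $\tilde\lambda_j^{(N)}$ of the compressed operator $\Sigma_{Q,N}$ are not those of $C_Q$. Two facts should make the limit work. First, by Cauchy interlacing, $\tilde\lambda^{(N)}_j\uparrow\tilde\lambda_j$ for each fixed $j$. Second, $D_N\uparrow D(Q\|P)$. Together these give $D(Q\|P)\ge\liminf_N D_N^{\text{aligned}}\ge D(Q'\|P)$, where $Q'$ is the co-diagonal process with spectrum $(\tilde\lambda_j)$. Justifying the last inequality requires lower semicontinuity of the separable series. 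If this route becomes messy, I would fall back on the infinite-dimensional form of the inequality, $\mathrm{tr}(C_P^{-1/2}C_QC_P^{-1/2}-I)\ge\sum_j(\tilde\lambda_j/\lambda_j-1)$, applied to trace-class perturbations of the identity.
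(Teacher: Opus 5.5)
Your proposal is correct. It shares the paper's skeleton: project onto $\mathrm{span}(\phi_1,\ldots,\phi_N)$, use data processing, observe that the log-determinant depends only on the spectra, apply Ruhe's trace inequality, and use convergence of the compressed eigenvalues $\tilde\lambda_j^{(N)}$. Where you differ is the passage $N\to\infty$.

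The paper first builds auxiliary co-diagonal processes $Q_n^{*}$. These carry the aligned eigenvalues $\tilde\lambda_i^{(n)}$ on the first $n$ coordinates and $P$'s eigenvalues $\lambda_i$ on the tail. It shows $Q_n^{*}\Rightarrow Q^{*}$ via trace-norm convergence of the covariance operators (Kukush, Theorem~5.8) and then invokes weak lower semicontinuity of relative entropy on path space. Only after that does it derive the series formula, by monotone convergence along coordinate $\sigma$-fields.

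You reverse the order. You prove the series formula first, and then, with $g(r)=r-1-\log r\ge 0$, obtain
\[
D(Q\|P)\ \ge\ \liminf_N \tfrac12\sum_{j\le N} g\bigl(\tilde\lambda_j^{(N)}/\lambda_j\bigr)\ \ge\ \tfrac12\sum_j g\bigl(\tilde\lambda_j/\lambda_j\bigr)\ =\ D(Q'\|P)
\]
by Fatou on a nonnegative series. This is more elementary: it needs no weak convergence of Gaussian measures and no lower semicontinuity of KL in the weak topology. Your Step~3 also proves existence of an optimum. The paper never does this; it only shows that any candidate can be weakly improved by a co-diagonal one.

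Two points need tightening.

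First, Cauchy interlacing alone gives only $\tilde\lambda_j^{(N)}\le\tilde\lambda_j^{(N+1)}\le\tilde\lambda_j$, not convergence. To get convergence, use that $Q\ll P$ forces $Q$, and hence the range of $C_Q$, onto the closed span of the $\phi_j$. There $\Pi_N\to I$ strongly, so $\Pi_N C_Q\Pi_N\to C_Q$ in norm by compactness. Without this step, Fatou could land on a smaller spectrum than that of $Q$, and the loss would no longer be preserved.

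Second, you never check that the replacement $Q'$ is feasible. The paper does this explicitly, and it is part of the statement.
\begin{itemize}
\item Finite cost forces a uniform bound $\tilde\lambda_j\le M\lambda_j$. Together with uniform convergence of Mercer's series for $C$, this makes $\sum_j\tilde\lambda_j\phi_j(x)\phi_j(x')$ continuous.
\item Finite cost also forces $\tilde\lambda_j>0$ for every $j$. Combined with positive definiteness of $C$, this makes the new kernel positive definite.
\end{itemize}
Your existence argument needs the same verification for the limiting spectrum.
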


\noindent In words, the choice of $Q$ involves the same eigenfunctions as the original process, but with each \emph{eigenvalue} modified to be more favorable to the organization. Using this result, I can solve for the organization's optimal choice of process: 

\begin{proposition}  \label{thm:endocomplex}
Suppose $k$ is fixed. There exists $\bar{\lambda}$ such that: 

\begin{equation*} 
\tilde{\lambda}_{i} = \frac{\eta-2\mu}{\frac{\eta}{\lambda_{i}} + f(k)}  \text{ when } \lambda_{i} > \bar{\lambda}, \end{equation*} 
and: 
\begin{equation*}
\tilde{\lambda}_{i} = \frac{1}{\frac{1}{\lambda_{i}} + \frac{2}{\eta} \left(\frac{1}{4k} + \frac{f(k)}{2} \right)}  \text{ when } \lambda_{i} < \bar{\lambda}.
\end{equation*}

\noindent On the other hand, if $k$ is chosen optimally, then $k$ satisfies: 

\begin{equation} 
\frac{\mu n}{k}+\frac{1}{4k^{2}} \sum_{i=n+1}^{\infty} \tilde{\lambda}_{i} -\frac{f'(k)}{2} \sum_{i=1}^{\infty} \tilde{\lambda}_{i} =0, \label{eq:endocompk}\end{equation}

\noindent where $\mu$ solves:

\begin{equation*} \rho = \frac{1}{2} \sum_{m=1}^{n} \log \frac{\frac{1}{4k}  \tilde{\lambda}_{m}}{\mu}.\end{equation*}
\end{proposition}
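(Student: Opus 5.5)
The plan is to use Lemma \ref{lem:endoKL} to turn the choice of $Q$ into a choice of an eigenvalue sequence $(\tilde{\lambda}_{j})$ on the fixed eigenbasis $(\phi_{j})$. Then I would combine the closed-form coordination loss from Section \ref{sec:spectral} with the KL cost and take first-order conditions mode by mode. With $k$ fixed, the organization minimizes
\[
L(\tilde{\lambda}) \;=\; n\mu \;+\; \frac{1}{4k}\sum_{m>n}\tilde{\lambda}_{m} \;+\; \frac{f(k)}{2}\sum_{m}\tilde{\lambda}_{m} \;+\; \frac{\eta}{2}\sum_{j}\left(\frac{\tilde{\lambda}_{j}}{\lambda_{j}}-1-\log\frac{\tilde{\lambda}_{j}}{\lambda_{j}}\right).
\]
Here the active set and the water level $\mu$ are those of the water-filling solution from Proposition \ref{lem:RISol} applied to the team process with eigenvalues $\tilde{\lambda}_{m}/(4k)$.

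The key computation is the marginal effect of a head eigenvalue on the coordination loss. From (\ref{eq:attention}), $\mu = e^{-2\rho/n}\big(\prod_{m\le n}\tfrac{\tilde{\lambda}_{m}}{4k}\big)^{1/n}$, so $\partial(n\mu)/\partial\tilde{\lambda}_{i} = \mu/\tilde{\lambda}_{i}$ for $i\le n$. This derivative is continuous across the boundary of the active set: a mode entering at $\tilde{\lambda}_{i}/(4k)=\mu$ has marginal coordination cost $1/(4k)$ on both sides. Hence the coordination loss is $C^{1}$ in each coordinate, and the first-order conditions are legitimate.

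For an active mode, the first-order condition is
\[
\frac{\mu}{\tilde{\lambda}_{i}}+\frac{f(k)}{2}+\frac{\eta}{2}\left(\frac{1}{\lambda_{i}}-\frac{1}{\tilde{\lambda}_{i}}\right)=0,
\]
which gives $\tilde{\lambda}_{i}=(\eta-2\mu)/(\eta/\lambda_{i}+f(k))$. For an inactive mode, $\mu/\tilde{\lambda}_{i}$ is replaced by $1/(4k)$, which yields the second formula. Both maps $\lambda_{i}\mapsto\tilde{\lambda}_{i}$ are strictly increasing, so the ordering of the modes is preserved. Since the active set consists of the largest team eigenvalues, it is a set of top indices, and it is characterized by a threshold $\bar{\lambda}$ on the original $\lambda_{i}$. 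One can define $\bar{\lambda}$ as the $\lambda$ at which the tail formula gives $\tilde{\lambda}/(4k)=\mu$, where the two formulas coincide. I would also check that $\eta>2\mu$ at the optimum, which follows because otherwise the head first-order condition has no positive solution.

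For the optimal-$k$ characterization, I would invoke the envelope theorem. The KL term does not depend on $k$, so differentiating $L$ in $k$ at fixed $\tilde{\lambda}$ suffices. As in Section \ref{sec:teamsize}, $\mu(k)=\mu^{*}/(4k)$ with $\mu^{*}$ and $n$ independent of $k$. Therefore $\partial(n\mu)/\partial k=-n\mu/k$, the tail term contributes $-\tfrac{1}{4k^{2}}\sum_{i>n}\tilde{\lambda}_{i}$, and adaptation contributes $\tfrac{f'(k)}{2}\sum_{i}\tilde{\lambda}_{i}$. Setting the sum to zero gives (\ref{eq:endocompk}), with $\mu$ pinned down by the stated constraint.

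The main obstacle is rigor in infinite dimensions. I must show that an optimizer exists, for instance by coercivity of the KL term, which forces $\tilde{\lambda}_{j}/\lambda_{j}\to1$ in a summable sense, together with lower semicontinuity in $\ell^{1}$. I must also show that coordinatewise first-order conditions characterize it, and that the chosen sequence is feasible, meaning its kernel is continuous with finite trace. The feasibility part should follow from $\tilde{\lambda}_{i}\le\lambda_{i}$ in both formulas. The first-order conditions are the delicate step given the kink structure of the active set. I would first try to argue that on each fixed active set the objective is convex in $\log\tilde{\lambda}$: $n\mu$ is a geometric mean, hence log-linear, and the KL term is convex in $\log\tilde{\lambda}$. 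I would then use the $C^{1}$ matching at the boundary to conclude that stationarity is sufficient.
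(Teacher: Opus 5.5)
Your proposal is correct, and its main line is the paper's. It uses the same marginal effect $\partial(n\mu)/\partial\tilde{\lambda}_i=\mu/\tilde{\lambda}_i$, the same two first-order conditions, and the same cutoff: both formulas equal $4k\mu$ at a common $\bar{\lambda}$, which is the fact that makes your ordering argument work. It also gives the same condition for $k$; your direct differentiation using $\mu=\mu^{*}/(4k)$ at fixed $\tilde{\lambda}$ replaces the paper's Lagrangian with multiplier $\gamma$ and the identification $\gamma/2=\mu$.

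Where you differ is in the verification. The paper does not argue that an optimizer exists. It proves $\eta>2\mu$ by continuity in $\rho$: as $\rho\to 0$ the water level is $\tilde{\lambda}_1/(4k)$, and $z=\eta-2\mu$ satisfies $z'(\rho)\propto z(\rho)$, so $z$ cannot reach zero. It then needs $\eta>2\mu$ to make the Hessian in $\tilde{\lambda}$-coordinates negative definite on a fixed active set.

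You proceed differently in three ways:
\begin{itemize}
\item You obtain existence from coercivity of the KL term.
\item You read $\eta>2\mu$ directly off an active first-order condition at the optimum. Its left side $\tilde{\lambda}_i(\eta/\lambda_i+f(k))$ is positive, and $n\geq 1$ because $\rho>0$.
\item You use convexity in $\log\tilde{\lambda}$. This needs no sign condition, and since the KL term is strictly convex in each $\log\tilde{\lambda}_j$, it also gives uniqueness.
\end{itemize}

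Your gluing across active sets is sound. On each fixed active set, $\log\mu$ is affine in $\log\tilde{\lambda}$, so these regions are polyhedra in log coordinates. A $C^1$ function that is convex on each piece is then convex along every line. Equivalently, the water-filling value is the value of a program jointly convex in $(\log\nu,\log d)$, hence convex in $\log\nu$.

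Two remarks on your route. First, once existence is in hand, the necessity of the coordinatewise conditions already delivers the statement, so convexity is only needed for uniqueness. If you want sufficiency in infinite dimensions, a truncation step closes it: compare with sequences that agree with the candidate beyond index $N$, then let $N\to\infty$. Second, $\tilde{\lambda}_i\leq\lambda_i$ gives a continuous kernel with finite trace but not finite KL. In your route finite KL is automatic from optimality (compare with $Q=P$), whereas the paper checks it by hand.
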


\noindent Proposition \ref{thm:endocomplex} shows how the attention problem interacts with the choice of complexity. First, higher values of $\lambda_{i}$ translate into higher values of $\tilde{\lambda}_{i}$, as expected.\footnote{The proof shows that $\eta-2\mu > 0$ must hold in the solution, so indeed $\tilde{\lambda}_{i} > 0$ when $\lambda_{i} > \bar{\lambda}$.} On the other hand, for a fixed baseline process with eigenvalues $(\lambda_{i})_{i=1}^{\infty}$, an increase in the attention budget either has no impact (for $\lambda_{i} < \bar{\lambda}$) or leads the organization to lower the corresponding eigenvalue $\tilde{\lambda}_{i}$ by less than otherwise. Thus, attention substitutes for investment in this formulation. 

Proposition \ref{thm:endocomplex} also illustrates that investment has the impact of \emph{flattening} the eigenvalue sequence: For any pair of eigenvalues in the head with $\lambda_{i+1} < \lambda_{i}$, $\lambda_{i+1}/\lambda_{i} < \tilde{\lambda}_{i+1}/\tilde{\lambda}_{i}$, and the same holds for any pair of eigenvalues in the tail. We also see that $\tilde{\lambda}_{i}$ is, perhaps surprisingly, bounded above even as $\lambda_{i} \rightarrow \infty$. Thus, large differences for $\lambda_{i}$ in the head translate into much smaller differences when endogenously chosen. The impact on the tail is less dramatic---an observation which, combined with the discussion in Section \ref{sec:teamsize}, suggests that the impact of a constant-factor volatility increase should be concentrated in the tail and hence lead to hierarchy expansion. The next result shows that this is indeed the case.

\setcounter{theorem}{3}

\begin{subtheorem}  \label{prop:gamma}
Suppose the baseline process $P$ is a mean-zero Gaussian process with covariance operator $\gamma C_{0}(x,x')$,  where $\gamma > 0$ and $C_{0}(x,x')$ is fixed with eigenvalues $\lambda_{1}, \lambda_{2}, \ldots$. Then the optimal team size as a function of $\gamma$, $k^{*}(\gamma)$, is increasing in $\gamma$, converging to the zero-attention capacity solution as $\gamma \rightarrow \infty$. Suppose, in addition, there is some $\underline{c}>0$ such that $\lambda_{i+1} \geq \underline{c}\lambda_{i}$, and let $n(\gamma)$ denote the number of variables that enter into the attention problem. Then $\lim_{ \gamma \rightarrow \infty} \tilde{\lambda}_{1}(\gamma)= \lim_{\gamma \rightarrow \infty} \tilde{\lambda}_{n(\gamma)+1}$, where $\tilde{\lambda}_{i}(\gamma)$ denote the eigenvalues of the optimally chosen process $Q$. 
\end{subtheorem}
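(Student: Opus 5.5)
The plan is to work directly from the closed forms in Proposition \ref{thm:endocomplex}, with baseline eigenvalues $\gamma\lambda_i$ in place of $\lambda_i$. Write $\mu=\mu(\gamma)$ for the water level, and let $n(\gamma)$ and $k=k^*(\gamma)$ be the associated quantities.

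\textbf{Step 1: limits of the chosen eigenvalues.} For a head mode,
\[
\tilde\lambda_i=\frac{\eta-2\mu}{\eta/(\gamma\lambda_i)+f(k)}\;\le\;\frac{\eta-2\mu}{f(k)}\;\le\;\frac{\eta}{f(k)}.
\]
For a tail mode,
\[
\tilde\lambda_i=\Bigl(\frac{1}{\gamma\lambda_i}+\frac{2}{\eta}\Bigl(\frac{1}{4k}+\frac{f(k)}{2}\Bigr)\Bigr)^{-1}\;\le\;\frac{\eta}{f(k)+1/(2k)}.
\]
Both bounds are uniform in $\gamma$, and for every fixed $i$ each expression tends to its bound as $\gamma\to\infty$. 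Since there are infinitely many modes (Assumption \ref{assum:infinite}), Fatou's lemma gives $\sum_i\tilde\lambda_i(\gamma)\to\infty$.

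\textbf{Step 2: $k^*$ increases and tends to the $\rho=0$ solution.} The first-order condition (\ref{eq:endocompk}), after substituting $\mu=\mu^*/(4k)$, has the form
\[
2k^2f'(k)=R(\gamma),\qquad R(\gamma)=\frac{n\mu^*+\sum_{i>n}\tilde\lambda_i}{\sum_i\tilde\lambda_i},
\]
which is the same form as in Theorem \ref{thm:optimalk}. I would treat $k$ as determined by this ratio and argue the following.

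\begin{enumerate}[label=(\roman*)]
\item The head contributes a bounded amount to the denominator, since $n\mu^*\le\sum_{i\le n}\tilde\lambda_i$ and each head term is bounded.
\item The tail sum diverges by Step 1.
\end{enumerate}

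These two facts give $R(\gamma)\to 1$, so $k^*(\gamma)$ tends to the zero-attention solution $1=2k^2f'(k)$. For monotonicity I would apply Proposition \ref{prop:robustcomplexity} to the path $\gamma\mapsto(\tilde\lambda_i(\gamma))$. By the closed forms, $d\log\tilde\lambda_i/d\log\gamma$ is smaller for head modes, which are closer to saturation, and is increasing in $i$. That suggests a single-crossing argument of the kind given after Proposition \ref{prop:robustcomplexity}, with weights $\min\{1,\mu^*/\tilde\lambda_i\}$.

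\textbf{Main obstacle.} The delicate point is Step 2 if $n(\gamma)$ grows. In that case the head sum is not obviously bounded, and the envelope argument must also absorb the feedback from $k$ into $\tilde\lambda_i$. I would handle this by the implicit function theorem applied to the joint system: the first-order condition, the attention constraint, and the closed forms. For the divergence claim, I would first fall back on $n\mu^*\ge 0$ together with the tail dominating.

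\textbf{Step 3: $\lim\tilde\lambda_1=\lim\tilde\lambda_{n(\gamma)+1}$ under the ratio condition.} First show $n(\gamma)\to\infty$. Because $\lambda_{i+1}\ge\underline c\lambda_i$, the number of indices with $\gamma\lambda_i\ge M$ grows like $\log\gamma/\log(1/\underline c)$, and for all of these both closed forms are within $O(1/M)$ of their limits. If $n(\gamma)$ stayed bounded, then by the water-filling ordering $\tilde\lambda_{n+1}\le 4k\mu\le\tilde\lambda_n$. The saturated modes just beyond the head would then force the head modes' values $\tilde\lambda_m$ to approach $4k\mu$, so that each term of the sum
\[
\rho=\tfrac12\sum_{m\le n}\log\bigl(\tilde\lambda_m/(4k\mu)\bigr)
\]
goes to $0$. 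This contradicts $\rho>0$. Hence $n\to\infty$, while $\rho$ remains a sum of $n$ nonnegative terms. It follows that $\tilde\lambda_n/(4k\mu)\to 1$, using monotonicity in $m$ and the saturated-mode approximation. Combined with $\tilde\lambda_1-\tilde\lambda_n\to 0$ (both saturated) and $\tilde\lambda_{n+1}\le 4k\mu\le\tilde\lambda_n$, the squeeze yields the claim. Passing to subsequences along which $k^*(\gamma)$ converges, using Step 2, makes the limits well defined.
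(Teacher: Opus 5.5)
Your Step~1 and your argument that $n(\gamma)\to\infty$ are essentially sound. The two steps that carry the theorem do not go through, however.

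\textbf{The limit in Step 2.} Since $n(\gamma)\to\infty$ (your Step 3 argues this, and it needs only Assumption~\ref{assum:infinite}), your claim (i) that the head contributes a bounded amount is false. Your fallback, that the tail dominates, also fails in general: for geometric $\lambda_i=\underline{c}^{\,i}$ the head sum grows like $n(\gamma)$ while the tail sum grows only like $\log n(\gamma)$. The missing idea is that the attention constraint bounds the head's \emph{excess} over the water level. Put $x_i=\tilde\lambda_i/(4k\mu)$. For $i\le n$ these lie in $[1,e^{2\rho}]$ with $\sum_{i\le n}\log x_i=2\rho$. Choosing $C_\rho$ with $x-1\le C_\rho\log x$ on that interval gives $\sum_{i\le n}(x_i-1)\le 2C_\rho\rho$. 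Hence $1-R(\gamma)=\sum_{i\le n}(x_i-1)\big/\sum_i x_i\le 2C_\rho\rho/n(\gamma)\to0$, which is the paper's argument.

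\textbf{Monotonicity in Step 2.} Your heuristic fails as stated. At fixed $k$, $\gamma\frac{d}{d\gamma}\log\tilde\lambda_i$ equals $\alpha_i-\frac{2\mu}{\eta}\bar\alpha$ on the head, where $\alpha_i=\eta/(\eta+f(k)\gamma\lambda_i)$ and $\bar\alpha$ is their head average. On the tail it equals $\eta/\bigl(\eta+(f(k)+\frac{1}{2k})\gamma\lambda_i\bigr)$. For modes just above and just below the threshold these are about $\alpha^*-\frac{2\mu}{\eta}\bar\alpha$ and $\frac{\eta-2\mu}{\eta}\alpha^*$, where $\alpha^*$ is the value of $\alpha$ at the threshold. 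So the log-derivative drops across the boundary whenever $\bar\alpha<\alpha^*$, and the comonotonicity your Chebyshev or single-crossing argument needs is unavailable. The paper instead normalizes by $\mu$. Then $H=\sum_{i\le n}\tilde\lambda_i/\mu$ falls by Chebyshev within the head, since $\gamma\frac{d}{d\gamma}\log(\tilde\lambda_i/\mu)=\alpha_i-\bar\alpha$ there, and every tail term $\tilde\lambda_i/\mu$ rises. Because $H>4kn$, this makes $(4kn+T)/(H+T)$ increase.

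\textbf{The squeeze in Step 3.} The final squeeze does not deliver the claim. The chain $\tilde\lambda_{n+1}\le4k\mu\le\tilde\lambda_n$ bounds $\tilde\lambda_{n(\gamma)+1}$ only from above. The tail closed form approaches the head limit only if $\gamma\lambda_{n(\gamma)+1}\to\infty$. Likewise, ``both saturated'' for the moving index $n(\gamma)$ presupposes $\gamma\lambda_{n(\gamma)}\to\infty$, which you never show. The paper works along a subsequence with $\mu\to\mu^{**}$ and $k\to k_0$, in three steps:
\begin{enumerate}
\item[(a)] For each fixed $m$ the first $m$ modes are eventually active. Passing to the limit in the attention constraint gives $m\log\frac{\eta-2\mu^{**}}{4k_0\mu^{**}f(k_0)}\le2\rho$ for every $m$, which forces $\frac{\eta-2\mu}{4k\mu f(k)}\to1$.
\item[(b)] Activity of mode $n(\gamma)$ then gives $\frac{\eta}{\gamma\lambda_{n(\gamma)}}\le\frac{\eta-2\mu}{4k\mu}-f(k)\to0$.
\item[(c)] By the ratio hypothesis, $\gamma\lambda_{n(\gamma)+1}\ge\underline{c}\,\gamma\lambda_{n(\gamma)}\to\infty$. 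So $\tilde\lambda_{n(\gamma)+1}\to\frac{2k_0\eta}{1+2k_0f(k_0)}$, the common head limit.
\end{enumerate}
Step (c) is the only place the ratio hypothesis is needed. Your use of it, counting modes with $\gamma\lambda_i\ge M$, is not where it matters, since that count diverges under Assumption~\ref{assum:infinite} alone. The hypothesis is genuinely needed: for $\lambda_i=e^{-i^2}$ one can check that $\gamma\lambda_{n(\gamma)+1}\to0$ along the values of $\gamma$ at which a new mode enters the head.
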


\noindent Under the provided condition---which essentially ensures that eigenvalues do not decay too quickly even as $\gamma \rightarrow \infty$---the interaction between the attention problem and the process choice problem is quite stark, with all eigenvalues in the attention becoming equalized and converging to the water level. Even though the constant-factor increases in volatility make the eigenvalues more spread out in magnitude, the \emph{opposite} impact is observed for eigenvalues, at least in the head. 

An analogous comparative static holds as the cost of modifying the state process becomes arbitrarily small:

\begin{subtheorem}  \label{prop:eta}
The optimal team size is decreasing in $\eta$, and converges to the zero-attention capacity solution as $\eta \rightarrow 0$. 
\end{subtheorem}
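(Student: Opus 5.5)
The plan is to reduce Theorem \ref{prop:eta} to the comparison that drives Theorem \ref{thm:optimalk}. In that theorem, the optimal $k$ depends on the primitives only through the factor (\ref{eq:factor}), now computed from the chosen eigenvalues $\tilde{\lambda}_i$. Write
\[
R(\eta) = \frac{n\tilde{\mu}^{*} + \sum_{i>n} \tilde{\lambda}_i}{\sum_{i} \tilde{\lambda}_i},
\]
where $\tilde{\mu}^{*}=4k\mu$ is the normalized water level. The first-order condition (\ref{eq:endocompk}) reads $R = 2k^2 f'(k)$. By convexity of $f$, the right-hand side is strictly increasing in $k$, so $k^*$ is strictly increasing in $R$. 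It therefore suffices to show two things: $R$ is increasing in $\eta$, and $R\to 1$ as $\eta\to 0$. The limit $R=1$ is exactly the $\rho=0$ condition $1=2k^2f'(k)$ of Section \ref{subsec:noattention}.

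There is a complication: the $\tilde{\lambda}_i$ themselves depend on $k$ through Proposition \ref{thm:endocomplex}. I would handle this with an envelope/monotone comparative statics argument on the joint objective $V(k,\tilde\lambda;\eta)$ rather than chasing the fixed point. First, use Lemma \ref{lem:endoKL} to write the value as a minimization over $\tilde\lambda$ of coordination loss, adaptation loss and $\eta D$. Then show that the objective has decreasing differences in $(k,\eta)$ after optimizing $\tilde\lambda$. By the envelope theorem, $\partial_\eta$ of the optimized loss equals $D(Q^*\|P)$. So the claim reduces to showing that the KL cost of the optimal process is decreasing in $k$. This is intuitive: at larger $k$ the coordination weight $1/(4k)$ falls, so the incentive to shrink eigenvalues below $\lambda_i$ weakens. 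The formulas in Proposition \ref{thm:endocomplex} make this explicit. In the tail, $\tilde\lambda_i = 1/(1/\lambda_i + \tfrac{2}{\eta}(\tfrac{1}{4k}+\tfrac{f(k)}{2}))$, and the bracket $\tfrac{1}{4k}+\tfrac{f(k)}{2}$ is increasing in $k$ past the interior optimum (from the FOC with $R\le1$). That is the wrong direction, so a direct monotonicity argument is not clean. I therefore expect this to be the main obstacle.

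My fallback is the direct route via the ratio $R$. As $\eta$ falls, the head eigenvalues are driven by the formula $\tilde\lambda_i=(\eta-2\mu)/(\eta/\lambda_i+f(k))$. These become approximately $(\eta-2\mu)/f(k)$, nearly equal across the head, and hence close to the water level. Tail eigenvalues shrink only proportionally, roughly like $\eta\lambda_i$ for small $\eta$. I would show that the relative weight $n\tilde\mu^*/\sum_{i\le n}\tilde\lambda_i$ rises toward 1 as $\eta$ decreases, which gives monotonicity of $R$ by the same weighting argument used after Proposition \ref{prop:robustcomplexity}. That argument says that shrinking head eigenvalues more than tail eigenvalues pushes $\sum\lambda_i'\min\{1,\mu^*/\lambda_i\}$ against the denominator in the right direction. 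The comparison reuses the single-crossing sufficient condition: check that $d\log\tilde\lambda_i/d\eta$ is larger for head indices than for tail indices, since head eigenvalues respond at rate roughly $1/\eta$ and tail ones less. That yields $R'(\eta)<0$, so $k^*$ decreases in $\eta$.

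For the limit, I would show that as $\eta\to0$ the attention constraint
\[
\rho=\tfrac12\sum_{m\le n}\log\big(\tilde\lambda_m/(4k\mu)\big)
\]
with nearly equal head eigenvalues forces either $n\to\infty$ with the head collapsing to the water level, or a uniformly vanishing scale in which $\tilde\lambda_m/\tilde\mu^*\to1$. In either case $n\tilde\mu^*\approx\sum_{i\le n}\tilde\lambda_i$, so $R\to1$. Continuity of the FOC root then gives convergence of $k^*$ to the solution of $1=2k^2f'(k)$. This mirrors the argument for Theorem \ref{prop:gamma}, using the duality that small $\eta$ acts like large $\gamma$ relative to the cost of modification.
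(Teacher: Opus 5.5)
Your overall plan matches the paper: fix $k$, show the factor $R$ falls in $\eta$, and show $R\to1$ as $\eta\to0$. The gap is in how you sign $R'(\eta)$. Your fallback rests on $d\log\tilde\lambda_i/d\eta$ being larger on head indices than on tail indices, and this is false in general.

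Differentiating the attention constraint at fixed $k$ gives, for active modes, $\frac{d\log\tilde\lambda_i}{d\eta}=\frac{1}{\eta}\bigl(1-\alpha_i+\frac{2\mu}{\eta}\bar\alpha\bigr)$, where $\alpha_i=\eta/(\eta+f(k)\lambda_i)$ and $\bar\alpha$ is the head average. For inactive modes, $\frac{d\log\tilde\lambda_j}{d\eta}=\frac{1}{\eta}(1-\beta_j)$ with $\beta_j=\eta/(\eta+(f(k)+\frac{1}{2k})\lambda_j)$. The two formulas for $\tilde\lambda$ agree at the cutoff $x^*$, which gives $\beta(x^*)=\frac{\eta-2\mu}{\eta}\alpha(x^*)$. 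So when $\lambda_n,\lambda_{n+1}$ are both near $x^*$, the last head log-derivative minus the first tail one is about $\frac{2\mu}{\eta^2}(\bar\alpha-\alpha(x^*))$. This is negative as soon as another head eigenvalue lies strictly above $x^*$.

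Concretely, take $k=1$, $f(1)=1$, $\eta=1$, $\mu=0.05$ (so $x^*=2/7$, with $\rho$ chosen accordingly), $\lambda_1=10$, and $\lambda_2,\lambda_3$ straddling $2/7$. Then $d\log\tilde\lambda_2/d\eta\approx0.266<0.3\approx d\log\tilde\lambda_3/d\eta$. Even the weaker single-crossing version, which orders log-derivatives only across the sign change of $\min\{1,\tilde\mu^*/\tilde\lambda_m\}-R$, can fail. For example, take $\lambda_1=100$, $\lambda_2\approx0.353$, and 25 tail modes just below $2/7$.

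The condition after Proposition \ref{prop:robustcomplexity} does not rescue this either. It presumes fixed total variance and a sign change in $\lambda_i'$, whereas here every $\tilde\lambda_i$ rises with $\eta$. Your heuristic that tail eigenvalues scale ``like $\eta\lambda_i$'' is also off. In fact $\tilde\lambda_j=\eta\lambda_j/(\eta+(f(k)+\frac{1}{2k})\lambda_j)$ flattens toward a common level, just as the head flattens toward $(\eta-2\mu)/f(k)$. That is why raw log-derivatives do not separate head from tail.

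The missing idea is to measure everything relative to the water level. Write $g_m=\frac{d\log\mu}{d\eta}+h_m$ with $h_m=\frac{d}{d\eta}\log(\tilde\lambda_m/\mu)$. The common shift drops out of your sign expression $\sum_m\tilde\lambda_m(w_m-R)g_m$, because $\sum_m\tilde\lambda_m(w_m-R)=0$. The attention constraint gives $\sum_{m\le n}h_m=0$, which removes the $\tilde\mu^*$ part of the head weights. What remains is $\mu\bigl[(1-R)T'-RH'\bigr]$, with $H=\sum_{i\le n}\tilde\lambda_i/\mu$ and $T=\sum_{i>n}\tilde\lambda_i/\mu$.

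This is exactly the paper's argument. It writes the factor as $(4kn+T)/(H+T)$ with $H>4kn$. It gets $H'\ge0$ from Chebyshev's sum inequality: on the head, $h_i=(\bar\alpha-\alpha_i)/\eta$ is zero-sum and similarly ordered with the weights $\tilde\lambda_i/\mu$. It gets $T'\le0$ termwise: $h_j=\frac{1}{\eta}\bigl(\frac{\eta-2\mu}{\eta}\bar\alpha-\beta_j\bigr)\le0$ by the cutoff identity together with $\bar\alpha\le\alpha(x^*)$ and $\beta_j\ge\beta(x^*)$.

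Your limit argument also has a hole. The branch ``bounded $n$ with all $\tilde\lambda_m/\tilde\mu^*\to1$'' contradicts $\sum_{m\le n}\log(\tilde\lambda_m/\tilde\mu^*)=2\rho$, so you must actually prove $n(\eta)\to\infty$. The paper does this by contradiction: along a bounded-$n$ sequence, the first inactive mode's ratio to the water level tends to $(\bar M+\frac{1}{2k})/(f(k)+\frac{1}{2k})>1$. It then uses $\sum_{i\le n}(\tilde\lambda_i/\tilde\mu^*-1)\le2C_\rho\rho$ to get $1-R\le2C_\rho\rho/n\to0$.

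Finally, fix two sign slips. You need $R$ decreasing in $\eta$, not increasing. In your envelope route, the $\eta$-derivative of the optimized loss is $D(Q^*\|P)$, so you would need the optimal KL cost to be increasing in $k$, not decreasing.
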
 

\noindent The proof idea behind Theorem \ref{prop:eta} closely tracks Theorem \ref{prop:gamma}, with an increase in $\eta$ roughly as having a symmetric effect to a decreasing in $\gamma$. The intuition is that as $\eta$ becomes small, the impact is concentrated on the eigenvalues that are in the attention problem, whereas those outside of it remain less impacted. 

The comparison between Theorem \ref{prop:eta} and Theorem \ref{thm:optimalk} illustrates that the impact of new technologies that improve organizational efficiencies may depend on where these efficiencies are concentrated. When concentrated at the management of headquarters, the main effect is to enable smaller teams by making it easier to monitor the organization. But when this technology makes the tasks themselves more manageable, the implication can be that the remaining variance itself is less concentrated in the attention problem, so that hierarchy should expand as a result. 

This dual-impact is highly reminiscent of the findings of \cite{bloom_garicano_sadun_vanreenen_2014_distinct}, namely that information technology and communication technologies can have opposite impacts on hierarchy. The take-away that my framework delivers is that the impact of any such technology on organizational hierarchy ultimately depends on where the residual complexity lies, i.e., inside or outside of headquarters' attention problem. While a lower value of $\eta$ reflects a better ability to exogenously simplify the state process, this simplification is more concentrated on the dimensions of the state process that already enter into the attention problem, intuitively since these were already more significant to begin with. The conclusion is that the apparently contradictory effects of technological improvements in organizations can be resolved by considering where the remaining horizontal complexity lies. 

\subsection{Decentralization versus Control} \label{sec:divisions}

The discussion of richer hierarchies in Section \ref{sec:hierarchies} assumed that higher layers inherit the same basic structure as lower layers, reflecting the case where teams and subteams do not split the organization itself.  This does not capture a distinct organizational margin: whether to separate a discrete set of divisions as opposed to integrating them under a common hierarchy.  Consider the Manhattan Project as an illustration.  Its activities were dispersed primarily across three sites---Oak Ridge, Hanford, and Los Alamos---each responsible for distinct aspects of the project.  These different locations reflected natural boundaries within the project: tasks within the same site were typically more closely related to one another than to tasks elsewhere,  despite all still remaining part of the broader interdependent production process. 

In this section,  I show how this paper's framework also speaks to the question of when firms should optimally set up boundaries between divisions,  as commonly occurs in practice and as it did with the Manhattan Project.  To maintain simplicity,  I assume that there are two different domains which may exhibit a particular form of correlation.  Implicitly,  there is a natural ``wedge'' between the tasks within each division,  and the question the organization faces is whether to separate divisions along this wedge.

I now describe this extension formally.  The tasks for the first division are located on $A=[0,1]$ while the tasks for the second division are located on $B=[2,3]$.  I start with the assumption that each interval has its own covariance operator---i.e.,  the covariance between $x, x' \in [0,1]$ is $C_{[0,1]}(x,x')$,  while the covariance operator for $x,x' \in [2,3]$ is $C_{[2,3]} (x,x')$.  For simplicity,  assume these covariance operators are supported on the subspaces orthogonal to the constant functions on $A$ and $B$ (although this assumption is not essential).  Let $\tilde{W}_{0}(x)$ denote the resulting stochastic process on $[0,1] \cup [2,3]$ with these covariance operators (with the correlation between $\tilde{W}_{0}(x)$ and $\tilde{W}_{0}(x')$ being 0 when $x$ and $x'$ are in different divisions) and set: 

\begin{equation*} 
W_{0}(x) = \tilde{W}_{0}(x)+ \theta_{A} \mathbf{1}[x \in [0,1]]+\theta_{B} \mathbf{1}[x \in [2,3]]
\end{equation*}

\noindent with $W_{0}(x)=0$ outside of $[0,1] \cup [2,3]$, $\begin{pmatrix} \theta_{A} \\ \theta_{B} \end{pmatrix} \sim \mathcal{N} \left( \begin{pmatrix} 0 \\ 0 \end{pmatrix}, \begin{pmatrix} \sigma^{2} & r \sigma^{2} \\ r \sigma^{2} & \sigma^{2} \end{pmatrix} \right)$ for $r \in [-1,1]$, and $\tilde{W}_{0}$ and $(\theta_{A}, \theta_{B})$ independent (except for the possible correlation between $\theta_{A}$ and $\theta_{B}$).  Thus,  the common shocks between the different domains are correlated---but within a domain,  no single location is uniquely more correlated with another location in another domain.  I allow for the divisions to choose team size independently.  Let $k_{A}$ denote the team size of division $A$ and $k_{B}$ denote the team size of division $B$.   Given the team choice,  the organization's adaptation loss can be defined as in the baseline model,  i.e.,  as the integral of each location's adaptation loss taken across all locations. 

The difference between the ``merged'' case and the ``separated'' case enters through the attention problem.  When divisions $A$ and $B$ are merged, the problem is exactly as described in the main model,  subject to the aforementioned modifications which influence the definition of $W_{1}(x)$.  I let $\rho$ denote the available attention budget as before,  and coordination loss is simply the integral of the difference between the expected team state and the realized team state.  

When divisions $A$ and $B$ are separated,  each division solves the attention problem individually.  Specifically,  headquarters first allocates attention between division $A$ and $B$ freely.   Letting $\rho_{A}$ and $\rho_{B}$ denote the attention budgets of each division,  respectively,  $\rho_{A}$ and $\rho_{B}$ can be any nonnegative amount subject to the constraint that $\tau_{A} \rho_{A} + \tau_{B} \rho_{B} = \rho$.  I assume that $\tau_{A}, \tau_{B} \leq 1$,  where $\tau_{A},  \tau_{B} <1$ reflects the presence of efficiency gains from separation.\footnote{The treatment of a division merger as changing headquarters' attention problem contrasts with \cite{dessein_garicano_gertner_2010_organizing}, where integration creates potential cost savings from standardization at the expense of potentially reducing local adaptation.} Division $j \in \{A, B\}$ chooses a stochastic process $S_{j}$ subject to the constraint that $I(W_{1} \lvert_{j}; S_{j}) \leq \rho_{j}$,  where $W_{1} \lvert_{j}$ denotes the restriction of $W_{1}$ to $j$. \footnote{Note that this abuses notation slightly since $W_{1} \lvert_{j}$ is not supported on $j$; I take $W_{1} \lvert_{A}$ to be supported on $[0,1/k_{A}]$ and $W_{1} \lvert_{B}$ to be supported on $[2,2+1/k_{B}]$.} Division $j$'s coordination loss is the integral of the squared difference between the realized team state and the expected team state given $S_{j}$.  The organization's coordination loss is the sum of the coordination losses for each division.

The following Lemma identifies the difference between the merged and separated problems in terms of the spectral representation: 

\begin{lemma}  \label{lem:mergedKL}
Suppose \(C_{[0,1]}\) has eigenvalues \((\lambda_k^A)_{k=1}^{\infty}\) and \(C_{[2,3]}\) has eigenvalues \((\lambda_k^B)_{k=1}^{\infty}\). Define
\[
\sigma_A^2=\frac{\sigma^2}{4k_A},
\qquad
\sigma_B^2=\frac{\sigma^2}{4k_B}.
\]
Let
\[
\lambda_+
=
\frac{\sigma_A^2+\sigma_B^2}{2}
+
\frac12
\sqrt{(\sigma_A^2-\sigma_B^2)^2+4r^2\sigma_A^2\sigma_B^2},
\]
and
\[
\lambda_-
=
\frac{\sigma_A^2+\sigma_B^2}{2}
-
\frac12
\sqrt{(\sigma_A^2-\sigma_B^2)^2+4r^2\sigma_A^2\sigma_B^2}.
\]
Then the eigenvalues in the Karhunen-Lo\'eve representation for the merged problem are
\[
\lambda_+,\ \lambda_-,\ 
\frac{\lambda_1^A}{4k_A},\frac{\lambda_2^A}{4k_A},\ldots,
\frac{\lambda_1^B}{4k_B},\frac{\lambda_2^B}{4k_B},\ldots .
\]
\end{lemma}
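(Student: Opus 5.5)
The plan is to write down the covariance operator of the merged team-state process explicitly and decompose it into mutually orthogonal invariant pieces. In the merged problem, the team-state process is $W_{1}(u)=W_{0}(k_A u)/2$ on $[0,1/k_A]$ and $W_{1}(u)=W_{0}(2+k_B(u-2))/2$ on $[2,2+1/k_B]$, as in Section \ref{sec:model} and the footnote above. Its covariance kernel is therefore the sum of three pieces. The first is $\frac14 C_{[0,1]}(k_Au,k_Au')$, supported on $A$-locations. The second is $\frac14 C_{[2,3]}(\cdot,\cdot)$, rescaled in the same way and supported on $B$-locations. The third is a rank-two piece $\frac{\sigma^2}{4}\big(\mathbf{1}_A(u)\mathbf{1}_A(u') + \mathbf{1}_B(u)\mathbf{1}_B(u') + r\mathbf{1}_A(u)\mathbf{1}_B(u')+r\mathbf{1}_B(u)\mathbf{1}_A(u')\big)$, coming from $(\theta_A,\theta_B)$. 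This uses the independence of $\tilde W_0$ and $(\theta_A,\theta_B)$ and the zero cross-division covariance of $\tilde W_0$.

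The key structural step is orthogonality. By assumption, $C_{[0,1]}$ and $C_{[2,3]}$ annihilate the constant functions on their domains, and the rescaling $u\mapsto k_Au$ preserves this. Hence the span of $\{\mathbf 1_{A_1},\mathbf 1_{B_1}\}$ is invariant and is orthogonal to the ranges of the first two pieces, where $A_1=[0,1/k_A]$ and $B_1=[2,2+1/k_B]$. The rank-two piece acts only on this span, and the first two pieces kill it. The operator is therefore block-diagonal with respect to $L^2(A_1)\ominus\text{const}$, $L^2(B_1)\ominus\text{const}$, and $\mathrm{span}\{\mathbf 1_{A_1},\mathbf 1_{B_1}\}$. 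On the first two blocks, the scaling computation already done in Section \ref{sec:spectral} applies unchanged. The pair $(\lambda^A_i/(4k_A),\sqrt{k_A}\phi^A_i(k_A\,\cdot))$ is an eigenpair, and likewise for $B$. This gives the listed tail of eigenvalues.

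What remains is the $2\times 2$ block. I will use the orthonormal basis $e_A=\sqrt{k_A}\mathbf 1_{A_1}$ and $e_B=\sqrt{k_B}\mathbf 1_{B_1}$. Applying the rank-two kernel to $e_A$ gives $\frac{\sigma^2}{4}\sqrt{k_A}\big(\frac{1}{k_A}\mathbf 1_{A_1}+\frac{r}{k_A}\mathbf 1_{B_1}\big)$. Its coordinates are $\frac{\sigma^2}{4k_A}$ along $e_A$ and $\frac{r\sigma^2}{4\sqrt{k_Ak_B}}$ along $e_B$. The matrix of this block is therefore
\[
\begin{pmatrix}\sigma_A^2 & r\sigma_A\sigma_B\\ r\sigma_A\sigma_B & \sigma_B^2\end{pmatrix},
\]
and the quadratic formula for its eigenvalues gives exactly $\lambda_\pm$ as stated. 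Alternatively, one can compute $\mathrm{Var}(\int W_1 e_A)$ and the corresponding cross-covariance directly from $\theta_A,\theta_B$, which gives the same matrix.

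Finally, I will check that these eigenpairs exhaust the nonzero spectrum, so that they give the Karhunen-Lo\'eve representation. The three blocks span $L^2(A_1\cup B_1)$, and the scaled eigenfunctions are complete in their blocks whenever the original ones are complete in the orthogonal complement of the constants. This holds because the original kernels are supported there by assumption. I expect the main obstacle to be bookkeeping rather than mathematics. The normalizations must be tracked carefully: the $\sqrt{k}$ factors and the $1/4$ from actions versus states must combine into $\sigma^2/(4k)$ and $r\sigma_A\sigma_B$ in the off-diagonal entry. It also needs to be verified that the constant-orthogonality assumption survives the rescaling. If the constant-orthogonality assumption were dropped, the blocks would mix, which is why the paper imposes it.
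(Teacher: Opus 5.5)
Your proposal is correct and follows essentially the paper's route. Both arguments isolate the span of the normalized indicators $\sqrt{k_A}\mathbf{1}_{[0,1/k_A]}$ and $\sqrt{k_B}\mathbf{1}_{[2,2+1/k_B]}$, use that the divisional eigenfunctions are orthogonal to constants, and reduce the common-shock part to a $2\times 2$ principal-components problem with entries $\sigma_A^2$, $\sigma_B^2$, $r\sigma_A\sigma_B$. The difference is presentational: the paper exhibits the rotated eigenfunctions through an angle $\alpha$ and checks that the synthesized process reproduces the common-shock covariance, whereas you diagonalize the block directly, which treats all values of $r$ uniformly. Your closing completeness claim is not needed and does not follow from the support assumption, but block-diagonality alone already rules out any other nonzero eigenvalues.
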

\noindent The new eigenvalues introduced in Lemma \ref{lem:mergedKL} are those that emerge under a standard principal components analysis exercise: They are the ``major principal component'' and the ``minor principal component'' of the common state problems. Note that in the case where $r=0$, these eigenvalues reduce to $\sigma_{A}^{2}$ and $\sigma_{B}^{2}$, the common variance associated with each of the individual divisions. 

\begin{proposition}  \label{thm:mergerbetter}
Suppose $\tau_A=\tau_B=1$. Then, for every $\rho \in \mathbb{R}_{+}$, the organization's losses under merger are weakly lower than under separation. The inequality is strict whenever $r\neq 0$, and $\lambda_{+}$ receives positive attention. In particular, if $\abs{r} \in (0,1)$, the inequality is strict for all sufficiently large finite $\rho$.
\end{proposition}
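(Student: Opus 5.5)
The plan is to compare the two attention problems for fixed team sizes $(k_A,k_B)$ and then optimize over team sizes. Adaptation loss does not depend on the regime, so it suffices to show that for every $(k_A,k_B)$ the minimal coordination loss under merger is weakly below that under separation. Taking the separated optimum $(k_A^s,k_B^s)$ and using the same team sizes in the merged organization then gives the weak inequality for optimal losses.

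\textbf{Step 1: the separated regime is a restricted merged problem.} Fix $(k_A,k_B)$ and a split $\rho_A+\rho_B=\rho$ (here $\tau_A=\tau_B=1$). Let $S_A,S_B$ be the separated signals, and build them on a common probability space so that their noises are conditionally independent given $(W_1|_A,W_1|_B)$. Take the merged signal to be $S=(S_A,S_B)$, indexed on the union of the supports. By the chain rule and conditional independence, $I(W_1;S)\le I(W_1|_A;S_A)+I(W_1|_B;S_B)\le\rho$. The finite-mesh definition (\ref{eq:MIDef}) has to be handled by taking meshes on each piece and passing to suprema. Since $\mathbb{E}[W_1(x)\mid S]$ minimizes squared error pointwise among $\sigma(S)$-measurable predictors, and $\mathbb{E}[W_1(x)\mid S_j]$ is one such predictor, the merged coordination loss is at most the separated one. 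This proves the weak inequality.

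\textbf{Step 2: closed forms for strictness.} Use Lemma \ref{lem:mergedKL} together with the water-filling characterization of Proposition \ref{lem:RISol}. Merged loss is water-filling on the spectrum $\{\lambda_+,\lambda_-,\lambda^A_i/4k_A,\lambda^B_i/4k_B\}$ with budget $\rho$.

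Separated loss with split $(\rho_A,\rho_B)$ is water-filling on $\{\sigma_A^2,\lambda_i^A/4k_A\}$ with budget $\rho_A$ and on $\{\sigma_B^2,\lambda_i^B/4k_B\}$ with budget $\rho_B$. The common shocks are orthogonal to the $\tilde W$ parts by the zero-mean assumption. This is equivalent to a single water-filling on the combined spectrum $\{\sigma_A^2,\sigma_B^2,\lambda^A_i/4k_A,\lambda^B_i/4k_B\}$, but in the basis of $\theta_A,\theta_B$ rather than the principal components.

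When $r\neq0$, the pair $(\lambda_+,\lambda_-)$ majorizes $(\sigma_A^2,\sigma_B^2)$ strictly: the sums are equal and $\lambda_+>\max(\sigma_A^2,\sigma_B^2)$. The rate-distortion function of a Gaussian vector is attained only by the eigenbasis water-filling. Equivalently, the distortion-rate function $D(\rho)=\sum\min(\mu,\lambda)$ is Schur-concave in the spectrum, and strictly so once a component whose eigenvalue moved receives attention. Hence if $\lambda_+$ is above the water level, the separated scheme, which forces independent coding of $\theta_A$ and $\theta_B$, cannot achieve the merged distortion.

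Two details need checking:
\begin{itemize}
\item Restricting the separated scheme to coding the two correlated coordinates separately is strictly suboptimal. I would show this by direct computation in the $2\times2$ block: independent signals on $\theta_A,\theta_B$ give a posterior covariance whose trace strictly exceeds the eigen water-filling value when $r\neq0$ and the budget allocated to the block is positive.
\item If the separated scheme puts zero budget on the common block while merger attends to $\lambda_+$, strictness follows because water-filling is the unique optimum.
\end{itemize}

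\textbf{Step 3: optimization over teams and large $\rho$.} Strictness for fixed $(k_A^s,k_B^s)$ transfers to optimal losses by the argument in the first paragraph. For the final claim, as $\rho\to\infty$ the water level $\mu\to0$ (as in Theorem \ref{thm:optimalk}), so every positive eigenvalue, including $\lambda_+>0$, eventually receives attention. With $|r|\in(0,1)$, both $\lambda_\pm>0$ and $r\neq0$, so strictness holds.

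I expect the main obstacle to be Step 2's strictness: showing rigorously, for arbitrary non-Gaussian admissible separated signals, that separation cannot match the eigenbasis water-filling. I would handle this via the uniqueness of the Gaussian rate-distortion optimizer, applied to the finite-dimensional reduction from Proposition \ref{lem:RISol}.
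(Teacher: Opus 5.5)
Your proof is correct. The strictness part is the paper's own argument in different language. The paper parametrizes the common-shock block by its mean $x=(\nu_1+\nu_2)/2$ and spread $d=(\nu_1-\nu_2)/(\nu_1+\nu_2)$. It shows that the block loss $A(x,d,\rho_\theta)$ is strictly decreasing in $d$ whenever $\rho_\theta>0$, and observes that $d^M>d^S$ when $r\neq 0$. That is exactly your strict majorization of $(\sigma_A^2,\sigma_B^2)$ by $(\lambda_+,\lambda_-)$ combined with strict Schur-concavity of the two-mode distortion-rate function. The case split on whether separation puts budget on the block also matches the paper. So does the transfer to optimal losses, which evaluates the merged problem at the separated optimum $(k_A^s,k_B^s)$.

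The genuine difference is your Step 1. The paper gets the weak inequality from the same monotonicity in $d$, so it relies on Lemma \ref{lem:mergedKL} and the water-filling reduction throughout. Your embedding $S=(S_A,S_B)$ instead shows that separation is a restricted version of merger for arbitrary joint laws. This is more robust and makes the economics transparent: headquarters pools both signals and avoids the redundancy $I(S_A;S_B)>0$. It does not deliver strictness, however, and it needs care. Conditional independence of $S_A,S_B$ given the whole process does not survive passage to a finite mesh. The bound should therefore be routed through process-level mutual information, which coincides with (\ref{eq:MIDef}) by separability and mean-square continuity, together with the Markov chain $W_1|_B - W_1|_A - S_A$.

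Two smaller remarks.

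\begin{enumerate}
\item The obstacle you anticipate is not real. Applying Proposition \ref{lem:RISol} division by division already bounds the loss from any admissible, possibly non-Gaussian, separated signal below by that division's water-filling value. Both regimes therefore reduce to water-filling on explicit spectra. The only uniqueness you need is that of the finite-dimensional rate allocation, which follows from strict convexity of $\sum_i \nu_i e^{-2\rho_i}$. Your $2\times 2$ posterior-covariance computation is thus superfluous. Note also that separated divisions condition only on their own signals, so their loss is a sum of marginal posterior variances; the joint posterior trace is only a lower bound on it.
\item For the last claim, the convergence $\mu\to 0$ must hold at the $\rho$-dependent team sizes $(k_A^s,k_B^s)$, not at a fixed spectrum. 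This follows because Lemma \ref{lem:teambound} keeps team sizes in $[1,k_0]$. Hence $\lambda_+\geq \sigma^2/(4k_0)$, and the budget needed to push the water level below that bound is uniform over team sizes. The paper also leaves this step implicit.
\end{enumerate}
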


\noindent While the proof of Proposition \ref{thm:mergerbetter} is involved,  the intuition for why mergers help is straightforward: whenever there is correlation between the common state shocks, learning about one division's state provides information about the other's.  Thus, paying attention jointly allows the organization to avoid duplication,  achieving the same end without expending as much of the attention budget.  Paying attention to both simultaneously is then strictly more efficient. These gains are smaller if most of the attention is allocated to only one division in any case, as occurs when team sizes are asymmetric. It is also immediate that there are no gains in the case of $\rho=0$ and $\rho=\infty$.

This interpretation is related to \cite{dessein_lo_minami_2022_coordination}, which shows that volatility can favor centralization when coordination needs are high. In my model, the need to coordinate across divisions is captured by the presence of correlated division-wide shocks: when $r=0$, merger provides no attention advantage, while when $r\neq 0$, joint processing allows headquarters to adapt to the common component more efficiently.

Allowing $\tau_{A}, \tau_{B} < 1$ reflects efficiency gains from separation. A natural story for why this might be the case is if the merger requires added costs for divisions to coordinate with headquarters; this mechanism would in turn imply that decentralization should be favored when these costs are higher.  This observation relates this extension to findings of \cite{gumpert2022firm} on how ease of transportation between divisions influences organizational hierarchy.  It is also consistent with \cite{mcelheran2014delegation}, finding in the context of IT purchasing in multi-establishment firms that local information advantages favor delegation, while greater firm-wide coordination needs favor centralization.

I obtain the following result on how the organizational hierarchy changes in the presence of the merger, relating this model to the empirical findings of \cite{gumpert2022firm}: 

\begin{proposition}  \label{prop:teamsize}
Suppose $\tau_A=\tau_B=1$ and that the two divisions have the same eigenvalues and adaptation cost. Suppose further that $f$ is convex and that the decentralized optimum involves interior team sizes. If the merged problem has a symmetric optimum $k_A^M=k_B^M$, then the common team size under merger is weakly smaller than the common team size under separation. The inequality is strict whenever the organization does strictly better under the merger as per the conditions in Proposition \ref{thm:mergerbetter}. \footnote{The restriction to symmetric merged optima is substantive only because the merged objective need not be globally convex in $(k_{A},k_{B})$; in fact, it will typically fail to be for intermediate values of attention where $\lambda_{+}$, but not $\lambda_{-}$, is in the attention problem. However, by Lemma \ref{lem:teambound} in the Appendix, optimal team sizes lie in a compact set. If $\rho$ is sufficiently large so that both $\lambda_{+}$ and $\lambda_{-}$ receive attention throughout this set, this problematic region disappears and the merged objective is convex over the relevant region. Hence, symmetry of primitives implies the existence of a symmetric merged optimum in this region. The asymmetric example below shows why no coordinatewise monotonicity result should be expected without symmetry.}
\end{proposition}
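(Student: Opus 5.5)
The plan is to compare first-order conditions at a common symmetric team size $k$, using the envelope theorem for the attention problem. Under separation with symmetric divisions and $\tau_A=\tau_B=1$, symmetry and convexity of $f$ give that each division receives $\rho/2$. Each division then faces a single-division problem whose team-state eigenvalues are $\sigma^2/(4k)$ together with $\lambda^A_j/(4k)$; all of them scale as $1/(4k)$. By the argument preceding Theorem \ref{thm:optimalk}, the water level is $\mu^*_S/(4k)$ with $\mu^*_S$ independent of $k$. So the separated objective per division is
\[
\frac{1}{4k}\,\Phi_S+\frac{f(k)}{2}\,T,
\]
where $\Phi_S=n_S\mu_S^*+\sum_{\text{tail}}(\cdot)$ is the task-level coordination statistic and $T$ is total task variance. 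The interior optimum $k^S$ then satisfies $2k^2f'(k)=\Phi_S/T$.

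Under merger at a symmetric point $k_A=k_B=k$, Lemma \ref{lem:mergedKL} gives $\sigma_A^2=\sigma_B^2=\sigma^2/(4k)$. Hence $\lambda_\pm=(1\pm|r|)\sigma^2/(4k)$, and every merged eigenvalue again scales as $1/(4k)$. The water level is $\mu^*_M/(4k)$ and the merged coordination loss is $\Phi_M/(4k)$, with $\Phi_M$ independent of $k$ along the diagonal. The key step is to show that the symmetric merged optimum satisfies the analogous first-order condition $2k^2f'(k)=\Phi_M/T$ (per division, halving totals appropriately). To do this I would differentiate the merged objective in $k_A$ at the symmetric point, with $k_B$ held fixed. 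By the envelope theorem, the derivative of the coordination loss in $k_A$ equals the sum over modes of $\min\{\text{eigenvalue},\text{water level}\}$ times the derivative of the log eigenvalue. The division-$A$ eigenvalues $\lambda^A_j/(4k_A)$ contribute $-1/k_A$ times their clipped values. For $\lambda_\pm$, symmetry implies $\partial_{k_A}\log\lambda_\pm=-1/(2k)$. Adding the $B$ contribution, the diagonal derivative reproduces $\frac{d}{dk}\big[\Phi_M/(4k)\big]$, so half of it is the one-sided derivative. Thus the symmetric stationary condition reads $2k^2 f'(k)=\Phi_M/T$. This computation is the main obstacle: the water level shifts with $k_A$, and $\lambda_\pm$ depend on $k_A$ nonlinearly through the square root, so I will verify $\partial_{k_A}\lambda_\pm$ at $\sigma_A=\sigma_B$ explicitly. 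The non-differentiability of the active set at the water level is handled by the $\min$ form, which is continuous.

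Next I compare $\Phi_M$ and $\Phi_S$. The separated problem is feasible for the merged problem, since signals split across divisions are admissible. At any fixed $k$, the merged coordination loss is therefore at most the separated one, so $\Phi_M\le\Phi_S$. This is exactly the content of Proposition \ref{thm:mergerbetter} at the symmetric point, where the $1/(4k)$ scaling makes the comparison independent of $k$. The inequality is strict under the conditions of Proposition \ref{thm:mergerbetter} (namely $r\neq0$ with $\lambda_+$ receiving attention); note that the attention status of $\lambda_+$ does not depend on $k$ along the diagonal.

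Finally, $k\mapsto 2k^2f'(k)$ is strictly increasing because $f$ is convex and increasing. Since $\Phi_M\le\Phi_S$, the solution $k^M$ of $2k^2f'(k)=\Phi_M/T$ satisfies $k^M\le k^S$, with strict inequality when $\Phi_M<\Phi_S$. If $k^M$ is a boundary optimum at $k=1$, the inequality holds trivially because $k^S$ is interior. If the merged first-order condition holds only as a stationarity condition rather than at a global optimum, it still pins down $k^M$ uniquely, since the diagonal objective is $\Phi_M/(4k)+f(k)T/2$, which is strictly convex in $k$.
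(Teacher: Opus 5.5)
Your proposal is correct and follows essentially the paper's route. Along the diagonal every merged and separated eigenvalue scales as $1/(4k)$, so each regime's loss has the form $\alpha_{\rho,reg}/(4k)+\frac{f(k)}{2}\cdot(\text{total variance})$ with $\alpha_{\rho,mer}\le\alpha_{\rho,dec}$ by Proposition \ref{thm:mergerbetter}, and convexity then orders the minimizers: the paper signs $\partial L^{mer}/\partial k$ at $k_{dec}$, while you compare the roots of $2k^2f'(k)=\Phi/T$, which amounts to the same thing. The $k_A$-partial computation you flag as the main obstacle is unnecessary, because a symmetric global optimum automatically minimizes the strictly convex diagonal restriction, and that is all the paper uses; it would also need separate care at $r=0$, where $\lambda_\pm$ are not differentiable on the diagonal.
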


This result mirrors the finding of \cite{gumpert2022firm} that longer travel time between headquarters and divisions is associated with more managerial layers. The result that team sizes decrease in the merger mirrors the intuition from the main model---less duplication changes the marginal benefit from larger teams in the same way as an increase in the attention budget,  which favors hierarchy contraction.  But importantly,  this same conclusion need not hold when team sizes are asymmetric. The Appendix presents a numerical example where team sizes are asymmetric and the merger leads to one division using \emph{larger} teams.  Intuitively,  attention that is paid to the ``joint'' modes influences both divisions' coordination loss; so if one division gets most of the attention budget because of asymmetries under separation,  then when attention is paid to a joint mode under centralization,  it is as if some of the budget goes \emph{away} from that division despite the efficiency gain.  While the reasoning behind Proposition \ref{thm:mergerbetter} is not sensitive to exact symmetry, the symmetry condition is nevertheless important for the conclusion to hold.

\subsection{Human Capital and Training}
\label{sec:otherdesign}

Some organizational choices reflect other augmentations of the production technology which are less clearly understood as a direct change of the state process itself. A notable example is \cite{adhvaryu2023organizational}, showing that training increases and hierarchy flattens following the introduction of a new product. 

I accommodate this possibility by allowing headquarters to choose a ``training level'' $\alpha$ at cost $\frac{1}{2} \alpha^{2}$, which influences both the cost of taking an action $a_{x}$ and also the effectiveness. A leading case of interest will be where training makes actions more difficult, as methods are unfamiliar (at least initially), but also increase the effectiveness of a given action. Thus, if the firm chooses training level $\alpha$, then the adaptation loss becomes: 

\begin{equation*} 
\ell(a_{x}, W_{0}(x), \alpha) = f(k) \cdot (c(\alpha) a_{x}^{2} +(\beta(\alpha) a_{x} - W_{0}(x))^{2}),
\end{equation*}

\noindent where $c(\alpha), \beta(\alpha) > 0$. In addition, consistent with the notion that training makes actions more effective, I also assume that $\beta'(\alpha) > 0$.

An illustrative case is when $c(\alpha)=1+\alpha$ and $\beta(\alpha)= \sqrt{1+\alpha}$, in which case $a_{x} = \frac{W_{0}(x)}{2\sqrt{1+\alpha}}$ and $l(W_{0}(x)) =  f(k)\frac{W_{0}(x)^{2}}{2}$. While the problem is unchanged when $\alpha=0$ (no training), choosing positive $\alpha$ allows for less variation in $a_{x}$ without lowering adaptation loss\footnote{The empirical results of \cite{adhvaryu_kala_nyshadham_2023_returns} motivate the assumption that training can improve coordination, as opposed to exclusively lowering privately-incurred action costs. In particular, \cite{adhvaryu_kala_nyshadham_2023_returns} interprets the gains due to improved teamwork from a soft-skills training program as substituting for managerial attention. The present framework captures this by allowing training to decrease the variation in worker actions; since lower residual variation decreases the marginal gain from a higher $\rho$, training and managerial attention naturally operate as substitutes in this case.}.  I refer to this specification of training as \emph{adaptation-loss neutral training}.

Framed in this way, it is clear that there is some possible gain from increasing training. In fact, I can replicate the findings from \cite{adhvaryu2023organizational} using the above illustrative specification: 

\begin{prop}  \label{prop:training}
Consider the model with training and a Gaussian process with covariance operator $\gamma C(x,x')$. Suppose that the objective is $C^{2}$ and that there exists an interior optimum, and further that the second-order condition for a strict local maximum holds. Then optimal team size is locally increasing in training whenever:  

\begin{equation*} 
\frac{\beta'(\alpha)}{\beta(\alpha)}c(\alpha)^{2} - \frac{1}{2} (\beta(\alpha)^{2} +2c(\alpha)) c'(\alpha) > 0;
\end{equation*}

\noindent and locally decreasing in training whenever the reverse inequality holds. 

Moreover, if $\beta(\alpha)=\sqrt{1+\alpha}$, $c(\alpha)=1+\alpha$ and $f''(k) \geq 0$, then a constant-factor increase in complexity raises optimal training and lowers optimal team size.
\end{prop}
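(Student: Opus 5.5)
The plan is to reduce the training model to the same one-dimensional tradeoff used for Theorem \ref{thm:optimalk}. The key observation is that training only rescales the team-state process and the adaptation loss, leaving the spectral structure of the attention problem unchanged.

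\emph{Reduction.} Fix $\alpha$ and minimize $c(\alpha)a_x^2+(\beta(\alpha)a_x-W_0(x))^2$. This gives
\[
a_x=s(\alpha)W_0(x),\qquad s(\alpha):=\frac{\beta(\alpha)}{c(\alpha)+\beta(\alpha)^2},
\]
and realized adaptation loss $f(k)A(\alpha)W_0(x)^2$ with $A(\alpha):=c/(c+\beta^2)$. The team state is the average action, so its covariance is $s(\alpha)^2\gamma C(ku,ku')$ on $[0,1/k]$. The KL eigenvalues are therefore $s^2\gamma\lambda_m/k$.

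By the scaling argument of Section \ref{sec:teamsize}, the attention constraint (\ref{eq:attention}) depends only on the ratios $\lambda_m/\mu^*$. Hence $n$ and $\mu^*$, measured relative to $(\lambda_m)$, are invariant to $k$, $\alpha$ and $\gamma$. Writing $\Phi:=n\mu^*+\sum_{m>n}\lambda_m$ and $\Lambda:=\sum_m\lambda_m$, and dividing by $\gamma$, the organization minimizes
\[
L(k,\alpha;\gamma)=\frac{s(\alpha)^2\Phi}{k}+f(k)A(\alpha)\Lambda+\frac{\alpha^2}{2\gamma}.
\]

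\emph{First claim.} Here "team size increasing in training" means $dk^*/d\alpha$, with $k^*(\alpha)$ defined by the interior first-order condition in $k$:
\[
k^2f'(k)=\frac{s(\alpha)^2}{A(\alpha)}\,\frac{\Phi}{\Lambda}=\frac{\beta^2}{c(c+\beta^2)}\,\frac{\Phi}{\Lambda}.
\]
The second-order condition makes $k^2f'(k)$ locally strictly increasing at the optimum, so the implicit function theorem applies. It gives $\operatorname{sign}(dk^*/d\alpha)=\operatorname{sign}\,\frac{d}{d\alpha}\log\frac{\beta^2}{c(c+\beta^2)}$. This log-derivative is
\[
\frac{2\beta'}{\beta}-\frac{c'}{c}-\frac{c'+2\beta\beta'}{c+\beta^2}.
\]
Multiplying by the positive factor $c(c+\beta^2)/2$, the $\beta\beta'c$ terms cancel, leaving exactly
\[
\frac{\beta'}{\beta}c^2-\tfrac12(\beta^2+2c)c',
\]
which is the displayed condition in the statement.

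\emph{Second claim.} With $\beta=\sqrt{1+\alpha}$ and $c=1+\alpha$ we get $s^2=1/(4(1+\alpha))$ and $A=1/2$. The condition evaluates to $(1+\alpha)/2-\tfrac32(1+\alpha)<0$, so $k^*$ is decreasing in $\alpha$. For the effect of $\gamma$ I would apply Topkis's theorem to $-L$ in the variables $(\alpha,-k,\gamma)$:
\begin{itemize}
\item The cross-partial $\partial^2(-L)/\partial\alpha\,\partial(-k)=\Phi/(4(1+\alpha)^2k^2)>0$.
\item $\gamma$ enters only through $-\alpha^2/(2\gamma)$, whose cross-partial with $\alpha$ is $\alpha/\gamma^2\ge 0$.
\item The cross-partial of $-L$ in $(-k,\gamma)$ is zero.
\end{itemize}
Thus $-L$ is supermodular, and the optimal $(\alpha,-k)$ is increasing in $\gamma$: training rises and team size falls. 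Convexity of $f$ ensures $k\mapsto L$ is well-behaved enough for monotone selections. To obtain strict monotonicity I would then use the strict cross-partial together with the local second-order condition, via a joint implicit-function computation.

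The main obstacle is justifying that $\Phi$ is genuinely independent of $(k,\alpha,\gamma)$ in the divided objective. This requires checking that the water level scales exactly with the common factor $s^2\gamma/(4k)$, so that $n$ is unchanged. I would verify it directly from the monotonicity of (\ref{eq:attention}) in $\mu$, as in the proof of Theorem \ref{thm:optimalk}.
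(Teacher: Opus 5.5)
Your proposal is correct. For the first claim you follow the paper's route: differentiate the interior first-order condition in $k$ with respect to $\alpha$. Your log-derivative of $s^2/A=\beta^2/(c(c+\beta^2))$ is just the paper's cross-partial $2\frac{r}{h}h'-r'$ divided by $r>0$ (the paper's $h,r$ are your $s,A$).

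The second claim is where you genuinely differ. The paper stacks the two first-order conditions and uses Cramer's rule: $d\alpha/d\gamma$ has the sign of $-U_\alpha>0$ once the Hessian is negative definite, which it verifies directly from $f''\ge 0$. Since $\gamma$ does not enter the $k$-condition, $k$ moves only through $\alpha$, and $U_{k\alpha}>0$ (the first claim) then gives $dk/d\gamma<0$. You instead divide by $\gamma$, so that $\gamma$ enters only through $\alpha^2/(2\gamma)$, and apply Topkis to $-L$ on the lattice of $(\alpha,-k)$. This buys global monotonicity that needs neither the second-order condition nor a unique optimum, but the monotonicity is only weak. For strictness you end up doing the paper's implicit-function computation anyway; in your normalization it reads $d\alpha/d\gamma=L_{kk}(\alpha/\gamma^2)/\det H>0$ and $dk/d\gamma=-L_{k\alpha}(\alpha/\gamma^2)/\det H<0$.

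Two details to add. First, the increasing-differences term $\alpha/\gamma^2$ is nonnegative only for $\alpha\ge 0$. Note that $\partial_\alpha(-L)=\frac{\Phi}{4k(1+\alpha)^2}-\frac{\alpha}{\gamma}>0$ whenever $\alpha\le 0$ (with $\Phi>0$ by Assumption \ref{assum:infinite}). This rules out such optima, justifies the restriction, and supplies the $\alpha^*>0$ needed for strictness. Second, you need not assume positive definiteness of $H$: $f''\ge 0$ gives $L_{kk}>0$ and $L_{kk}L_{\alpha\alpha}-L_{k\alpha}^2>0$, exactly as in the paper.
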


\noindent Recall that constant factor increases in volatility do not, by themselves, influence team size or hierarchy. However, in the above specification, choosing ``more training'' allows the firm to avoid incurring more adaptation loss while potentially mitigating an increase in the coordination loss that a constant-factor increase might introduce. Thus, an increase in volatility leads to an increase in training. But under adaptation-loss neutral training, training and hierarchy are substitutes. Intuitively, training provides one mechanism for the organization to lower coordination loss, while hierarchy expansion is another. Given the option of training, the organization need not expand hierarchy as much, and hence can enable workers to focus more on their individual task without relying on larger teams to assist with the attention problem. 

The condition I obtain also shows that such hierarchy contraction in response to complexity is \emph{not possible} when $c'(\alpha)=0$. To the extent that a change in the action costs is a short-run phenomenon, this provides a new channel for the main message of \cite{adhvaryu2023organizational}, that hierarchy and training are ``short run substitutes and long run complements.'' Still, as the main applications of interest are slightly different, I caution against overinterpreting this finding.

\subsection{Link Formation Costs} \label{sect:linkformation}

I briefly mention a very simple channel for constant-factor increases in volatility to directly influence team size, which is to introduce costs associated with choosing larger teams directly. In Appendix \ref{app:discrete}, I discuss a microfoundation in the discrete model which motivates having a cost of choosing teams of size $k$ be $ \kappa \frac{k-1}{2}.$

One could also apply the same cost function for group formation at higher levels of hierarchy. A natural interpretation of such link formation costs is that they reflect the need for communication within a given team. Notice that this interpretation provides a natural prediction that larger teams should emerge when $\kappa$ falls. To the extent that team size is inversely related to autonomy, this observation is consistent with the finding in \cite{bloom_garicano_sadun_vanreenen_2014_distinct} that communication-technology improvements are associated with more centralized decision-making. The following result shows what such costs imply about how team size responds to constant-factor changes in state volatility:

\begin{prop}  \label{prop:costlylink}
Suppose the cost of choosing teams of size $k$ is $\kappa \frac{k-1}{2}$, that $f$ is convex, and that the process $W_{0}$ is a mean-zero Gaussian process with covariance operator $\gamma C_{0}(x,x')$. Optimal team size is weakly increasing in $\gamma$, and strictly increasing at any $\gamma$ where the optimal team size is interior. 
\end{prop}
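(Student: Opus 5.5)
The objective is written using the spectral reduction of Section \ref{sec:spectral}, with the link cost subtracted. Under covariance $\gamma C_0$ the task eigenvalues are $\gamma\lambda_m$. The water level in the task-state normalization is then $\gamma\mu^{*}$, where $\mu^{*}$ and the active-set size $n$ are determined by the $\gamma=1$ problem. The reason is that the attention constraint (\ref{eq:attention}) depends only on the ratios $\lambda_m/\mu^{*}$, and both $n$ and these ratios are independent of $k$ and $\gamma$. The organization therefore minimizes
\begin{equation*}
L(k,\gamma)=\gamma\left(\frac{A}{4k}+\frac{f(k)}{2}B\right)+\kappa\frac{k-1}{2},
\end{equation*}
where $A=n\mu^{*}+\sum_{m>n}\lambda_m$ and $B=\sum_m\lambda_m$. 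Both $A$ and $B$ are positive constants independent of $k$ and $\gamma$. Without the link cost, $\gamma$ factors out, which recovers the invariance noted in Section \ref{subsec:noattention}. The link cost is the only term that breaks this homogeneity.

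Dividing by $\gamma$ shows that minimizing $L$ is equivalent to minimizing
\begin{equation*}
g(k)+\frac{\kappa}{\gamma}\frac{k-1}{2},\qquad g(k)=\frac{A}{4k}+\frac{f(k)}{2}B,
\end{equation*}
over $k\ge 1$. The function $g$ is strictly convex on $[1,\infty)$, because $A/(4k)$ is strictly convex and $f$ is convex. Set $s=\kappa/\gamma$. The objective $g(k)+s(k-1)/2$ has decreasing differences in $(k,-s)$: its cross-partial in $(k,s)$ equals $1/2>0$, so raising $s$ raises the marginal cost of $k$. By Topkis's theorem, the optimal $k$ is weakly decreasing in $s$, and hence weakly increasing in $\gamma$. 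Strict convexity makes the minimizer unique, and coercivity ($f\to\infty$) ensures that a minimizer exists. Weak monotonicity of the unique argmin follows.

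For strictness at an interior optimum, use the first-order condition $g'(k^{*})+\kappa/(2\gamma)=0$. On a neighborhood of an interior optimum, $g''$ may be zero only if both terms vanish, which is impossible because $A/(2k^3)>0$. Hence $g''(k^{*})>0$. If $f$ is only convex and not twice differentiable, I would argue directly instead: $g'$ is strictly increasing, so a strict decrease in the target value $-\kappa/(2\gamma)$ strictly lowers the solution of $g'(k)=-\kappa/(2\gamma)$. Because $\kappa/(2\gamma)$ strictly decreases as $\gamma$ rises, the solution $k^{*}$ strictly increases.

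The main point requiring care is the first step: verifying that the $\gamma$-scaling leaves the active set $n$ and the normalized water level unchanged, so that $A$ is exactly linear in $\gamma$. This follows from the same homogeneity argument the paper uses to show $\mu(k)=\mu^{*}/(4k)$. In (\ref{eq:attention}), scaling every eigenvalue by $\gamma$ is absorbed by scaling $\mu$ by $\gamma$, so the unique solution scales in the same way. Once this is in place, the rest is a one-dimensional monotone comparative statics exercise. If $\kappa=0$, the claim reduces to the weak (constant) case.
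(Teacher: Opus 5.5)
Your proof is correct and follows the paper's route. In both, the spectral scaling reduces the objective to $\gamma U(k)-\kappa\frac{k-1}{2}$ with $U$ independent of $\gamma$; Topkis then gives weak monotonicity, and the first-order condition with strict convexity gives strictness at interior optima. The one difference is a minor convenience: dividing through by $\gamma$ gives you increasing differences in $(k,\kappa/\gamma)$ on all of $[1,\infty)$, so you skip the paper's preliminary step of confining the optimum to $[1,\tilde{k}_{0}]$ (where $U$ is increasing), and you also check explicitly the linear scaling of the water level in $\gamma$ that the paper takes for granted.
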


\noindent Once link formation costs are present, the marginal cost of increasing team size is no longer simply the adaptation costs. The organization's payoff is no longer homogeneous of degree one in volatility. In this case, volatility increases change the marginal benefit from increased team size, while keeping the marginal costs constant.

\section{Conclusion}

This paper has shown that the way organizational hierarchies respond to changes in complexity depends on the nature of the change itself. The spectral decomposition that I outline shows that not all changes in the process variance should have the same impact on team size. In some cases, increases in variance might be associated with larger common shocks, whereas others might be associated with more idiosyncrasies across locations.  These changes in complexity appear qualitatively different, and I have shown that they yield opposite predictions in terms of organizational changes. The language of spectral theory provides a clear way of distinguishing them and illustrating why these different predictions emerge---essentially, because the implication depends on the interaction with the attention problem. 

The view this paper advances is that horizontal complexity that is \emph{locally concentrated} is best understood in terms of ``jaggedness'' which reflects other properties of the stochastic process rather than variance (which may also reflect larger shocks which are common across the organization). A central observation is that certain natural modifications of the state process the organization faces are associated both with more jaggedness and larger teams. At the same time, I illustrated that this framework can also speak to other ways organizations may manage horizontal complexity. 

These simple insights were obtained by appealing to the continuum formulation of an organization. When organizations consist of finitely many individuals, simple comparative statics might not be feasible due to divisibility issues (i.e., that the number of workers must be divisible by team size). My formulation circumvents these, at which point the comparative statics are reduced to simple calculus. The key tool for this derivation was the appeal to the representation of the stochastic process in terms provided by the Karhunen-Lo\'eve theorem, a concept which may very well have further applications in economics.

\newpage

\appendix 

\section{Discrete Approximation} 
\label{app:discrete}

\noindent Here I articulate a discrete-location version of the main model. The purpose is to provide a microfoundation for the objective the organization faces in terms of a problem which is easier to interpret economically. I show that the objective in the main paper is approximated by the discrete version as long as the discretization is sufficiently fine. In particular, I illustrate why the covariance operator with $k >1$ is as claimed in the text, and in addition motivate the functional form by which $l_{x}(a_{x}, W_{0}(x))$ as scaling proportionally to a function of team size. 

In the discrete version, the set of worker locations is $X_{g}^{n}=\{0,\frac{1}{n}, \ldots, \frac{n-1}{n},1\}$. There is a state vector whose distribution is multivariate normal, i.e., $(W_{0}^{n}(x))_{x \in X_{g}^{n}} \sim \mathcal{N}(0,\Sigma^{n})$, where, for $i,j=0,\ldots,n$, the $(i+1,j+1)$th entry of $\Sigma^{n}$ is $C(i/n,j/n)$, with $C$ as defined in the main text.

For simplicity, I assume that the group size $k$ divides $n+1$, a restriction that is harmless for the continuum limit since, for fixed integer $k$, one can consider sequences of $n$ such that $k \mid n+1$. Thus, there are $J_n=(n+1)/k$ teams, and team $j=0,\ldots,J_n-1$ consists of workers $jk,jk+1,\ldots,jk+k-1$. I associate team $j$ with the location $j/n$, so that teams are located at
$X_t^n=\left\{\frac{j}{n}:j=0,\ldots,J_n-1\right\}.$ Note that the last team is at location $\frac{J_{n}-1}{n} = \frac{n+1-k}{kn}$, which converges to $1/k$ as $n \rightarrow \infty$. Using the solution for individual actions from the main text, I define $W_{1}^{n}(j/n)$ to be the average action of team $j$:

\begin{equation*} 
W_{1}^{n}(j/n)=\frac{1}{k} \sum_{r=0}^{k-1} \frac{1}{2} W_{0}^{n}\left( \frac{jk+r}{n} \right).\end{equation*}

\noindent In particular, $W_{1}^{n}$ is a process defined on the $J_n=(n+1)/k$ team locations in $X_t^n$. Then, the covariance between $W_{1}^{n}(x)$ and $W_{1}^{n}(x')$ for $x=j/n$ and $x'=j'/n$ is: 

\begin{equation*} 
\text{Cov} \left( W_{1}^{n}(j/n), W_{1}^{n}(j'/n) \right)= \frac{1}{4k^{2}} \sum_{r=0}^{k-1} \sum_{s=0}^{k-1} C \left(\frac{jk+r}{n}, \frac{j'k+s}{n} \right). 
\end{equation*}

\noindent Now consider this expression in the $n \rightarrow \infty$ limit. Along
any sequence with $j/n \rightarrow x$ and $j'/n \rightarrow x'$, for each fixed
$r,s \in \{0,\ldots,k-1\}$,
\[
\frac{jk+r}{n} \rightarrow kx
\qquad \text{and} \qquad
\frac{j'k+s}{n} \rightarrow kx'.
\]
By continuity of $C$, $C \left(\frac{jk+r}{n}, \frac{j'k+s}{n} \right)
\rightarrow C(kx,kx').$ Since the double sum contains $k^{2}$ terms, the covariance between
$W_1^n(j/n)$ and $W_1^n(j'/n)$ converges to $\frac{1}{4}C(kx,kx').$ This motivates the continuum team-state process $W_1$, supported on
$[0,1/k]$, with covariance kernel $\frac{1}{4}C(kx,kx')$. For example, if teams are of size $2$, the team-state process is supported on an interval of length $1/2$, and the original scale is compressed by a factor of $2$.
Under this normalization, headquarters coordinates with a mass $1/k$ of
team-level objects rather than a unit mass of individual workers, capturing the
assumption that within-team communication is relatively efficient. 

The same limiting team-state process can be obtained for noninteger $k$ by allowing, within any subinterval, some fraction of teams to have the integer size below $k$, while the remainder have the integer size above $k$, in proportions such that the average team size is $k$. Arranging these teams so that these proportions hold locally makes the mass of teams converge to $1/k$ and yields the same limiting covariance process. This construction is analogous to the pointwise-matching approximation in \citet{FuchsGaricanoRayo2015}.

I now consider the microfoundations for the adaptation costs. For this, I consider a slightly different but related setting, where each worker instead chooses a \emph{vector} of actions:

\begin{equation*} 
-a_{x,x}^{2} -(a_{x,x}-W_{0}(x))^{2} + \beta \sum_{x' \in T(x), x' \neq x} -a_{x,x'}^{2} -(a_{x,x'}-W_{0}(x'))^{2},
\end{equation*}

\noindent where $T(x)$ is the team that $x$ belongs to. Thus, the idea is that the worker chooses a ``primary action'' to match their own state $W_{0}(x)$, as well as ``coordinating actions'' to match the state of other team members. Under the assumption that all team members observe the same state, $a_{x,x'} = \frac{W_{0}(x')}{2}$. Now, for a fixed $k$,  the distance between workers in the same team can be made arbitrarily small as the discretization becomes arbitrarily fine. Since $C(x,x')$ is continuous, $\mathbb{E}\left[\left(W_0(x)-W_0(x')\right)^2\right]\rightarrow 0$ as $x'\rightarrow x$. 

Putting this together, the expected within-team loss converges to $(1+ \beta(k-1)) \mathbb{E}[W_{0}(x)^{2}]/2$, so the result is as if the individual loss were scaled by $f(k)=1+\beta(k-1)$. This microfoundation therefore motivates the linear specification; allowing $f(k)$ to vary nonlinearly captures the possibility that the importance of coordination may itself vary with team size---for instance if taking multiple coordinating actions is harder than taking a single action. For noninteger $k$, I interpret the smooth function $f(k)$ in the main text as a reduced-form measure of average adaptation cost rather than mechanically deriving its values between integers from this construction.

Lastly, I discuss the exogenous cost of group size from Section \ref{sect:linkformation}. This can be understood in similar terms as other papers in network design (e.g., \cite{MatouschekPowellReich2025}) whereby adding a link imposes some exogenous cost $\kappa$. A group with $k$ members is a clique of size $k$, so that the number of edges formed is $k\cdot(k-1)/2$; the number of groups is $\frac{n+1}{k}$, yielding a total cost of $\kappa \cdot \frac{(n+1)(k-1)}{2}$, and an average cost of $\kappa \cdot \frac{k-1}{2}$ which is the specification presented in Section \ref{sect:linkformation}.  If the cost of a group of size $k$ were instead $c_{\kappa}(k(k-1)/2)$, the average cost would then be $\frac{1}{k} c_{\kappa}(k(k-1)/2)$.  

\newpage

\section{Proofs}
The following proofs invoke the representation of the process $W_{1}(x)$ in terms of the Karhunen-Lo\'eve theorem \citep{GhanemSpanos1991}: 

\begin{equation*} 
W_{1}(x) = \sum_{j=1}^{\infty} Z_{j}\psi_{j}(x) \sqrt{\nu_{j} },
\end{equation*}

\noindent where $Z_{1}, Z_{2}, \ldots$ are independent standard normal random variables. In particular, $(\psi_{j}(x))_{j=1}^{\infty}$ form an orthonormal eigenbasis of the positive-eigenvalue support subspace of the covariance operator, where $\nu_{j}$ is the eigenvalue corresponding to eigenfunction $\psi_{j}$.

\begin{lemma} \label{lemm:IMApprox}
Let $S$ denote any admissible signal, and let $D= \{d_{1},d_{2}, \ldots, \} \subset [0,1/k]$ be a countable collection of locations whose signal values generate $\mathcal{F}_{S}$. Then for every finite $l$ and $q$, 

\begin{equation*} 
I(\{\sqrt{\nu_{1}}Z_{1}, \ldots, \sqrt{\nu_{l}}Z_{l}\}; \{S(d_{1}), \ldots, S(d_{q}) \}) \leq I(W_{1} ; S).
\end{equation*}
\end{lemma}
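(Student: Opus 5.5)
The plan is to approximate the modes $\sqrt{\nu_j}Z_j$ by finite linear combinations of point evaluations of $W_1$. Then the data-processing inequality and the lower semicontinuity of mutual information pass the bound to the limit. The right-hand side $I(W_1;S)$ is, by (\ref{eq:MIDef}), a supremum over finite meshes $x_1,\ldots,x_m$ of $I(\{W_1(x_i)\};\{S(x_i)\})$. The left-hand side pairs $l$ modes against $q$ signal coordinates at locations $d_1,\ldots,d_q$, which need not coincide with the locations at which $W_1$ is sampled.

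\textbf{Step 1 (monotonicity in the mesh).} Adding locations to both arguments can only increase mutual information, by the chain rule. Hence for any finite set $E\supseteq\{d_1,\ldots,d_q\}$, the quantity $I(\{W_1(x)\}_{x\in E};\{S(x)\}_{x\in E})$ is at most $I(W_1;S)$. The same holds for $I(\{W_1(x)\}_{x\in E'};\{S(d_1),\ldots,S(d_q)\})$ for any finite $E'$, since we may take the union mesh $E=E'\cup\{d_i\}$ and then drop the extra $S$-coordinates.

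\textbf{Step 2 (Riemann-sum approximation of modes).} Fix a sequence of finite meshes $E'_N$ containing $\{d_1,\ldots,d_q\}$ and becoming dense in $[0,1/k]$. Define
$Y^N_j=\sum_{x\in E'_N}\Delta_x\,W_1(x)\psi_j(x)$, a Riemann sum for $\int_0^{1/k}W_1(x)\psi_j(x)\,dx=\sqrt{\nu_j}Z_j$. By continuity of the kernel and of the eigenfunctions (Mercer), $Y^N_j\to\sqrt{\nu_j}Z_j$ in $L^2$, hence in probability, jointly for $j\le l$. Each vector $(Y^N_1,\ldots,Y^N_l)$ is a deterministic function of $\{W_1(x)\}_{x\in E'_N}$. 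By data processing and Step 1,
\[
I(\{Y^N_1,\ldots,Y^N_l\};\{S(d_1),\ldots,S(d_q)\})\le I(W_1;S)\quad\text{for all }N.
\]

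\textbf{Step 3 (passing to the limit).} This is the main obstacle, because mutual information is not continuous under convergence in probability. It is, however, lower semicontinuous under weak convergence of joint laws, by the Donsker--Varadhan/Gelfand--Yaglom variational representation: mutual information is a supremum of continuous functionals of the joint law. The pair $((Y^N_j)_{j\le l},(S(d_i))_{i\le q})$ converges in probability to $((\sqrt{\nu_j}Z_j)_{j\le l},(S(d_i))_{i\le q})$, because the second block is held fixed. It therefore converges in distribution, so
\[
I(\{\sqrt{\nu_j}Z_j\}_{j\le l};\{S(d_i)\}_{i\le q})\le\liminf_N I(\{Y^N_j\};\{S(d_i)\})\le I(W_1;S).
\]
No Gaussianity of $S$ is needed at this stage, which matters because $S$ is arbitrary. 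If lower semicontinuity is inconvenient to invoke directly, the fallback is to write mutual information as a supremum over finite partitions of the KL divergence between the joint law and the product of marginals. For each fixed partition with boundaries of limit-measure zero, the partition-level divergence converges as $N\to\infty$. Taking the supremum over partitions afterward then gives the same inequality.
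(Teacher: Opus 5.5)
Your proposal is correct and follows essentially the same route as the paper's proof. Both arguments approximate the modes by Riemann sums on meshes enlarged to contain $d_1,\ldots,d_q$, bound the approximants via data processing and the mesh-supremum definition (\ref{eq:MIDef}), and pass to the limit using lower semicontinuity of mutual information under joint weak convergence. The only difference is that the paper cites this last fact from Dupuis--Ellis (Lemma 1.4.3(b)) rather than deriving it from the Donsker--Varadhan representation.
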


\begin{proof}[Proof of Lemma \ref{lemm:IMApprox}] Note that \begin{equation*} \sqrt{\nu_{j}} Z_{j} = \int_{0}^{1/k} W_{1}(x) \psi_{j}(x)dx.\end{equation*} Since $W_{1}$ is mean-square continuous and $\psi_{j}$ is continuous by continuity of $C(x,x')$, this integral can be approximated by Riemann sums. Thus, letting $(M_{m})$ denote a sequence of meshes which arise from partitions whose maximum cell width converges to zero, we let $A^{m}=(A_{1}^{m}, \ldots, A_{l}^{m})$ denote the vector of Riemann approximations such that $A_{j}^{m}$ approximates $\sqrt{\nu_{j}} Z_{j}$; furthermore, each $A_{j}^{m}$ is a deterministic linear combination of values of $\{W_{1}(x) : x \in M_{m}\}$, and $A^{m} \rightarrow^{L^{2}} (\sqrt{\nu_{1}} Z_{1}, \ldots, \sqrt{\nu_{l}} Z_{l})$. 

Now, for each $m$, define the enlarged mesh $\hat{M}_{m,q} = M_{m} \cup \{d_{1}, \ldots, d_{q}\}$. Note that $A^{m}$ can be written as a function of realizations of $W_{1}(x)$ on $\hat{M}_{m,q}$; similarly $S^{q}=\{S(d_{1}), \ldots, S(d_{q})\}$ can be written as a function of $S$ realizations on $\hat{M}_{m,q}$. Thus, by the data processing inequality and (\ref{eq:MIDef}): 

\begin{equation*} 
I(A^{m}; \{S(d_{1}), \ldots, S(d_{q})\}) \leq I(\{W_{1}(x): x \in \hat{M}_{m,q}\}; \{S(x): x \in \hat{M}_{m,q} \}) \leq I(W_{1};S). 
\end{equation*}

\noindent  Now, since $A^{m}$ converges in $L^{2}$, the pair $(A^{m}, S^{q})$ converges in probability---and hence in distribution---to $((\sqrt{\nu_{1}} Z_{1}, \ldots, \sqrt{\nu_{l}} Z_{l}),S^{q})$. Since mutual information is lower semicontinuous under joint weak convergence \citep[Lemma 1.4.3(b)]{DupuisEllis1997}, it follows that  

\begin{equation*}
I(\{\sqrt{\nu_{1}}Z_{1}, \ldots, \sqrt{\nu_{l}}Z_{l}\}; S^{q})  \leq \liminf_{m \rightarrow \infty} I(A^{m}; S^{q}) \leq I(W_{1} ; S),
\end{equation*}

\noindent proving the Lemma.
\end{proof}

\begin{proof}[Proof of Proposition \ref{lem:RISol}]  Headquarters' problem is to find a stochastic process $Y$ satisfying: 

\begin{equation*} 
I(W_{1};Y) \leq \rho,
\end{equation*}

\noindent minimizing the coordination loss, which can be rewritten as:  

\begin{equation*} \mathbb{E} \left[ \int_{0}^{1/k}(W_{1}(x)-\mathbb{E}[ W_{1}(x) \mid Y] )^{2} dx \right]. \end{equation*} 

\noindent Since the Karhunen-Lo\'eve partial sums converge to $W_{1}(x)$ in mean integrated square with conditioning on $Y$ being a contraction in the $L^{2}$ norm, conditioning commutes with the limit so that $\mathbb{E}[W_{1}(x) \mid Y] = \sum_{j=1}^{\infty} \sqrt{\nu_{j}} \mathbb{E}[Z_{j} \mid Y] \psi_{j}(x)$, so that the coordination loss can also be written $\mathbb{E}\left[ \int_{0}^{1/k} \left( \sum_{j=1}^{\infty} \sqrt{\nu_{j}}(Z_{j} - \mathbb{E}[Z_{j} \mid Y]) \psi_{j}(x) \right)^{2} dx \right]$.

Orthonormality of the $\psi_{j}$s implies that $\int_{0}^{1/k} \psi_{j}(x)^{2}dx=1$ and $\int_{0}^{1/k} \psi_{j}(x)\psi_{l}(x)dx=0$. Applying Bessel-Parseval's identity \citep[Theorem 5.9 in][]{brezis2010functional} to unfactor the square inside of the integral, this expression is equal to: 

\vspace{-5mm}

\begin{equation} 
 \mathbb{E}\left[ \sum_{j=1}^{\infty} \nu_{j}(Z_{j} - \mathbb{E}[Z_{j} \mid Y])^{2}  \right], \label{eq:squares}
\end{equation}

\noindent Define:

\begin{equation*} 
b(x) := \mathbb{E}[W_{1}(x) \mid Y], \qquad \tilde{Y}_{j} : =  \sqrt{\nu_{j}} \mathbb{E}[Z_{j} \mid Y], 
\end{equation*}

\noindent noting that it also holds that $\tilde{Y}_{j} = \int_{0}^{1/k} b(x)\psi_{j}(x)dx$. I first consider an auxiliary problem where headquarters chooses functions of the form $\sum_{j=1}^{l} \tilde{Y}_{j} \psi_{j}(x)$, and where the vector $(\tilde{Y}_{1}, \ldots, \tilde{Y}_{l})$ is generated by $(Z_{1}, \ldots, Z_{l})$ and independent signal noise, and is therefore jointly independent of $Z_{l+1}, \ldots$. Denote any such choice by $Y_{l}$. In this auxiliary problem, headquarters seeks to minimize:

\begin{equation*} 
 \mathbb{E}\left[ \sum_{j=1}^{l} \nu_{j}(Z_{j} - \mathbb{E}[Z_{j} \mid Y_{l}])^{2}  \right] + \sum_{j=l+1}^{\infty} \nu_{j}. 
\end{equation*}

\noindent I mention that Mercer's theorem implies that $\sum_{j=1}^{\infty} \nu_{j} < \infty$, which is an upper bound on the objective.  In light of this observation, define $X_{l}(x)= \sum_{j=1}^{l} Z_{j}\psi_{j}(x)\sqrt{\nu_{j}}$.
 Now, I claim that for every $l$, it is possible to find $l$ points such that: 

\begin{equation*} 
A_{l}:=\begin{pmatrix} \psi_{1}(x_{1}) & \ldots& \psi_{l}(x_{1}) \\ \vdots & \ddots & \vdots \\ \psi_{1}(x_{l}) & \ldots & \psi_{l}(x_{l}) \end{pmatrix}
\end{equation*}

\noindent is invertible. The claim follows by induction; since $\psi_{1} \neq 0$, the $l=1$ case is immediate. Now, suppose this matrix is invertible for some $l-1$. Fix some instance of such $l-1$ points with $A_{l-1}$ denoting the corresponding matrix. Schur's formula implies that if no other points such that $A_{l}$ is invertible can be found, then: \begin{equation*} \psi_{l}(x)= (\psi_{1}(x), \ldots, \psi_{l-1}(x))A_{l-1}^{-1} \begin{pmatrix} \psi_{l}(x_{1}) \\ \vdots \\ \psi_{l}(x_{l-1}) \end{pmatrix}, \end{equation*}

\noindent contradicting that $\psi_{1}(x), \ldots, \psi_{l}(x)$ are linearly independent. 

Now, using the definition of $A_{l}$, the following identities obtain:

\begin{equation*} 
A_{l} \cdot \begin{pmatrix} \sqrt{\nu_{1}} Z_{1} \\ \vdots \\ \sqrt{\nu_{l}} Z_{l} \end{pmatrix} =\begin{pmatrix} X_{l}(x_{1}) \\ \vdots \\ X_{l}(x_{l}) \end{pmatrix}, ~~~~ A_{l} \cdot \begin{pmatrix}  \tilde{Y}_{1} \\ \vdots \\ \tilde{Y}_{l} \end{pmatrix} =\begin{pmatrix} Y_{l}(x_{1}) \\ \vdots \\ Y_{l}(x_{l}) \end{pmatrix}. 
\end{equation*}

Since $A_{l}$ is invertible, the evaluation vector $(Y_{l}(x_{1}), \ldots, Y_{l}(x_{l}))$ and the coefficient vector $(\tilde{Y}_{1}, \ldots, \tilde{Y}_{l})$ determine one another. 

I now show that $I(W_{1}; Y_{l})=I(\{\sqrt{\nu_{1}}Z_{1}, \ldots, \sqrt{\nu_{l}}Z_{l}\};\{\tilde{Y}_{1}, \ldots, \tilde{Y}_{l}\})$, so that I will be able to argue that the problem of choosing a process in the auxiliary problem is equivalent to the analogous problem with multivariate normal vectors.  Since $Y_{l}$ is finite-rank and $A_{l}$ is invertible, the values $Y_{l}(x_{1}), \ldots, Y_{l}(x_{l})$ generate $\mathcal{F}_{Y_{l}}$. Thus, in Lemma \ref{lemm:IMApprox}, take the countable generating collection $D$ whose first $l$ elements satisfy $d_{i}=x_{i}$ for $i=1, \ldots, l$ and apply this lemma with $q=l$, so that:

\begin{equation*} 
I(\{\sqrt{\nu_{1}}Z_{1}, \ldots, \sqrt{\nu_{l}}Z_{l}\};\{Y_{l}(x_{1}), \ldots, Y_{l}(x_{l})\}) \leq I(W_{1}; Y_{l}).
\end{equation*}

\noindent Using the invertibility of $A_{l}$, this inequality holds if and only if:

\begin{equation*} 
I(\{\sqrt{\nu_{1}}Z_{1}, \ldots, \sqrt{\nu_{l}}Z_{l}\};\{\tilde{Y}_{1}, \ldots, \tilde{Y}_{l}\}) \leq I(W_{1}; Y_{l}).
\end{equation*}

To obtain the reverse inequality, consider any finite mesh $M= \{x_{1}, \ldots, x_{m}\}$. Write: 
\begin{equation*} 
\begin{pmatrix}W_{1}(x_{1}) \\ \vdots\\\ W_{1}(x_{m}) \end{pmatrix}
=B_{m} \cdot \begin{pmatrix}\sqrt{\nu_{1}}Z_{1} \\ \vdots \\ \sqrt{\nu_{l}}Z_{l}\end{pmatrix}+\begin{pmatrix}R_{1} \\ \vdots\\ R_{m}\end{pmatrix},
\end{equation*}
\noindent where the $i,j$th entry of $B_{m}$ is $\psi_{j}(x_{i})$ and $R_{i} = \sum_{j=l+1}^{\infty} \sqrt{\nu_{j}} \psi_{j}(x_{i}) Z_{j}$. Note further that $\{Y_{l}(x_{1}), \ldots, Y_{l}(x_{m})\}$ is a deterministic function of $\{\tilde{Y}_{1}, \ldots, \tilde{Y}_{l}\}$. The data processing inequality thus implies that: 

\begin{multline*} 
I(\{W_{1}(x_{1}), \ldots, W_{1}(x_{m})\}; \{Y_{l}(x_{1}), \ldots, Y_{l}(x_{m})\}) \\ \leq I(\{\sqrt{\nu_{1}}Z_{1}, \ldots, \sqrt{\nu_{l}}Z_{l}, R_{1}, \ldots, R_{m}\}; \{\tilde{Y}_{1}, \ldots, \tilde{Y}_{l}\}).
\end{multline*}

But since the vector $(R_{1}, \ldots, R_{m})$ is always jointly independent of $(\sqrt{\nu_{1}}Z_{1}, \ldots, \sqrt{\nu_{l}}Z_{l}, \tilde{Y}_{1}, \ldots, \tilde{Y}_{l})$, the chain rule implies: 

\begin{equation*} 
I(\{\sqrt{\nu_{1}}Z_{1}, \ldots, \sqrt{\nu_{l}}Z_{l}, R_{1}, \ldots, R_{m}\}; \{\tilde{Y}_{1}, \ldots, \tilde{Y}_{l}\})=I(\{\sqrt{\nu_{1}}Z_{1}, \ldots, \sqrt{\nu_{l}}Z_{l}\}; \{\tilde{Y}_{1}, \ldots, \tilde{Y}_{l}\}).
\end{equation*}

\noindent Taking the supremum over all meshes and using the previous two displayed expressions yields that $I(W_{1}; Y_{l}) \leq I(\{\sqrt{\nu_{1}}Z_{1}, \ldots, \sqrt{\nu_{l}}Z_{l}\}; \{\tilde{Y}_{1}, \ldots, \tilde{Y}_{l}\})$; since the above showed the reverse inequality, we have this holds with equality. 

Furthermore, since $\tilde{Y}_{1}, \ldots, \tilde{Y}_{l}$ determine $Y_l(x)$,
headquarters achieves the same objective if instead selecting the posterior mean
process
\begin{equation*}
\widehat{Y}_{l}(x) =\sum_{j=1}^{l} S_j\psi_j(x),
\text{ where }
S_j:=\sqrt{\nu_j}\mathbb{E}[Z_j \mid \tilde{Y}_{1}, \ldots, \tilde{Y}_{l}].
\end{equation*}
Indeed, $(S_1,\ldots,S_l)$ is a function of
$(\tilde{Y}_{1}, \ldots, \tilde{Y}_{l})$, so the data processing inequality
implies that
\begin{equation*}
I(\{Z_{1}\sqrt{\nu_{1}}, \ldots, Z_{l}\sqrt{\nu_l}\}; \{\tilde{Y}_{1}, \ldots, \tilde{Y}_{l}\})
\geq
I(\{Z_1\sqrt{\nu_1}, \ldots, Z_l\sqrt{\nu_l}\}; \{S_1, \ldots, S_l\}).
\end{equation*}

Moreover, for each $j$, $\mathbb{E}[Z_j\sqrt{\nu_j}\mid S_1,\ldots,S_l] = \mathbb{E}\left[
\mathbb{E}[Z_j\sqrt{\nu_j}\mid \tilde Y_1,\ldots,\tilde Y_l]
\mid S_1,\ldots,S_l \right]=S_j$ by the law of iterated expectations. Applying the process--vector information equality established above to $Y_{l}$ and $\widehat{Y}_{l}$ gives: 

\begin{multline*} 
I(W_{1}; \widehat{Y}_{l}) = I(\{\sqrt{\nu_{1}}Z_{1}, \ldots, \sqrt{\nu_{l}}Z_{l}\};\{S_{1}, \ldots, S_{l}\}) \\ \leq I(\{\sqrt{\nu_{1}}Z_{1}, \ldots, \sqrt{\nu_{l}}Z_{l}\};\{\tilde{Y}_{1}, \ldots, \tilde{Y}_{l}\}) =I(W_{1}; Y_{l}).
\end{multline*}

\noindent Thus, if $Y_{l}$ satisfies the original information constraint on the signal process, then $\widehat{Y}_{l}$ also satisfies it. Moreover, the iterated-expectations argument above shows that the two processes induce the same posterior mean and hence the same objective. 

Thus, it is without loss in the auxiliary problem to instead choose a finite signal vector $(S_{1}, \ldots,S_{l})$, such that $I(\{\sqrt{\nu_{1}}Z_{1}, \ldots, \sqrt{\nu_{l}}Z_{l}\};\{S_{1}, \ldots, S_{l}\}) \leq \rho$ to minimize \begin{equation*} \mathbb{E}\left[ \sum_{j=1}^{l} \nu_{j}(Z_{j} - \mathbb{E}[Z_{j} \mid S_{1}, \ldots, S_{l}])^{2}  \right],\end{equation*} a finite-dimensional problem. Proposition 1 in \cite{KoszegiMatejka2020} implies that the solution involves $S_{1}, \ldots, S_{l}$ being independent Gaussian variables, where $S_{i}$ is correlated with $Z_{i}$ and no other variable, whose solution is to have posterior variance on the $i$th coordinate equal to: 

\begin{equation*} 
\min \left(1, \frac{\gamma}{2\nu_{i}} \right),
\end{equation*}

\noindent for $\gamma$ equal to the Lagrange multiplier on the mutual information constraint. I claim there exists some finite $\bar{n}$ such that the set of modes receiving information and their optimal signal precisions are unchanged for all $n \geq \bar{n}$. By the Spectral Theorem for compact operators, $\nu_{i} \rightarrow 0$ as $i \rightarrow \infty$. On the other hand, letting $\gamma^{n}$ denote the Lagrange multiplier, $\gamma^{n}$ satisfies:

\begin{equation*} 
\rho = \frac{1}{2} \sum_{i=1}^{n} \max\{0, \log(2 \nu_{i}/\gamma^{n}) \}.
\end{equation*}

This identity implies $\gamma^{n}$ is weakly increasing in $n$. Indeed, there are two cases that can arise when increasing the number of terms on the right-hand side from $n$ to $n+1$: either (i) $\nu_{n+1} \leq \frac{\gamma^{n}}{2}$, in which case $\gamma^{n+1} = \gamma^{n}$, or (ii) the right-hand side would strictly increase if $\gamma^{n+1}=\gamma^{n}$, in which case the constraint is restored at some $\gamma^{n+1}>\gamma^{n}$. Therefore, let $\bar{n}$ be such that $\gamma^{\bar{n}} \geq 2 \nu_{\bar{n}+1}$; in this case, increasing $n$ further does not change the solution, since each $S_{k}$ is chosen to be uncorrelated with $Z_{k}$ for $k$ sufficiently large. Indeed, some $\bar{n}$ must exist, since $\nu_{n} \rightarrow 0$ and $\gamma^{n}$ is weakly increasing in $n$, with $\gamma^{1} = 2 \nu_{1} e^{-2\rho}>0$. 

The characterization above implies that there exists $\bar{n} < \infty$ such that, for every $l \geq \bar{n}$, the solution to the $l$-mode auxiliary problem acquires information only about the first $\bar{n}$ modes. Thus, there exists an optimal signal process that is unchanged for all $l \geq \bar{n}$, and the value of the auxiliary problem, $V_{l}$, satisfies $V_{l} = V_{\bar{n}}$ for all $l \geq \bar{n}$.

I now conclude that the finite-mode solution remains a solution to headquarters' unrestricted attention problem. Let $V_{l}$ denote the negative of the minimum coordination loss in the $l$-mode auxiliary problem (which, notably, includes the loss from unattended tail modes), and let $V$ denote the negative of the minimum coordination loss in the unrestricted problem. Clearly: 

\begin{equation} 
V_{l} \leq V. \label{eq:lblim}
\end{equation}

\noindent Indeed, $V_{l}$ is obtained from an optimization problem that adds constraints to $V$, and can hence only be lower than the unconstrained objective. I now show that $V \leq V_{l} + \sum_{j=l+1}^{\infty} \nu_{j}.$ To see this, consider any admissible $Y$ satisfying $I(W_{1};Y) \leq \rho$, and let $D=\{d_{1}, d_{2}, \ldots \}$ denote a countable collection of points whose signal values generate $\mathcal{F}_{Y}$. Define $Y^{q} = \{Y(d_{1}), \ldots, Y(d_{q})\}$, and note that by Lemma \ref{lemm:IMApprox}, for every finite $q$, $I(\{\sqrt{\nu_{1}} Z_{1}, \ldots, \sqrt{\nu_{l}} Z_{l}\}; Y^{q}) \leq I(W_{1}; Y) \leq \rho$.  Thus, $Y^{q}$ is feasible as a signal in the finite-dimensional attention problem that involves the first $l$ modes, meaning that $\mathbb{E}\left[\sum_{j=1}^{l} \nu_{j}(Z_{j} - \mathbb{E}[Z_{j} \mid Y^{q}] )^{2} \right]$ is bounded below by the optimal first-$l$-mode loss from water filling. But since the sigma-fields $\sigma(Y^{q})$ increase to $\mathcal{F}_{Y}$, the martingale convergence theorem implies $\mathbb{E}[Z_{j} \mid Y^{q}] \rightarrow^{L^{2}} \mathbb{E}[Z_{j} \mid Y]$, meaning that: 

\begin{equation*} 
\mathbb{E}[\sum_{j=1}^{l} \nu_{j}(Z_{j} - \mathbb{E}[Z_{j} \mid Y^{q}])^{2}] \rightarrow \mathbb{E}[\sum_{j=1}^{l} \nu_{j}(Z_{j} - \mathbb{E}[Z_{j} \mid Y])^{2}], 
\end{equation*}

\noindent so that the unrestricted signal $Y$ also cannot outperform water-filling on the first $l$ modes. 

To complete the argument, let $L_{l}^{*}$ denote the minimum residual loss from the first $l$ modes in the finite-dimensional water filling problem. Thus, by the definition of $V_{l}$, $V_{l}=-(L_{l}^{*} + \sum_{j=l+1}^{\infty} \nu_{j})$. On the other hand, in the unrestricted problem, the loss in the first $l$ modes is at least $L_{l}^{*}$ and the loss on the remaining modes is nonnegative.  Since $Y$ was arbitrary, the unrestricted value therefore satisfies: 

\begin{equation*} 
V \leq -L_{l}^{*} = V_{l} + \sum_{j=l+1}^{\infty} \nu_{j}. 
\end{equation*}

\noindent as claimed. Since $\sum_{j=l+1}^{\infty}\nu_j \rightarrow 0$,

\begin{equation*} 
V \leq \liminf_{l \rightarrow \infty} V_{l}.
\end{equation*}

\noindent Combining this inequality with (\ref{eq:lblim}) gives
$V \leq \liminf_{l\to\infty}V_l \leq \limsup_{l\to\infty}V_l \leq V,$
and hence $V_{l} \rightarrow V$. But the characterization above implies that there is
some fixed $\bar{n}$ such that, for all $l\geq \bar{n}$, the solution to the
$l$-mode problem uses only the first $\bar{n}$ modes. Therefore
$V_l=V_{\bar{n}}$ for all $l \geq \bar{n}$, so $V=V_{\bar{n}}$. But the $\bar{n}$-mode finite-dimensional problem has an optimal Gaussian signal $(S_{1}, \ldots, S_{\bar{n}})$ given by the water-filling characterization. Because $\widehat{Y}_{\bar{n}}$ is a finite sum of continuous deterministic eigenfunctions with measurable Gaussian coefficients, it is jointly measurable and separable; the process-vector information equality shows that it satisfies the information constraint. Since it attains $V_{\bar{n}}=V$, it is hence optimal in the unrestricted problem. Set

\begin{equation*} 
(\alpha_{1}(x), \ldots, \alpha_{\bar{n}}(x)):= (\psi_{1}(x), \ldots, \psi_{\bar{n}}(x)) A_{\bar{n}}^{-1}. 
\end{equation*}

It follows that $\widehat{Y}_{\bar{n}}(x)=\sum_{i=1}^{\bar{n}} \alpha_{i}(x) \widehat{Y}_{\bar{n}}(x_{i})$, and in particular that the optimal posterior mean is determined by its value at the finite set $\{x_{1}, \ldots, x_{\bar{n}}\}$, as claimed. \end{proof}

\medskip

\begin{proof}[Proof of Theorem \ref{thm:optimalk}] 

Following the steps outlined in the main text and substituting $\mu(k) = \frac{\mu^{*}}{4k}$ into (\ref{eq:solvedobj}), the relevant objective is: 

\begin{equation*} 
- \frac{1}{4k}\left(n \mu^{*} +\sum_{m=n+1}^{\infty}   \lambda_{m} \right)- \left(\frac{f(k)}{2}\right)\sum_{m=1}^{\infty} \lambda_{i}.
\end{equation*}

\noindent Since $f(k)$ is differentiable and positive, fixing $(\lambda_{i})_{i=1}^{\infty}$, the derivative is proportional to:

\begin{equation*} 
\frac{1}{4k^{2}} \frac{\left(n\mu^{*} +\sum_{m=n+1}^{\infty}   \lambda_{m} \right)}{\sum_{m=1}^{\infty} \lambda_{i}}-\frac{f'(k)}{2}
\end{equation*}

\noindent The function $-\frac{1}{4k}a-f(k)$ satisfies strictly increasing differences and is strictly concave in $k$; thus, strict decreases in $a$ yield strict decreases in the optimal value of $k$. To prove the comparative static, I show that increases in $\rho$ correspond to decreases in $\mu^{*}$ and $\frac{n \mu^{*} + \sum_{m=n+1}^{\infty} \lambda_{m}}{\sum_{m=1}^{\infty} \lambda_{m}}$. Doing so proves the comparative static.

That $\mu^{*}$ decreases in $\rho$ follows immediately from differentiating the entropy constraint with respect to $\rho$: writing $\mu$ as a function of $\rho$ and treating the $\lambda_{m}$s as a constant, 

\begin{equation*} 
1= -\frac{n}{2\mu^{*}} \frac{\partial \mu^{*}}{\partial \rho} \Rightarrow \frac{\partial \mu^{*}}{\partial \rho} < 0. 
\end{equation*}

\noindent Furthermore, $\mu^{*}$ is differentiable provided the active set in the attention problem does not change; if the active set does change, then still $\mu$ is continuous in $\rho$ since the contribution to the entropy term is 0 if $\mu^{*}=\lambda_{i}$ for some $i$. Thus, $\mu^{*}$ is everywhere decreasing. Also, $\mu^{*} \rightarrow 0$ as $\rho \rightarrow \infty$, since the attention constraint must hold with equality and if $\mu^{*} > \underline{\mu} > 0$ then the right-hand side (\ref{eq:attention}) would be bounded even as the left-hand side is unbounded. Thus, the numerator is strictly decreasing in $\rho$, proving that the optimal team size weakly decreases in $\rho$, strictly so whenever it is interior. 

To complete the proof I show that  $n\mu^{*} + \sum_{m=n+1}^{\infty} \lambda_{m} \rightarrow 0$. Since $n \rightarrow \infty$, for every $n^{*}$: 

\begin{equation*} 
\lim_{\rho \rightarrow \infty} n\mu^{*} + \sum_{m=n+1}^{\infty} \lambda_{m} \leq \lim_{\rho \rightarrow \infty} n^{*} \mu^{*} + \sum_{m=n^{*}+1}^{\infty} \lambda_{m}=\sum_{m=n^{*}+1}^{\infty} \lambda_{m}. 
\end{equation*}

\noindent But since this inequality holds for all $n^{*}$, it also holds in the limit as $n^{*} \rightarrow \infty$; and since $\sum_{m=1}^{\infty} \lambda_{m} < \infty$, $\lim_{n^{*} \rightarrow \infty} \sum_{m=n^{*}}^{\infty} \lambda_{m}=0$. The argument provided in the main text then shows that for some $\bar{\rho}$ such that $\frac{1}{4} \frac{\left(n\mu^{*} +\sum_{m=n+1}^{\infty}   \lambda_{m} \right)}{\sum_{m=1}^{\infty} \lambda_{i}}=\frac{f'(1)}{2}$, which satisfies the conditions in the theorem. As $\rho \rightarrow 0$, the conditions of the theorem ensure that the optimal value of $k$ is interior, completing the proof. \end{proof}

\begin{proof}[Proof of Theorem \ref{prop:compstat}] The calculations below are done first holding the active set fixed. If $\lambda_{i} = \mu^{*}$, a local increase in $\lambda_{i}$ places this mode in the active set, whereas a local decrease removes it from the active set. Nevertheless, the one-sided cross-partials still coincide: The active- and inactive-mode formulas coincide when $\lambda_{i}= \mu^{*}$, so that the derivative formula derived below also applies at an active set boundary. 

Provided that the optimal choice of $k$ is interior, it must solve the first order condition (\ref{eq:solvedobj}). Denote this objective as $\text{Obj}(k,\lambda)$. Compute $\frac{\partial\text{Obj}}{\partial k}$ as: 

\begin{equation*} 
-n\mu'(k)+\sum_{m=n+1}^{\infty} \frac{1}{4k^{2}}  \lambda_{m} -\frac{f'(k)}{2}\sum_{m=1}^{\infty} \lambda_{m}
\end{equation*}

\noindent Next I compute $\mu'(k)$. Rewriting the relative entropy constraint: 

\begin{equation*} 
2\rho =   - n \log(\mu(k))  -n\log(4k) +\sum_{m=1}^{n} \log(\lambda_{m})  
\end{equation*} 

Since this holds for all $k$, taking the derivative of both sides with respect to $k$ yields: 

\begin{equation} 
0 =  - n \frac{1}{\mu(k)} \mu'(k) - \frac{n}{k} \Rightarrow \mu'(k) = - \mu(k) \left( \frac{1}{k}  \right). \label{eq:muprimesol}
\end{equation}

\noindent Now consider $\frac{\partial^{2} \text{Obj}}{\partial k \partial \lambda_{i}}$. First suppose $i \leq n$. Substituting in (\ref{eq:muprimesol}) into $\frac{\partial \text{Obj}}{\partial k}$, I have: 

\begin{equation*} 
n \frac{\partial \mu}{\partial \lambda_{i}} \left( \frac{1}{k}  \right) -\frac{f'(k)}{2}
\end{equation*}

\noindent Again, differentiating the entropy constraint implies:

\begin{equation*} 
0 = -n \frac{\frac{\partial \mu}{\partial \lambda_{i}}}{\mu(k)}+ \frac{1}{\lambda_{i}} \Rightarrow \frac{\partial \mu}{\partial \lambda_{i}} = \frac{\mu(k)}{n \lambda_{i}} =\frac{\mu^{*}}{n\lambda_{i}}\left(\frac{1}{4k}\right) . 
\end{equation*}

\noindent Thus, $\frac{\partial^{2} \text{Obj}}{\partial k \partial \lambda_{i}}$ is: 

\begin{equation*} 
\frac{\mu^{*}}{\lambda_{i}}\frac{1}{k}\left(\frac{1}{4k}\right) -\frac{f'(k)}{2}. 
\end{equation*}

\noindent But for $i > n$, $\frac{\partial^{2} \text{Obj}}{\partial k \partial \lambda_{i}}$ is: 

\begin{equation*} 
\frac{1}{4k^{2}} -\frac{f'(k)}{2}. 
\end{equation*}

Now, note that $\text{Obj}(k, \lambda)$ is homogeneous of degree 1 in $\lambda$. But if $k^{*}$ is optimal at $\lambda$, then $\frac{\partial \text{Obj}(k,\lambda)}{\partial k}=0$ at $k^{*}$, and thus for all $t$ that $\frac{\partial \text{Obj}(k,t\lambda)}{\partial k}=0$. Taking the derivative of this expression with respect to $t$ and evaluating at $t=1$ yields the following: 

\begin{equation} 
\sum_{m=1}^{\infty} \frac{\partial^{2} \text{Obj}(k,\lambda)}{\partial k \partial \lambda_{m}} \lambda_{m}=0. \label{eq:Euler}
\end{equation}

\noindent Substitute in the computed cross partials and dividing by $\sum_{m=1}^{\infty} \lambda_{m}$ to obtain the following identity: 

\begin{equation*} 
\overbrace{\frac{n\mu^{*}+ \sum_{m=n+1}^{\infty} \lambda_{m}}{\sum_{m=1}^{\infty} \lambda_{m}}}^{:=R} \frac{1}{4k^{2}} -\frac{f'(k)}{2}=0 \Rightarrow \frac{f'(k)}{2}=\frac{R}{4k^{2}}
\end{equation*}

Furthermore $R<1$, since $\mu^{*}<\lambda_{m} $ for all active modes
$m\leq n$. Therefore, for $i\leq n$,
\begin{equation*}
\frac{\partial^{2} \text{Obj}(k,\lambda)}
{\partial k \partial \lambda_i}
=
\left(\frac{\mu^{*}}{\lambda_i}-R\right)\frac{1}{4k^{2}},
\end{equation*}
while for $i>n$,
\begin{equation*}
\frac{\partial^{2} \text{Obj}(k,\lambda)}
{\partial k \partial \lambda_i}
=
(1-R)\frac{1}{4k^{2}}.
\end{equation*}
Equivalently, for every \(i\),
\[
\frac{\partial^{2} \text{Obj}(k,\lambda)}
{\partial k \partial \lambda_i}
=
\left(\min\left\{1,\frac{\mu^{*}}{\lambda_i}\right\}-R\right)
\frac{1}{4k^{2}}.
\]
Define
\[
\bar{\lambda}:=\frac{\mu^{*}}{R}.
\]
Since \(R<1\), \(\bar{\lambda}>\mu^{*}\). Hence the cross-partial is strictly
positive when \(\lambda_i<\bar{\lambda}\), strictly negative when
\(\lambda_i>\bar{\lambda}\), and equal to zero only when
\(\lambda_i=\bar{\lambda}\). Note furthermore that the region above this cutoff is nonempty; by Assumption \ref{assum:infinite}, the inactive set contains infinitely many strictly positive eigenvalues, and each contributes $\frac{(1-R)\lambda_{i}}{4k^{2}}$ to the left-hand side of (\ref{eq:Euler}); thus, for this identity to hold, some active mode $i \leq n$ must therefore contribute a strictly negative term, so that $\mu^{*}/\lambda_{i} < R$, or equivalently $\lambda_{i} > \bar{\lambda}$. In particular, $\lambda_{1} > \bar{\lambda}$. Since convexity of $f$ implies $\text{Obj}_{kk} < 0$, and $\rho \in (0, \bar{\rho})$ implies $k^{*}$ is interior, it follows that $\frac{\partial k^{*}}{\partial \lambda_{i}}$ has the same sign as $\frac{\partial^{2} \text{Obj}}{\partial k \partial \lambda_{i}}$. Using this observation, one can immediately translate the signs of the cross-partials into the claimed comparative statics in the theorem.   \end{proof}

\begin{proof}[Proof of Proposition \ref{prop:robustcomplexity}] In the proof, I first carry out the calculations holding $n$ locally fixed. If $\lambda_{m}(0)=\mu^{*}(0)$ for some $m$, the same conclusion follows using one-sided derivatives: the active- and inactive-mode formulas coincide at the boundary, as captured by the expression $\min \{1,\mu^{*}/\lambda_{m} \}$. Hence the derivative formula below remains valid even if $n$ changes at $t=0$. Following the steps of Theorem \ref{prop:compstat}, the organization's objective can be written as: 

\begin{equation*} 
-\frac{1}{4k} \frac{n\mu^{*}(t)+\sum_{m=n+1}^{\infty} \lambda_{m}(t)}{\sum_{m=1}^{\infty} \lambda_{m}(t)}-\frac{f(k)}{2}.
\end{equation*}

\noindent The derivative with respect to $k$ is:

\begin{equation*} 
\frac{1}{4k^{2}} \frac{n\mu^{*}(t)+\sum_{m=n+1}^{\infty} \lambda_{m}(t)}{\sum_{m=1}^{\infty} \lambda_{m}(t)}-\frac{f'(k)}{2}.
\end{equation*}

\noindent Define $\tilde{\mu}(t) = \frac{\mu^{*}(t)}{\sum_{m=1}^{\infty} \lambda_{m}(t)}$ and $\tilde{\lambda}_{j}(t) = \frac{\lambda_{j}(t)}{\sum_{m=1}^{\infty} \lambda_{m}(t)}$, so that the first-order condition becomes:

\begin{equation*} 
\frac{1}{4k^{2}}(n\tilde{\mu}(t)+\sum_{m=n+1}^{\infty}\tilde{\lambda}_{m}(t))-\frac{f'(k)}{2}.
\end{equation*}

\noindent Following the arguments from the proof of Theorem \ref{prop:compstat}, the comparative statics with respect to $k$ depend on the sign of this derivative with respect to $t$. Since $f$ does not depend on $t$, only the first term influences the sign of this objective. Furthermore, since $\Lambda$ is $C^{1}$ into $\ell^{1}$, it follows that $\sum_{m=1}^{\infty} \lambda_{m}(t)$ is $C^{1}$ since summation is a continuous linear functional on $\ell^{1}$. Since this sum is nonzero locally around 0, $\Lambda(t)/\sum_{m=1}^{\infty} \lambda_{m}(t)$ is also $C^{1}$ as a map into $\ell^{1}$. Thus, $\sum_{m=n+1}^{\infty} \tilde{\lambda}_{m}(t)$ is also $C^{1}$; that $\tilde{\mu}(t)$ is differentiable follows from the implicit function theorem, using the observation that it is implicitly defined from the attention constraint which, fixing the active set, is a smooth function of $\tilde{\lambda}_{1}(t), \ldots, \tilde{\lambda}_{n}(t)$. Thus, the derivative of the first-order condition with respect to $t$ evaluated at $t=0$ has the same sign as: 

\begin{equation*} 
n\tilde{\mu}'(0) +\sum_{m=n+1}^{\infty} \tilde{\lambda}_{m}'(0). 
\end{equation*}

\noindent   The attention constraint is: 

\begin{equation*} 
2 \rho = \sum_{m=1}^{n} \log \left(\frac{\frac{\sum_{m=1}^{\infty} \lambda_{m}(t)}{4k}\tilde{\lambda}_{m}(t)}{\frac{\sum_{m=1}^{\infty} \lambda_{m}(t)}{4k} \tilde{\mu}(t)} \right),
\end{equation*}

\noindent so that $\tilde{\mu}'(0) = \frac{\tilde{\mu}(0)}{n} \sum_{m=1}^{n} \frac{\tilde{\lambda}_{m}'(0)}{\tilde{\lambda}_{m}(0)}$, and hence the overall derivative at $t=0$ becomes: 

\begin{equation*} 
\tilde{\mu}(0)\sum_{m=1}^{n} \frac{\tilde{\lambda}_{m}'(0)}{\tilde{\lambda}_{m}(0)} +\sum_{m=n+1}^{\infty} \tilde{\lambda}_{m}'(0). 
\end{equation*}

\noindent Since $\frac{\tilde{\mu}(0)}{\tilde{\lambda}_{m}(0)}=\frac{\mu^{*}(0)}{\lambda_{m}(0)}$, this expression is nonnegative if and only if: 

\begin{equation} 
\sum_{m=1}^{\infty} \tilde{\lambda}_{m}'(0) \cdot \min \left(1 ,\frac{\mu^{*}(0)}{\lambda_{m}(0)} \right) \geq 0. \label{eq:maincompnorm}
\end{equation}

\noindent Now, 

\begin{equation*} 
\tilde{\lambda}_{m}'(0) = \frac{\lambda_{m}'(0)}{\sum_{j=1}^{\infty} \lambda_{j}(0)}-\frac{\lambda_{m}(0)}{\left(\sum_{j=1}^{\infty} \lambda_{j}(0)\right)^{2}} \sum_{j=1}^{\infty} \lambda_{j}'(0). 
\end{equation*}

\noindent Using the identity that $\sum_{m=1}^{\infty} \lambda_{m}(0) \cdot \min \left(1 ,\frac{\mu^{*}(0)}{\lambda_{m}(0)} \right) = n \mu^{*}(0)+ \sum_{m=n+1}^{\infty} \lambda_{m}(0)$, multiplying (\ref{eq:maincompnorm}) by $\sum_{j=1}^{\infty} \lambda_{j}(0)$ and substituting yields: 

\begin{equation*} 
\sum_{m=1}^{\infty} \lambda_{m}'(0) \cdot  \left(\min \left(1 ,\frac{\mu^{*}(0)}{\lambda_{m}(0)} \right)  - \frac{n\mu^{*}(0) + \sum_{m=n+1}^{\infty} \lambda_{m}(0)}{\sum_{m=1}^{\infty} \lambda_{m}(0)} \right)\geq 0. 
\end{equation*}

\noindent as claimed in the statement of the proposition.\end{proof}

\begin{proof}[Proof of Proposition \ref{thm:hierarchy}]
As in the single-layer case, the posterior variance $\mu$ scales proportionally to $k_{1}k_{2}$ with a fixed active set, so that we can set $\mu=\mu^{*}/(4k_{1}k_{2})$ into the objective. Fixing $k_{1}$, the optimal $k_{2}$ solves: 

\begin{equation*} 
\min_{\tilde{k}_{2}} \frac{1}{4k_{1}\tilde{k}_{2}} \left(n \mu^{*} + \sum_{m=n+1}^{\infty} \lambda_{m} \right) + \frac{\left(1+ \beta (k_{1} -1) +\gamma (\tilde{k}_{2}-1) \right)}{2} \sum_{m=1}^{\infty} \lambda_{m}.  
\end{equation*}

\noindent Since the objective is convex in $\tilde{k}_{2}$, the first-order condition yields: 

\begin{equation*} 
\frac{1}{4k_{1}k_{2}^{2}} \left(n \mu^{*} + \sum_{m=n+1}^{\infty} \lambda_{m} \right)- \frac{\gamma}{2} \sum_{m=1}^{\infty} \lambda_{m}=0.
\end{equation*}

Hence, 

\begin{equation*} 
k_{2} =  \sqrt{ \frac{1}{k_{1}} \cdot \frac{n \mu^{*} + \sum_{m=n+1}^{\infty} \lambda_{m}}{ 2\gamma \sum_{m=1}^{\infty} \lambda_{m}}}
\end{equation*}

\noindent as claimed. If $k_{1}$ solves the single-layer problem, then $k_{1} = \sqrt{\frac{n \mu^{*} + \sum_{m=n+1}^{\infty} \lambda_{m}}{ 2\beta \sum_{m=1}^{\infty} \lambda_{m}}}$. Thus: 

\begin{equation*} 
k_{2} = \left( \frac{\beta}{\gamma^2} \cdot \frac{n \mu^{*} + \sum_{m=n+1}^{\infty} \lambda_{m} }{2\sum_{m=1}^{\infty} \lambda_{m}} \right)^{1/4}.
\end{equation*}

\noindent In particular, since the fourth root exceeds 1 exactly when its argument does, $k_{2} > 1$ in this case if and only if $\frac{\beta}{\gamma^{2}} \cdot \frac{n \mu^{*} + \sum_{m=n+1}^{\infty} \lambda_{m}}{2\sum_{m=1}^{\infty} \lambda_{m}} > 1$. Now, $k_{1} > k_{2}$ if and only if: 

\begin{equation*} 
\frac{\sqrt{\gamma}}{\beta^{3/4}}\left(\frac{n \mu^{*} + \sum_{m=n+1}^{\infty} \lambda_{m}}{ 2\sum_{m=1}^{\infty} \lambda_{m}} \right)^{1/4}>1.
\end{equation*}

\noindent If $\gamma=\beta$, then this condition becomes: 

\begin{equation*} 
\frac{n \mu^{*} + \sum_{m=n+1}^{\infty} \lambda_{m}}{ 2\sum_{m=1}^{\infty} \lambda_{m}} > \beta.
\end{equation*}

\noindent However, this is precisely the condition for $k_{1} > 1$; if $k_{1}=1$, then the optimization problem for $k_{2}$ is the same and $k_{2}=1$ as well. I conclude that the organization is bottom-heavy. \end{proof}

\begin{proof}[Proof of Proposition \ref{prop:longrun}] The first-order condition for $k_{1}$ yields: 

\begin{equation*} 
k_{1}=\sqrt{ \frac{1}{k_{2}} \cdot \frac{n \mu^{*} + \sum_{m=n+1}^{\infty} \lambda_{m}}{2\beta \sum_{m=1}^{\infty} \lambda_{m}}}.
\end{equation*}

\noindent Thus, $k_{1}^{LR}/k_{1}^{SR}= \sqrt{1/k_{2}}$, so that if $k_{2} > 1$, then $k_{1}^{LR} < k_{1}^{SR}$. Now,

\begin{equation*} 
\frac{k_{1}}{k_{2}}= \sqrt{\frac{k_{1}}{k_{2}}} \sqrt{ \frac{\gamma}{\beta}} \Rightarrow k_{1}/k_{2} = \gamma/ \beta.
\end{equation*}

\noindent Using this, solve for $k_{1}^{LR}$ and $k_{2}^{LR}$: 

\begin{equation*} 
k_{1} =   \left( \frac{\gamma}{\beta^{2}} \cdot \frac{n \mu^{*} + \sum_{m=n+1}^{\infty} \lambda_{m}}{2\sum_{m=1}^{\infty} \lambda_{m}} \right)^{1/3} \text{ and } k_{2} =   \left( \frac{\beta}{\gamma^{2}} \cdot \frac{n \mu^{*} + \sum_{m=n+1}^{\infty} \lambda_{m}}{2\sum_{m=1}^{\infty} \lambda_{m}} \right)^{1/3}.
\end{equation*}

Since $\gamma \geq \beta$, $k_{1} \geq k_{2}$, so that $k_{1} > 1$ if $k_{2} > 1$. Thus, a two-layer hierarchy arises in the long-run if and only if: 

\begin{equation*} 
\frac{\beta}{\gamma^{2}} \cdot \frac{n \mu^{*} + \sum_{m=n+1}^{\infty} \lambda_{m}}{2\sum_{m=1}^{\infty} \lambda_{m}} > 1,
\end{equation*}

\noindent which coincides with the condition for $k_{2}^{SR}  > 1$ as in Proposition \ref{thm:hierarchy}, completing the proof.  \end{proof}

\begin{proof}[Proof of Proposition \ref{prop:generallayers}] Write the first-order conditions for $k_{1}$ and $k_{i}$ with $i \neq 1$. Note that since the first-order conditions are the same,  $k_{2}= k_{3} = \cdots = k_{\ell}$ when there are $\ell$ layers in the organization. Using this observation: 

\begin{equation*} 
k_{1} = \sqrt{ \frac{1}{\beta \cdot k_{i}^{\ell-1}} \frac{n \mu^{*} + \sum_{m=n+1}^{\infty} \lambda_{m}}{2\sum_{m=1}^{\infty} \lambda_{m}}} , ~~~  k_{i} = \sqrt{\frac{1}{\gamma \cdot k_{1} \cdot  k_{i}^{\ell-2}} \frac{n \mu^{*} + \sum_{m=n+1}^{\infty} \lambda_{m}}{2\sum_{m=1}^{\infty} \lambda_{m}}}.
\end{equation*}

\noindent Again obtaining that $k_{1}/k_{i} = \gamma/\beta$: 

\begin{equation*} 
k_{1} = \left( \frac{\gamma^{\ell-1}}{\beta^{\ell}} \cdot \frac{n \mu^{*} + \sum_{m=n+1}^{\infty} \lambda_{m}}{2\sum_{m=1}^{\infty} \lambda_{m}}   \right)^{1/(\ell+1)} ~~~ k_{i} = \left( \frac{\beta}{\gamma^{2}} \frac{n \mu^{*} + \sum_{m=n+1}^{\infty} \lambda_{m}}{2\sum_{m=1}^{\infty} \lambda_{m}}  \right)^{1/(\ell+1)}.
\end{equation*}

\noindent As before, $k_{1} > k_{i}$ since $\gamma > \beta$, and furthermore that $k_{i}^{LR} > 1$ if and only if a second hierarchy layer is beneficial. Therefore, a hierarchy layer is added in the long run if: 

\begin{equation*} 
\left( \frac{\beta}{\gamma^{2}} \frac{n \mu^{*} + \sum_{m=n+1}^{\infty} \lambda_{m}}{2\sum_{m=1}^{\infty} \lambda_{m}}  \right)^{1/(\ell+1)} \geq 1+\varepsilon
\end{equation*}

\noindent Rearranging to express this in terms of $\ell$, this condition becomes:

\begin{equation*} 
\frac{\log\left(\beta/\gamma^{2} \right)+ \log \left(\frac{n \mu^{*} + \sum_{m=n+1}^{\infty} \lambda_{m}}{2\sum_{m=1}^{\infty} \lambda_{m}} \right)}{\log (1+\varepsilon)} \geq \ell+1
\end{equation*}

\noindent Since the inequality for adding a hierarchy layer is harder to satisfy as $\ell$
increases, and since the condition has the form $L\geq \ell+1$, the largest
feasible value of $\ell$ is $\lfloor L-1\rfloor$. This gives the claimed
expression for the number of layers.\end{proof}

\begin{proof}[Proof of Lemma \ref{lem:endoKL}] Note that the organization's payoff depends only on the eigenvalue sequence in the Karhunen-Lo\'eve representation at the chosen process $Q$. Thus, I show that given any candidate $Q$, there exists another process $Q^{*}$ with the same eigenvalue sequence as $Q$ where $Q^{*}$ has the same Karhunen-Lo\'eve eigenfunctions as $P$ satisfying $D(Q^{*}\|P) \leq D(Q \| P)$. Throughout this proof, I assume that $D(Q\|P) < \infty$, since otherwise the organization incurs infinite KL cost, whereas choosing $Q=P$ ensures the organization's objective remains finite. 

Let $C$ denote the covariance operator of the process $P$, and fix any feasible process $Q$ and covariance operator $\tilde{C}$. Let $(\tilde{\lambda}_{i})_{i=1}^{\infty}$ denote the eigenvalue sequence associated with the Karhunen-Lo\'eve representation of $Q$. Let $H_{n}= \text{span} \{\phi_{1}, \ldots, \phi_{n}\}$, and let $\Pi_{n}$ denote the orthogonal projection onto $H_{n}$. Write $Q^{(n)}=(\Pi_{n})_{\#}Q$ and $P^{(n)}=(\Pi_{n})_{\#}P$ for the projected laws on $H_{n}$, which are supported on $H_{n}$ and admit representations as random linear combinations of $\phi_{1}, \ldots, \phi_{n}$. By the data processing inequality:

\begin{equation*} 
D(Q \| P) \geq D(Q^{(n)}\| P^{(n)}).  
\end{equation*}

Since $\lambda_{1}, \ldots, \lambda_{n} > 0$, $P^{(n)}$ is nondegenerate on $H_{n}$. Furthermore, $D(Q^{(n)} \| P^{(n)}) \leq D(Q \| P) < \infty$, so $Q^{(n)}$ is absolutely continuous with respect to $P^{(n)}$. Thus, $Q^{(n)}$ must also be nondegenerate on $H_{n}$, since otherwise it would be supported on a proper linear subspace to which $P^{(n)}$ assigns zero probability. It follows that both covariance matrices are positive definite and the standard finite-dimensional Gaussian KL formula applies. 

Since the image of $\Pi_{n}$ is isometric to $n$-dimensional normal random vectors (i.e., one can identify $H_{n} \cong \mathbb{R}^{n}$ by associating the $i$th coordinate of the vector in $\mathbb{R}^{n}$ with the coefficient on the $i$th basis element $\phi_{i}$), the KL divergence uses the familiar form for multivariate normal random vectors: Given covariance matrix $\Sigma_{Q^{(n)}}$ and $\Sigma_{P^{(n)}}$: 

\begin{equation*} 
D(Q^{(n)}\| P^{(n)}) = \frac{1}{2}[\text{tr}(\Sigma_{P^{(n)}}^{-1} \Sigma_{Q^{(n)}}) - n - \log \det(\Sigma_{P^{(n)}}^{-1} \Sigma_{Q^{(n)}})]. 
\end{equation*}

\noindent Let $\tilde{\lambda}_{1}^{(n)} \geq \cdots \geq \ldots \geq \tilde{\lambda}_{n}^{(n)} > 0$ denote the eigenvalues associated with the process $Q^{(n)}$, enumerated in weakly decreasing order. Now, since the determinant of a product of two matrices is the product of their determinants, and since each determinant is equal to the product of the eigenvalues:

\begin{equation*} 
\log \det(\Sigma_{P^{(n)}}^{-1} \Sigma_{Q^{(n)}}) = \log \left( \prod_{i=1}^{n} \frac{\tilde{\lambda}_{i}^{(n)}}{\lambda_{i}} \right)= \sum_{i=1}^{n} \log \left( \frac{\tilde{\lambda}_{i}^{(n)}}{\lambda_{i}} \right).
\end{equation*} Furthermore, by Ruhe's trace inequality \citep[p. 341]{marshall2011majorization}: 

\begin{equation*} 
\text{tr}(\Sigma_{P^{(n)}}^{-1} \Sigma_{Q^{(n)}}) \geq \sum_{i=1}^{n} \frac{\tilde{\lambda}_{i}^{(n)}}{\lambda_{i}}. 
\end{equation*}

\noindent Thus, $D(Q^{(n)}\| P^{(n)})  \geq D(\bar{Q}^{n} \| P^{(n)})$, where $\bar{Q}^{n}$ is the process using the same eigenfunctions as $P$ but eigenvalues $\tilde{\lambda}_{i}^{(n)}$.

Next, observe that the eigenvalues $\tilde{\lambda}_{i}^{(n)}$ of the covariance operator of $Q^{(n)}$ satisfy: 

\begin{equation*} 
\tilde{\lambda}_{i}^{(n)} \leq \tilde{\lambda}_{i} \text{ and } \tilde{\lambda}_{i}^{(n)} \leq \tilde{\lambda}_{i}^{(n+1)} ,
\end{equation*} for every $i \leq n$, and furthermore, that $\lim_{n \rightarrow \infty} \tilde{\lambda}_{i}^{(n)} = \tilde{\lambda}_{i}$. This follows from the Courant-Fischer min-max principle; see \cite[pp.~491--492, Problem 37, parts 9--10]{brezis2010functional}.

Now define $Q_{n}^{*}$ to be a process with the same Karhunen-Lo\'eve eigenfunctions as $P$, but with eigenvalues $\tilde{\lambda}_{i}^{(n)}$ for $1 \leq i \leq n$ and $\lambda_{i}$ for $i > n$. By construction, $Q_{n}^{*}$ and $P$  coincide in all coordinates in the Karhunen-Lo\'eve representation with $i > n$, while on the first $n$ coordinates they are given by $\bar{Q}^{n}$ and $P^{(n)}$, respectively. Writing the tail as $P^{>n}$, since all Karhunen-Lo\'eve coordinates are independent and since relative entropy is additive under independence: 

\begin{equation*} D(Q_{n}^{*}\| P)= D(\bar{Q}^{n}\|P^{(n)}) + D(P^{>n} \| P^{>n})=D(\bar{Q}^{n} \| P^{(n)}),\end{equation*}

\noindent where both laws factor across the first $n$ Karhunen-Lo\'eve coordinates and the tail coordinates.

Next, I show that $Q$ can always be weakly improved using another Gaussian process with the same Karhunen-Lo\'eve eigenbasis as $P$. Letting $C_{n}^{*}$ denote the covariance operator of $Q_{n}^{*}$, the eigenvalue of $C_{n}^{*}$ associated with the eigenfunction $\phi_{i}$ is $\tilde{\lambda}_{i}^{(n)}$ for $i \leq n$ and $\lambda_{i}$ for $i > n$. The first part is to argue that $Q_{n}^{*}$ converges weakly to $Q^{*}$, the Gaussian process with the same eigenvalue sequence as $Q$ but the eigenbasis of $P$. Let the covariance operator of $Q^{*}$ be denoted by $C^{*}$. Note that the trace norm of $C_{n}^{*}-C^{*}$ is: \begin{equation*} 
\overbrace{\sum_{i \leq n} \abs{\tilde{\lambda}_{i}^{(n)} - \tilde{\lambda}_{i}}}^{(a)}+ \overbrace{\sum_{i > n} \abs{\tilde{\lambda}_{i} - \lambda_{i}}}^{(b)}.
\end{equation*} Now, $\lim_{n \rightarrow \infty} \sum_{i > n} \abs{\tilde{\lambda}_{i} - \lambda_{i}} = 0$, since $ \sum_{i =1}^{\infty} \abs{\tilde{\lambda}_{i} - \lambda_{i}} < \infty$ from the summability of both $\tilde{\lambda}_{i}$ and $\lambda_{i}$ (which holds by assumption). For the term (a), using that $0 \leq \tilde{\lambda}_{i}^{(n)} \leq \tilde{\lambda}_{i}$ and defining $b_{n}^{i}=\mathbf{1}[i \leq n] \abs{\tilde{\lambda}_{i}^{(n)} - \tilde{\lambda}_{i}}$, it therefore holds that $b_{n}^{i} \leq \tilde{\lambda}_{i}$. Furthermore $b_{n}^{i} \rightarrow 0$ for all $i$ as $n \rightarrow \infty$, since $\tilde{\lambda}_{i}^{(n)} \rightarrow \tilde{\lambda}_{i}$ for all $i$. Since $\sum_{i=1}^{\infty} \tilde{\lambda}_{i} < \infty$, dominated convergence thus implies that (a) converges to 0 as well. Since $C_{n}^{*}$ and $C^{*}$ share the same orthonormal eigenbasis, applying Theorem 5.8 in \cite{Kukush2019GaussianMeasures} yields that $Q_{n}^{*}$ converges weakly to $Q^{*}$.\footnote{Note that the condition on off-diagonal terms holds automatically since I take the orthonormal basis to be the common eigenbasis of $C_{n}^{*}$ and $C^{*}$.} Thus, by lower semicontinuity of KL-divergence \citep[Lemma 1.4.3(b)]{DupuisEllis1997}: 
\begin{equation*}
D(Q^{*} \| P ) \leq \liminf_{n} D(Q_{n}^{*} \| P)      
\end{equation*}

\noindent Putting the previous work together: 

\begin{equation*} D(Q^{*} \| P) \leq \liminf_{n} D(Q_{n}^{*}\|P) =\liminf_{n} D(\bar{Q}^{n}\|P^{(n)}) \leq \liminf_{n} D(Q^{(n)} \| P^{(n)})  \leq D(Q\|P).      \end{equation*} Thus, the constructed process $Q^{*}$: (i) has the same covariance eigenvalue sequence as $Q$, with (ii) the covariance eigenbasis of $P$, while (iii) $D(Q^{*} \| P) \leq D(Q \| P)$. This argument thus completes the proof that the eigenbasis of $Q$ can be taken to be the same as $P$. 

It remains to derive the formula. Let $Q^{*,(n)}= (\Pi_{n})_{\#}Q^{*}$ and $P^{(n)}$ as above; the covariance matrices of these processes are exactly $\text{diag}(\tilde{\lambda}_{1}, \ldots, \tilde{\lambda}_{n})$ and $\text{diag}(\lambda_{1}, \ldots, \lambda_{n})$, respectively.  For every $n$, the finite-dimensional formula implies: 

\begin{equation*} 
D(Q^{*,(n)} \| P^{(n)}) = \frac{1}{2} \sum_{j=1}^{n} \left(\frac{\tilde{\lambda}_{j}}{\lambda_{j}}- 1 - \log \left( \frac{ \tilde{\lambda}_{j}}{\lambda_{j}} \right) \right).
\end{equation*}

Furthermore, $\lim_{n \rightarrow \infty} D(Q^{*,(n)} \| P^{(n)}) =D(Q^{*} \| P)$; this follows from the observation that the coordinate sigma-fields generated by the first $n$ coefficients increase to the full Borel sigma-field on $\overline{\text{span}}\{\phi_{j} : \lambda_{j} > 0\}$, together with monotone convergence of relative entropy along increasing sigma-fields \citep[Corollary 5.2.3]{Gray2023Entropy}. Thus, since furthermore $r-1- \log(r) \geq 0$, taking $n \rightarrow \infty$ in the previous equation, 

\begin{equation*} 
D(Q^{*} \| P) =  \frac{1}{2} \sum_{j=1}^{\infty} \left(\frac{\tilde{\lambda}_{j}}{\lambda_{j}}-1- \log \left( \frac{\tilde{\lambda}_{j}}{\lambda_{j}} \right) \right).
\end{equation*}

\noindent The last step is to verify that the covariance operator of the constructed process admits a continuous kernel, as required for feasibility. Let $r_{j} = \tilde{\lambda}_{j}/\lambda_{j}$, and note that the previous formula implies $\sum_{j} (r_{j}-1 - \log(r_{j})) < \infty$. Since $r-1-\log(r) \rightarrow \infty$ as $r \rightarrow \infty$, there must be some $M< \infty$ such that $\tilde{\lambda}_{j} \leq M \lambda_{j}$ for all $j$. Thus: 

\begin{equation*} 
\sup_{x,x'} \sum_{j=N+1}^{\infty} \tilde{\lambda}_{j} \abs{ \phi_{j}(x) \phi_{j}(x')} \leq M \sup_{x,x'} \sum_{j=N+1}^{\infty} \lambda_{j} \abs{ \phi_{j}(x) \phi_{j}(x')}
\end{equation*}

Cauchy-Schwarz implies: 

\begin{equation*} 
\sup_{x, x'} \sum_{j = N+1} \lambda_{j} \abs{ \phi_{j}(x) \phi_{j}(x')} \leq \sup_{x} \sum_{j=N+1}^{\infty} \lambda_{j} \phi_{j}(x)^{2},
\end{equation*}

\noindent while Mercer's theorem implies that this expression converges to 0 as $N \rightarrow \infty$. Thus, $C^{*}(x,x')= \sum_{j=1}^{\infty} \tilde{\lambda}_{j} \phi_{j}(x) \phi_{j}(x')$ is the uniform limit of continuous functions and is therefore continuous. Now, since $D(Q^{*}\|P)<\infty$, every $\tilde{\lambda}_{j}$ is
strictly positive. Indeed, if $\tilde{\lambda}_{j}=0$ for some $j$,
then $Q^{*}$ would be supported on a proper closed subspace to which
$P$ assigns probability zero. Thus, $Q^{*}$ would not be absolutely
continuous with respect to $P$, contradicting the finiteness of
relative entropy. Furthermore, for any distinct $x_{1},\ldots,x_{m}$ and $c_{1},\ldots,c_{m}$ not all zero,
\begin{equation*}
\sum_{i,i'=1}^{m} c_{i}c_{i'}C^{*}(x_{i},x_{i'})=
\sum_{j=1}^{\infty}\tilde{\lambda}_{j} \left(\sum_{i=1}^{m}c_{i}\phi_{j}(x_{i})\right)^{2}.
\end{equation*}

\noindent Since $\tilde{\lambda}_{j}>0$ for all $j$, this expression
can equal zero only if $\sum_{i=1}^{m}c_i\phi_j(x_i)=0$ for every $j$. But then the corresponding quadratic form under $C$ would also vanish; that is,
\begin{equation*}
\sum_{i,i'=1}^{m}c_ic_{i'}C(x_i,x_{i'}) = \sum_{j=1}^{\infty}\lambda_j
\left(\sum_{i=1}^{m}c_i\phi_j(x_i)\right)^2=0,
\end{equation*}
contradicting the positive definiteness of $C$. Thus, $C^{*}$ is
positive definite. Finally, since $Q$ is feasible and $Q^{*}$ has the
same eigenvalues as $Q$, $\sum_{j=1}^{\infty}\tilde{\lambda}_{j}<\infty$,
so $Q^{*}$ has finite integrated variance. Together with the continuity
established above, this proves that $Q^{*}$ is feasible, completing the
proof.\end{proof}

\begin{proof}[Proof of Proposition \ref{thm:endocomplex}] 
First compute the derivative of the multiplier with respect to the chosen $\tilde{\lambda}_{j}$. Abusing notation for the first part of this proof, take the ordering of the eigenvalues to be such that $\tilde{\lambda}_{1} > \tilde{\lambda}_{2} > \cdots$; later this proof will show that the resulting ordering coincides with the ordering from the original eigenvalue sequence. By Proposition \ref{lem:RISol}, for some $n$, the entropy constraint is satisfied for the first $n$ terms in the $\tilde{\lambda}_{j}$ sequence: 

\begin{equation*} 
\rho = \frac{1}{2} \sum_{i=1}^{n} \log \frac{\frac{1}{4k} \tilde{\lambda}_{i}}{\mu}\Rightarrow n \log( \mu)= \sum_{i=1}^{n}\log(\frac{1}{4k} \tilde{\lambda}_{i}) - 2 \rho. 
\end{equation*}

\noindent Write $\mu$ as a function of $\tilde{\lambda}_{j}$. Differentiating both sides of this expression with respect to $\tilde{\lambda}_{j}$: 

\begin{equation*} 
n \cdot \frac{\mu'(\tilde{\lambda}_{j})}{\mu(\tilde{\lambda}_{j})}= \frac{1}{\tilde{\lambda}_{j}}. 
\end{equation*}

Now consider the derivative of the objective, fixing $k$. First suppose that $\frac{\partial \mu}{\partial \tilde{\lambda}_{i}} > 0$. Then: 

\begin{equation*} 
-n\frac{\partial \mu}{\partial \tilde{\lambda}_{j}} -\frac{f(k)}{2} - \eta \left(\frac{1}{2\lambda_{j}}- \frac{1}{2\tilde{\lambda}_{j}} \right)=0. 
\end{equation*} 

Substituting in for the derivative: 

\begin{equation*} 
-\frac{\mu}{\tilde{\lambda}_{j}} -\frac{f(k)}{2} - \eta \left(\frac{1}{2\lambda_{j}}- \frac{1}{2\tilde{\lambda}_{j}} \right)=0. 
\end{equation*}

Solving yields the expression in the text. The next step is to check concavity. Writing $\mu$ as $\mu(\tilde{\lambda}_{j})$ second derivative is proportional to: 

\begin{equation*} 
-\eta +2 \mu(\tilde{\lambda}_{j})-2 \tilde{\lambda}_{j}\mu'(\tilde{\lambda}_{j})=-\eta+2\mu(\tilde{\lambda}_{j})-\frac{2}{n}\mu(\tilde{\lambda}_{j})< -\eta + 2 \mu(\tilde{\lambda}_{j}),
\end{equation*}

\noindent which must be less than 0 in order for $\tilde{\lambda}_{i} > 0$, assuming $\eta > 2 \mu$. Later the proof will show that this holds in the optimal solution.

By contrast, if $\frac{\partial \mu}{\partial \tilde{\lambda}_{i}} =0$, the derivative is instead: 

\begin{equation*} 
-\frac{1}{4k}-\frac{f(k)}{2} - \eta \left(\frac{1}{2\lambda_{j}}- \frac{1}{2\tilde{\lambda}_{j}} \right)=0,
\end{equation*}

\noindent and the expression from the text follows by rearranging. Since the second derivative of the objective is $-\eta/(2\tilde{\lambda}_{j})$, this expression is concave in $\tilde{\lambda}_{j}$ as well, so the first-order condition defines the solution.  

Now, recall that for any $\tilde{\lambda}_{j}$ such that $\frac{\partial \mu}{\partial \tilde{\lambda}_{j}}>0$, $\mu < \frac{\tilde{\lambda}_{j}}{4k}$. Thus, $-\frac{\mu}{\tilde{\lambda}_{j}}>-\frac{1}{4k}$ for any active set eigenvalues; substituting in the solution: 

\begin{equation*} 
\eta/ \lambda_{i} + f(k) < \frac{\eta- 2 \mu}{4k\mu}. 
\end{equation*} Since the left hand side is decreasing in $\lambda_{i}$, the active set eigenvalues are the first $n$ eigenvalues according to the original ordering. Considering inactive set eigenvalues, noting that $\frac{\mu}{\tilde{\lambda}_{j}}\geq \frac{1}{4k}$, substituting in the solution for the solved $\tilde{\lambda}_{j}$ yields:

\begin{equation*} 
4k \mu \geq \frac{\eta}{\frac{\eta}{\lambda_{i}} + 2 \left(\frac{1}{4k} + \frac{f(k)}{2} \right)} \Rightarrow \frac{\eta}{\lambda_{i}}+f(k) \geq \frac{\eta - 2 \mu}{4k\mu}. 
\end{equation*}

\noindent Thus the active and inactive regions are separated by a cutoff with respect to
the original ordering determined by the sequence \((\lambda_i)\). Since both
candidate formulas are increasing in \(\lambda_i\) and agree at the cutoff, the
induced ordering of the chosen \(\tilde{\lambda}_i\)'s is consistent with the
original ordering.

Now consider the optimization problem over $k$. Write the Lagrangian as: 

\begin{multline*} 
\mathcal{L}(k,\lambda)=
- n \mu - \sum_{i=n+1}^{\infty} \frac{1}{4k}\tilde{\lambda}_{i} - \frac{f(k)}{2} \sum_{i=1}^{\infty} \tilde{\lambda}_{i} - \frac{\eta}{2} \sum_{i=1}^{\infty} \left(\frac{\tilde{\lambda}_{i}}{\lambda_{i}}-1- \log \left( \frac{\tilde{\lambda}_{i}}{\lambda_{i}} \right) \right)  \\ + \gamma \left( \rho - \frac{1}{2}\left( \sum_{m=1}^{n} -\log (4k)+\log( \tilde{\lambda}_{m}) - \log \left(\mu(k)\right)
 \right) \right).
\end{multline*}

Since the derivative of the Lagrangian is concave and piecewise differentiable, \cite{milgrom2002envelope} implies that the optimal $k$ satisfies: 

\begin{equation*} 
\frac{1}{4k^{2}} \sum_{i=n+1}^{\infty} \tilde{\lambda}_{i} -\frac{f'(k)}{2} \sum_{i=1}^{\infty} \tilde{\lambda}_{i} + \frac{\gamma n}{2k} =0
\end{equation*}

Using that $\gamma/2 = \mu$ as in the proof of Proposition \ref{lem:RISol} and Proposition 1 in \cite{KoszegiMatejka2020} yields the formula stated in the main text.

I now show that the $\tilde{\lambda}_{i}$s define an admissible Gaussian process, namely that $\sum_{i=1}^{\infty}\tilde{\lambda}_{i}<\infty$ and $D(Q\|P)<\infty$. The former is immediate: the active set contains only finitely many eigenvalues, each with finite $\tilde{\lambda}_i$, while for inactive eigenvalues $\tilde{\lambda}_{i}<\lambda_{i}$ and $\sum_{i=1}^{\infty}\lambda_i<\infty$. To see $D(Q\|P)<\infty$, note first that the active set contributes only finitely many finite terms, since the active eigenvalues satisfy $0<\tilde{\lambda}_i<\infty$. For any inactive eigenvalue, for the appropriate $c(k)$,

\begin{equation*} 
\frac{\tilde{\lambda}_{i}}{\lambda_{i}}-1 -\log \left( \frac{\tilde{\lambda}_{i}}{\lambda_{i}} \right)=\frac{1}{1+c(k) \lambda_{i}}-1 - \log \left( \frac{1}{1+c(k) \lambda_{i}} \right).
\end{equation*}

\noindent Defining $\frac{1}{1+c(k) \lambda_{i}} =r_{i}$, note that for any sufficiently small neighborhood of 1, there exists a constant $C$ such that for all $r$ in that neighborhood,  $r-1-\log(r) \leq C(r-1)^{2}$. Thus, for all sufficiently large inactive $i$, since $\lambda_i\rightarrow 0$:

\begin{equation*} 
\frac{1}{1+c(k) \lambda_{i}}-1 -\log \left( \frac{1}{1+c(k) \lambda_{i}} \right) \leq C  \cdot \left( \frac{c(k) \lambda_{i}}{1 + c(k)\lambda_{i}} \right)^{2} \leq C \cdot c(k)^{2} \lambda_{i}^{2}. 
\end{equation*} 

\noindent Since $\sum_{i=1}^{\infty}\lambda_i<\infty$, it also holds that $\sum_{i=1}^{\infty}\lambda_i^2<\infty$. The remaining finitely many inactive terms are each finite, so the entire inactive set contributes a finite amount to $D(Q\|P)$. Combining this with the finite contribution from the active set gives $D(Q\|P)<\infty$.

The next step is to show that $\eta > 2 \mu$. Note that as $\rho \rightarrow 0$, $\mu$ converges to the largest eigenvalue in the absence of the attention constraint. Now,  $\tilde{\lambda}_{1}=\frac{\eta}{\frac{\eta}{\lambda_{1}}+\frac{1}{2k} +f(k)}$. Since $\mu=\tilde{\lambda}_{1}/(4k)$, the condition becomes that $\eta > \frac{\eta}{2k\frac{\eta}{\lambda_{1}}+1+f(k)}$, which automatically holds since the denominator on the right hand side is always greater than 1. Thus, the formula is valid for $\rho$ sufficiently small by continuity of the solution. On the other hand, that $\frac{\partial \mu}{\partial \rho} < 0$ can be seen from differentiating the attention constraint:  

\begin{equation*} 
n \log(\mu(\rho))=-2\rho -n\log(4k)+\sum_{i=1}^{n} \log \left( \frac{\eta - 2 \mu(\rho)}{f(k)+ \frac{\eta}{\lambda_{i}}} \right) \Rightarrow \mu'(\rho) = - \frac{2(\eta- 2 \mu(\rho))\mu(\rho)}{\eta n}. 
\end{equation*}

\noindent Thus, setting $z(\rho)=\eta- 2 \mu(\rho)$, this implies that $z'(\rho) \propto z(\rho)$. Since $z(0) > 0$, $z(\rho) > 0$ for all $\rho$, and hence that $\eta - 2 \mu(\rho) >0$ for all $\rho$. 

The last step is to argue that the objective is globally concave conditional on the active set. Here, it suffices to show that the Hessian is negative semidefinite when restricting to active set eigenvalues, since cross derivatives involving all other eigenvalues are 0. Using the above calculations and that $\frac{\partial \mu}{\partial \tilde{\lambda}_{i}}= \frac{\mu}{n \tilde{\lambda}_{i}}$, the $i$th diagonal term in the Hessian is $\frac{n-1}{n} \frac{\mu}{\tilde{\lambda}_{i}^{2}} - \frac{\eta}{2 \tilde{\lambda}_{i}^{2}}$, while the $i-j$th cross term is $- \frac{\mu}{n \tilde{\lambda}_{i}\tilde{\lambda}_{j}}$. Thus, fixing any vector $(v_{1}, \ldots, v_{n})$ and setting $y_{i} = \frac{v_{i}}{\tilde{\lambda}_{i}}$, the quadratic form is: 

\begin{equation*}
\left(\mu- \frac{\eta}{2} \right) \sum_{i=1}^{n} y_{i}^{2} - \frac{\mu}{n} \left( \sum_{i=1}^{n} y_{i} \right)^{2} < 0,
\end{equation*}

\noindent where the inequality holds by the argument that $\eta > 2 \mu$.\end{proof}

\begin{proof}[Proof of Theorem \ref{prop:gamma}]
As in the proof of Proposition \ref{prop:robustcomplexity}, the comparative statics depend on how: 

\begin{equation*} 
\frac{4kn\mu(\gamma) + \sum_{i=n+1}^{\infty} \tilde{\lambda}_{i}(\gamma)}{\sum_{i=1}^{\infty} \tilde{\lambda}_{i}(\gamma)} = \frac{4kn  + \overbrace{\sum_{i=n+1}^{\infty} \tilde{\lambda}_{i}(\gamma)/\mu(\gamma)}^{:=T(\gamma)}}{\underbrace{\sum_{i=1}^{n} \tilde{\lambda}_{i}(\gamma)/\mu(\gamma)}_{:=H(\gamma)} + \sum_{i=n+1}^{\infty} \tilde{\lambda}_{i}(\gamma)/\mu(\gamma)}
\end{equation*}

\noindent changes in $\gamma$. Throughout this argument, fix $k$ and consider any $\gamma$ in an open interval over which $n$ is constant. I consider $H(\gamma)$ and $T(\gamma)$ separately. For $H(\gamma)$, recall the entropy constraint is: 

\begin{equation*} 
\frac{1}{2} \sum_{i=1}^{n} \log \left( \frac{\frac{1}{4k} \tilde{\lambda}_{i}(\gamma)}{\mu(\gamma)} \right)= \rho. 
\end{equation*}

\noindent Differentiating the entropy constraint with respect to $\gamma$ gives the identity: 

\begin{equation*} 
\sum_{i=1}^{n} \frac{d}{d\gamma} \log(\tilde{\lambda}_{i}(\gamma))= n \frac{d}{d\gamma} \log(\mu(\gamma))
\end{equation*}

\noindent Note that: 

\begin{equation*} 
\log(\tilde{\lambda}_{i}(\gamma)) = \log(\eta-2\mu(\gamma))+ \log(\gamma \lambda_{i}) - \log(\eta+f(k) \gamma \lambda_{i}), 
\end{equation*}

\noindent so: 

\begin{equation} 
\frac{d}{d\gamma}\log(\tilde{\lambda}_{i}(\gamma))=\frac{d}{d\gamma}\log(\eta-2\mu(\gamma))+ \frac{1}{\gamma} \cdot \overbrace{\frac{\eta}{\eta+f(k)\gamma \lambda_{i}}}^{:=\alpha(\gamma,\lambda_{i})}. \label{eq:tildeideriv}  
\end{equation}

\noindent Returning to the derivative of the entropy constraint, we therefore have: 

\begin{equation*} 
n \frac{d}{d\gamma} \log(\eta - 2\mu(\gamma))+ \frac{1}{\gamma} \sum_{i=1}^{n} \alpha(\gamma,\lambda_{i}) = n \frac{d}{d\gamma} \log(\mu(\gamma)),
\end{equation*}

\noindent and so setting $\bar{\alpha}(\gamma) :=\frac{1}{n}\sum_{j=1}^{n} \alpha(\gamma, \lambda_{j}),$

\begin{equation} 
\frac{d}{d\gamma} \log\left(\frac{\eta - 2 \mu(\gamma)}{\mu(\gamma)} \right) = - \frac{\bar{\alpha}(\gamma)}{\gamma}. \label{eq:alphabarident}
\end{equation}

Now, return to (\ref{eq:tildeideriv}). Subtracting $\frac{d}{d\gamma}\log(\mu(\gamma))$ from both sides therefore yields the key identity: 

\begin{equation*} 
\frac{d}{d\gamma} \log\left( \frac{\tilde{\lambda}_{i}(\gamma)}{\mu(\gamma)} \right) = \frac{1}{\gamma}(\alpha(\gamma,\lambda_{i}) - \bar{\alpha}(\gamma)) \Rightarrow \frac{d}{d\gamma}\frac{\tilde{\lambda}_{i}(\gamma)}{\mu(\gamma)}=\frac{1}{\gamma}\cdot \frac{\tilde{\lambda}_{i}(\gamma)}{\mu(\gamma)} \cdot (\alpha(\gamma,\lambda_{i}) - \bar{\alpha}(\gamma)). 
\end{equation*}

Putting this together, to see how $H(\gamma)$ changes in $\gamma$, it therefore suffices to sign: 

\begin{equation*} 
\sum_{i=1}^{n} \frac{\tilde{\lambda}_{i}(\gamma)}{\mu(\gamma)} \cdot (\alpha(\gamma,\lambda_{i}) - \bar{\alpha}(\gamma)).
\end{equation*}

\noindent However, since $\lambda_{1} \geq \lambda_{2} \geq \cdots \geq \lambda_{n}$, we have $\tilde{\lambda}_{1}(\gamma) \geq \tilde{\lambda}_{2}(\gamma) \geq \cdots \geq \tilde{\lambda}_{n}(\gamma)$ and $\alpha(\gamma, \lambda_{1}) \leq \alpha(\gamma, \lambda_{2}) \leq \cdots \leq \alpha(\gamma, \lambda_{n})$. Thus, by Chebyshev's sum inequality, we have: 

\begin{equation*} 
\frac{d}{d\gamma} H(\gamma) =  \sum_{i=1}^{n} \frac{d}{d\gamma} \frac{\tilde{\lambda}_{i}(\gamma)}{\mu(\gamma)}= \frac{1}{\gamma}\sum_{i=1}^{n} \frac{\tilde{\lambda}_{i}(\gamma)}{\mu(\gamma)} \cdot (\alpha(\gamma,\lambda_{i}) - \bar{\alpha}(\gamma)) \leq 0.
\end{equation*}

\noindent We also note that $H(\gamma) > 4kn$ since by definition each $\frac{1}{4k}\tilde{\lambda}_{i} > \mu(\gamma)$. For $T(\gamma)$, we derive similar inequalities term by-term. Note that:

\begin{equation*} 
\log \left(\frac{\tilde{\lambda}_{i}(\gamma)}{\mu(\gamma)} \right) = \log\left(\frac{\eta}{\mu(\gamma)} \right)+ \log(\gamma)+\log \left( \frac{ \lambda_{i}}{\eta+(f(k)+\frac{1}{2k})\gamma \lambda_{i}} \right). 
\end{equation*}

Differentiating and simplifying, this implies: 

\begin{equation} 
\frac{d}{d\gamma}\log \left(\frac{\tilde{\lambda}_{i}(\gamma)}{\mu(\gamma)} \right)= \frac{d}{d\gamma}\log\left(\frac{\eta}{\mu(\gamma)} \right) + \frac{1}{\gamma} \cdot  \overbrace{\left(\frac{\eta}{\eta+ (f(k) + \frac{1}{2k})\gamma \lambda_{i}} \right)}^{:=\beta(\gamma,\lambda_{i})}. \label{eq:inactivedgamma}
\end{equation}

\noindent Using (\ref{eq:alphabarident}) to find $\frac{d}{d\gamma}\log\left(\frac{\eta}{\mu(\gamma)} \right)= -\frac{\mu'(\gamma)}{\mu(\gamma)}$, the derivative in (\ref{eq:alphabarident}) evaluates and simplifies as:

\begin{equation*} 
 - \frac{\mu'(\gamma)}{\mu(\gamma)} \cdot \left( \frac{\eta}{\eta-2\mu(\gamma)} \right)  = - \frac{ \bar{\alpha}(\gamma)}{\gamma}. 
\end{equation*}

\noindent Putting this together, (\ref{eq:inactivedgamma}) becomes: 

\begin{equation*} 
\frac{d}{d\gamma}\log \left(\frac{\tilde{\lambda}_{i}(\gamma)}{\mu(\gamma)} \right) = \frac{1}{\gamma} \overbrace{\left(\beta(\gamma,\lambda_{i})-\frac{\eta - 2 \mu(\gamma)}{\eta} \bar{\alpha}(\gamma)  \right)}^{(*)}.
\end{equation*}

\noindent I claim $(*)$ is nonnegative, from which it would immediately follow that $T(\gamma)$ is increasing. Define $x^{*}$ as a hypothetical value for $\gamma\lambda$ which is at the water level. Since in this case the active and inactive eigenvalue formulas coincide as per the proof of Proposition \ref{thm:endocomplex}: 

\begin{equation} 
1 =\frac{\eta - 2 \mu}{4k\mu} \frac{x^{*}}{\eta + x^{*}f(k)} \Leftrightarrow  \mu(\eta + x^{*}f(k))=\frac{1}{4k}(\eta - 2\mu) x^{*} \label{eq:threshold}
\end{equation}

\noindent The key step is to show that this identity implies: 

\begin{equation} 
\beta(\gamma,x^{*}/\gamma) =\frac{\eta - 2 \mu(\gamma)}{\eta} \alpha(\gamma,x^{*}/\gamma), \label{eq:thresholdident}
\end{equation}

\noindent so that the claim of interest follows from the observation that $\alpha(\gamma, x^{*}/\gamma) \geq \alpha(\gamma, \lambda_{n}) \geq \bar{\alpha}(\gamma)$ and $\gamma \lambda_{i} \leq x^{*} \leq \gamma \lambda_{j}$ for all $i >n > j$, since at the threshold eigenvalue since $\bar{\alpha}(\gamma)$ is an average and $\alpha(\gamma, \lambda)$ is decreasing in $\lambda$, replacing $\alpha(\gamma,x^{*}/\gamma)$ with $\bar{\alpha}(\gamma)$ increases $(*)$. Substituting in for (\ref{eq:thresholdident}): 

\begin{align*} 
\left(\frac{\eta}{\eta+ (f(k) + \frac{1}{2k})x^{*}} \right)=\frac{\eta - 2\mu(\gamma)}{\eta+f(k)x^{*}} &\Leftrightarrow \eta(\eta+f(k)x^{*}) = (\eta+ (f(k) + \frac{1}{2k})x^{*})(\eta - 2 \mu(\gamma)) \\ &\Leftrightarrow 2\mu(\gamma)(\eta+f(k)x^{*})= \frac{1}{2k} (\eta- 2\mu(\gamma))x^{*},
\end{align*}

\noindent which coincides with (\ref{eq:threshold}) by inspection. 

So $(*)$ must be at least weakly positive if $\lambda_{i}$ places $\tilde{\lambda}_{i}$ at the water level. But $\beta(\gamma,\lambda_{i})$ is decreasing in $\lambda_{i}$, so that at any $\lambda_{i}$ in the inactive set, $\beta(\gamma,\lambda_{i})$ is strictly larger than its value at the water-level. From this, we conclude that $(*)$ is nonnegative, as claimed. 

Returning to the original expression, we therefore see that as $\gamma$ increases provided the active set does not change, $H(\gamma)$ decreases (but is greater than $4kn$) while $T(\gamma)$ increases. At an active-set change, we have the newly added eigenvalue is equal to the water level. Hence, moving it from $T$ into $H$, while increasing $n$ by one, leaves both the numerator and denominator of this expression continuous (and bounded away from zero). Thus, $\frac{4kn+T(\gamma)}{H(\gamma)+T(\gamma)}$ is increasing, as claimed.

The last steps consider the $\gamma \rightarrow \infty$ limit. Specifically, fixing $k$, consider $1 - \frac{4kn(\gamma)+T(\gamma)}{H(\gamma) + T(\gamma)}$ (for this part, allowing $n$ to vary with $\gamma$): 

\begin{equation*} 
\frac{\sum_{i=1}^{n(\gamma)} \left(\frac{\tilde{\lambda}_{i}(\gamma)}{4k\mu(\gamma)}-1\right)}{\sum_{i=1}^{\infty} \frac{\tilde{\lambda}_{i}(\gamma)}{4k\mu(\gamma)}}.
\end{equation*}

\noindent The result that the limiting team size approaches the zero-attention budget solution then holds because this ratio converges to 0 as $\gamma \rightarrow \infty$. The numerator can be bounded using the entropy constraint. For the entropy constraint to be satisfied, for all $i \leq n(\gamma)$, $\frac{\tilde{\lambda}_{i}(\gamma)}{4k\mu(\gamma)} \in [1, e^{2\rho}]$. Thus, find a constant $C_{\rho}$ such that, for all $x \in [1, e^{2\rho}]$, $ x-1 \leq C_{\rho}\log(x)$. In this case: 

\begin{equation*} 
\sum_{i=1}^{n(\gamma)} \left(\frac{\tilde{\lambda}_{i}(\gamma)}{4k\mu(\gamma)}-1\right) \leq C_{\rho} \sum_{i=1}^{n(\gamma)}  \log \left(\frac{\tilde{\lambda}_{i}(\gamma)}{4k\mu(\gamma)} \right)=2C_{\rho} \rho.
\end{equation*}

\noindent Thus, the numerator is bounded. However, the denominator is at least as large as $n(\gamma)$, which I claim satisfies $n(\gamma) \rightarrow \infty$. I show this by contradiction: Suppose there were a sequence of $\gamma_{m} \rightarrow \infty$ where $n(\gamma_{m}) \rightarrow \bar{n}$. The formula for inactive eigenvalues implies: 

\begin{equation*} 
\lim_{\gamma \rightarrow \infty} \tilde{\lambda}_{n(\gamma)+1} = \frac{2k \eta}{1+2kf(k)}.
\end{equation*}

For this to be inactive, $4k\mu$ must be at least as large as this value. On the other hand, the formula for the active eigenvalues imply that these all approach the same level; letting $\mu^{**}$ denote the limiting water level, we have that the limiting active eigenvalues are all $\frac{\eta - 2 \mu^{**}}{f(k)}$. This implies the identity: 

\begin{equation*} 
\frac{\eta - 2\mu^{**}}{4k\mu^{**}f(k)}=e^{2 \rho/\bar{n}} \Rightarrow 4k\mu^{**} = \frac{2k\eta}{1+2e^{2 \rho/\bar{n}}kf(k)}.
\end{equation*}

\noindent But this expression and the previous formula implies that $\mu^{**} < \frac{1}{4k} \lim_{\gamma \rightarrow \infty} \tilde{\lambda}_{\bar{n}+1}$ since $e^{2 \rho/ \bar{n}}>1$. Thus, no such $\bar{n}$ can exist, and thus $n(\gamma) \rightarrow \infty$, completing this part of the proof. 

Finally I show the last part of Theorem \ref{prop:gamma}, using the previous result that $k^{*}(\gamma) \rightarrow k_{0}$ and the added condition that $\lambda_{i+1} \geq \underline{c} \lambda_{i}$. As argued in the proof of Proposition \ref{thm:endocomplex}, $\mu$ is bounded above by $\eta$, so that we can define $\mu^{**}$ to be the limiting water level, passing to a subsequence if necessary. Fix any $m$, and note that since $n(\gamma) \rightarrow \infty$, for $\gamma$ sufficiently large the first $m$ eigenvalues are part of the active set. Thus, we have for $\gamma$ sufficiently large and any fixed $m$: 

\begin{equation*} 
\sum_{i=1}^{m} \log \left( \frac{\tilde{\lambda}_{i}(\gamma)}{4k\mu(\gamma)} \right)\leq 2 \rho
\end{equation*}

Thus, computing the limiting $\tilde{\lambda}_{i}(\gamma)$

\begin{equation*} 
m \log \left( \frac{\eta - 2\mu^{**}}{4k_{0}\mu^{**}f(k_{0})} \right) \leq 2 \rho.
\end{equation*}

But since this holds for all $m$: 

\begin{equation} 
\frac{\eta - 2\mu^{**}}{4k_{0}\mu^{**}f(k_{0})} =1 \Rightarrow 4k_{0}\mu^{**} = \frac{2k_{0} \eta}{1+2f(k_{0})k_{0}}. \label{eq:limitbound}
\end{equation}

\noindent The previous step gives a weak inequality, but exact inequality follows from the restriction that $\tilde{\lambda}_{i}/(4k \mu^{**}) \geq 1$ for any active eigenvalue. Thus, we have that this is the expression for the limiting eigenvalues in the head, and in particular the first chosen eigenvalue. The next step is to bound $\lambda_{n(\gamma)}$, which is complicated by the fact that $n(\gamma) \rightarrow \infty$. However, by the definition of the active eigenvalues, we have: 
\begin{equation*} 
\frac{ \eta - 2 \mu(\gamma)}{\frac{\eta}{\gamma\lambda_{n(\gamma)}}+f(k(\gamma))} > 4k(\gamma)\mu(\gamma) \Rightarrow \frac{\eta - 2\mu(\gamma)}{4k(\gamma) \mu(\gamma)} - f(k(\gamma)) > \frac{\eta}{\gamma \lambda_{n(\gamma)}}.
\end{equation*}

But (\ref{eq:limitbound}) shows that the left-hand side converges to 0 as $\gamma \rightarrow \infty$, so that $\gamma \lambda_{n(\gamma)} \rightarrow \infty$. By the condition that $ \lambda_{i+1} \geq \underline{c}\lambda_{i}$, we therefore have that $\gamma \lambda_{n(\gamma)+1} \rightarrow \infty$ as well. But the formula for $\tilde{\lambda}_{n(\gamma)+1}$ implies: 

\begin{equation*} 
\lim_{\gamma \rightarrow \infty} \tilde{\lambda}_{n(\gamma)+1} = \lim_{\gamma \rightarrow \infty} \frac{1}{\frac{1}{\gamma \lambda_{n(\gamma)+1}}+ \frac{2}{\eta} \left( \frac{1}{4k(\gamma)} + \frac{f(k(\gamma))}{2} \right)}= \frac{1}{\frac{2}{\eta}\left( \frac{1}{4k_{0}} + \frac{f(k_{0})}{2} \right)} = \frac{2 k_{0} \eta}{1+2f(k_{0}) k_{0}}, 
\end{equation*}

\noindent coinciding with the formula derived for $\lim_{ \gamma \rightarrow \infty} \tilde{\lambda}_{1}(\gamma)$, as claimed.  \end{proof}

\begin{proof}[Proof of Theorem \ref{prop:eta}]
The proof essentially follows identical steps as the proof of Theorem \ref{prop:gamma}; I outline the key calculations and refer to the proof of Theorem \ref{prop:gamma} for added details. Define $T(\eta), H(\eta)$ as there and consider all other parameters as a function of $\eta$. First consider $H(\eta)$. Differentiating the entropy constraint yields: 

\begin{equation*} 
\sum_{i=1}^{n} \frac{d}{d\eta} \log(\tilde{\lambda}_{i}(\eta)) = n \frac{d}{d \eta} \log(\mu(\eta)). 
\end{equation*}

\noindent Using the formula for $\tilde{\lambda}_{i}(\eta)$, 

\begin{equation} 
\frac{d}{d\eta} \log(\tilde{\lambda}_{i}(\eta)) = \frac{d}{d\eta} \log(\eta-2\mu(\eta)) - \frac{1}{\eta} \cdot  \frac{\eta}{\eta+f(k)  \lambda_{i}}. \label{eq:lambdaeta}
\end{equation}

\noindent Defining $\alpha(\eta, \lambda_{i})$ and $\bar{\alpha}(\eta)$ analogously to the proof of Theorem \ref{prop:gamma}, the same argument implies:

\begin{equation}
\frac{d}{d \eta} \log \left( \frac{\eta - 2\mu(\eta)}{\mu(\eta)} \right) =\frac{\bar{\alpha}(\eta)}{\eta}. \label{eq:etaclever}
\end{equation}

\noindent Subtracting $\frac{d}{d\eta} \log(\mu(\eta))$ from both sides of (\ref{eq:lambdaeta}): 

\begin{equation*} 
\frac{d}{d\eta} \log \left( \frac{ \tilde{\lambda}_{i}(\eta)}{\mu(\eta)} \right) =\frac{1}{\eta}( \bar{\alpha}(\eta) - \alpha(\eta, \lambda_{i})). 
\end{equation*}

\noindent \noindent Thus, $\frac{d}{d\eta} H(\eta)=\frac{1}{\eta} \sum_{i=1}^{n} \frac{\tilde{\lambda}_{i}(\eta)}{\mu(\eta)} (\bar{\alpha}(\eta) - \alpha(\eta, \lambda_{i}))$. Since $\alpha(\eta, \lambda_{i})$ is increasing in the index while $\frac{\tilde{\lambda}_{i}(\eta)}{\mu(\eta)}$ is decreasing in the index, Chebyshev's sum inequality implies that $\frac{d}{d\eta} H(\eta) \geq 0$.

For $T(\eta)$: 

\begin{equation} 
\frac{d}{d\eta} \log \left( \frac{\tilde{\lambda}_{i}}{\mu(\eta)} \right) = \frac{d}{d\eta} \log \left( \frac{\eta}{\mu(\eta)} \right)- \frac{1}{\eta} \cdot \left( \frac{\eta}{\eta+(f(k)+\frac{1}{2k} ) \lambda_{i}} \right). \label{eq:firstbetaeta}
\end{equation}

\noindent Now, $\frac{d}{d\eta} \log \left( \frac{\eta}{\mu(\eta)} \right)= \frac{1}{\eta} - \frac{\mu'(\eta)}{\mu(\eta)}$. Evaluating the derivative on the left side of (\ref{eq:etaclever}), I find: 

\begin{equation*} 
\left(\frac{1}{\eta} - \frac{\mu'(\eta)}{\mu(\eta)} \right) \cdot \frac{\eta}{\eta - 2 \mu(\eta)} =\frac{\bar{\alpha}(\eta)}{\eta}.
\end{equation*}

\noindent From this and using (\ref{eq:firstbetaeta}), it follows that: 

\begin{equation*} 
\frac{d}{d\eta}  \log \left( \frac{\tilde{\lambda}_{i}}{\mu(\eta)} \right)=\frac{1}{\eta} \left( \frac{\eta - 2 \mu(\eta)}{\eta} \bar{\alpha}(\eta)- \beta(\eta, \lambda_{i}) \right).
\end{equation*}

\noindent Now, I claim that this term is negative. The argument that this term is equal to 0 if $\bar{\alpha}(\eta)$ is replaced by $\alpha(\eta, x^{*})$ where $x^{*}$ is an eigenvalue at the water level is the same, since $\alpha(\eta, \lambda_{i})$ coincides with $\alpha(\gamma,\lambda_{i})$ and $\beta(\eta, \lambda_{i})$ coincides with $\beta(\gamma,\lambda_{i})$. However, it also holds that decreasing $\lambda_{i}$ increases $\beta(\eta, \lambda_{i})$, and that $\bar{\alpha}(\eta)$ is in fact weakly below the value of $\alpha(\eta, \lambda_{i})$ corresponding to the water-level eigenvalue. From this, we have that $T(\eta)$ is decreasing, and the same argument as in the proof of Theorem \ref{prop:gamma} thus implies that the comparative static holds.

To show that the optimal team size converges to the zero-attention solution as $\eta \rightarrow 0$, the same proof strategy replicates although the details differ slightly more: The same upper bound on $\sum_{i=1}^{n(\eta)} \left( \frac{\tilde{\lambda}_{i}(\eta)}{4k\mu(\eta)}-1 \right)$ applies directly, so that the claim follows provided $n(\eta) \rightarrow \infty$. To show this, I again assume a contradiction, namely that there is a sequence $\eta_m\rightarrow0$ along which $n(\eta_m)$ is bounded. Passing to a subsequence, suppose $n(\eta_m)=\bar n$ for all $m$. Note that for any active eigenvalue: 

\begin{equation*} 
\frac{\tilde{\lambda}_{i}}{4k\mu(\eta)}= \frac{\lambda_{i}(\eta - 2 \mu(\eta))}{4k\mu(\eta)(\eta + \lambda_{i}f(k))}.
\end{equation*}

\noindent Now, the entropy constraint implies: 

\begin{equation*} 
\frac{n(\eta)}{2} \log \left( \frac{ \eta - 2 \mu(\eta)}{4k \mu(\eta)} \right) + \frac{1}{2}\sum_{i=1}^{n(\eta)} \log \left( \frac{\lambda_{i}}{\eta + \lambda_{i}f(k)} \right) = \rho. 
\end{equation*}

But since $\frac{\lambda_{i}}{\eta+ \lambda_{i}f(k)}$ converges as $\eta \rightarrow 0$, it follows that as $\eta \rightarrow 0$ since $n(\eta) \rightarrow \bar{n}$, 

\begin{equation*} 
\frac{ \eta - 2 \mu(\eta)}{4k \mu(\eta)} \rightarrow \bar{M} \Rightarrow \frac{\tilde{\lambda}_{i}(\eta)}{4k\mu(\eta)} \rightarrow  \frac{\bar{M}}{f(k)}.
\end{equation*}

Thus applying the limit as $\eta \rightarrow 0$ to the entropy constraint, we have: 

\begin{equation*} 
\bar{n} \log \left( \frac{\bar{M}}{f(k)} \right) = 2 \rho \rightarrow \bar{M} > f(k). 
\end{equation*}

\noindent Now consider $\tilde{\lambda}_{n(\eta)+1}$. It holds that: 

\begin{equation*} 
\frac{\frac{1}{4k} \tilde{\lambda}_{n(\eta)+1}(\eta)}{\mu(\eta)} =\left( \frac{\eta - 2 \mu(\eta)}{4k \mu(\eta)}+ \frac{1}{2k} \right) \cdot \frac{ \lambda_{n(\eta)+1}}{\eta  +  \lambda_{n(\eta)+1}\left( \frac{1}{2k} + f(k) \right)}. 
\end{equation*}

\noindent Now, as $n(\eta) \rightarrow \bar{n}$, we have $\lambda_{n(\eta)+1}= \lambda_{\bar{n}+1}$ along the subsequence, using that there are infinitely many $\lambda_{i}$ such that $\lambda_{i} > 0$. Thus, we can evaluate this limit as: 

\begin{equation*} 
\lim_{\eta \rightarrow 0} \frac{\frac{1}{4k} \tilde{\lambda}_{n(\eta)+1}(\eta)}{\mu(\eta)} = \frac{\bar{M} + \frac{1}{2k}}{\frac{1}{2k}+f(k)}. 
\end{equation*} 

However, since $\bar{M} > f(k)$, $\frac{\bar{M} + \frac{1}{2k}}{\frac{1}{2k}+f(k)} > 1$. But this is a contradiction, since in order for the $n(\eta)+1$st eigenvalue to be inactive, we must have $\frac{\frac{1}{4k} \tilde{\lambda}_{n(\eta)+1}(\eta)}{\mu(\eta)} \leq 1$ for all $\eta$. This contradiction establishes that $n(\eta) \rightarrow \infty$; together with the previous arguments, it thus follows that the optimal team size converges to the zero-attention solution as $\eta \rightarrow 0$. \end{proof}

\begin{proof}[Proof of Lemma \ref{lem:mergedKL}] Assume $r > 0$; the case of $r=0$ is immediate, and the argument is identical when $r <0$. For some $\alpha$ to be determined, let: 

\begin{multline*} 
\tilde{\phi}_{M}(x) = \sqrt{k_{A}}\cos(\alpha) \mathbf{1}[x \in [0,1/k_{A}]] +\sqrt{k_{B}}\sin(\alpha) \mathbf{1}[x \in [2,2+1/k_{B}]], \\ \tilde{\phi}_{m}(x) = -\sqrt{k_{A}}\sin(\alpha) \mathbf{1}[x \in [0,1/k_{A}]] +\sqrt{k_{B}}\cos(\alpha) \mathbf{1}[x \in [2,2+1/k_{B}]]. 
\end{multline*}

\noindent Note that for all $\alpha$, $\tilde{\phi}_{M}$ and $\tilde{\phi}_{m}$ both have norm 1 and are orthogonal, and furthermore orthogonal to all other divisional eigenfunctions, since these functions are linear combinations of the normalized constant functions on the two post-team domains, while the remaining divisional eigenfunctions are orthogonal to those constant functions.  Letting $Z_{+}$ and $Z_{-}$ be independent standard normal variables, consider $W^{\circ}(x)=\sqrt{\lambda_{+}}Z_{+} \tilde{\phi}_{M}(x)+\sqrt{\lambda_{-}}Z_{-} \tilde{\phi}_{m}(x)$. Then for $x \in [0,1/k_{A}]$, $\text{Var}(W^{\circ}(x))=k_{A}(\cos(\alpha)^{2}\lambda_{+}+\sin(\alpha)^{2}\lambda_{-})$. Since $\cos^{2}(\alpha) = \frac{1+ \cos(2\alpha)}{2}$ and $\sin^{2}(\alpha) = \frac{1- \cos(2 \alpha)}{2}$, this is equal to $k_{A}(\frac{\lambda_{+} + \lambda_{-}}{2}+ \frac{\lambda_{+}- \lambda_{-}}{2}\cos(2 \alpha))$. Similarly, if $x \in [2,2+1/k_{B}]$, this is equal to $k_{B}(\frac{\lambda_{+} + \lambda_{-}}{2} - \frac{\lambda_{+} - \lambda_{-}}{2} \cos(2 \alpha)).$ Meanwhile, for $x \in [0,1/k_{A}]$ and $x' \in [2,2+1/k_{B}]$: 

\begin{equation*} 
\text{Cov}(W^{\circ}(x), W^{\circ}(x'))=\sqrt{k_{A}k_{B}}(\lambda_{+} \cos(\alpha)\sin(\alpha)-  \lambda_{-} \sin(\alpha) \cos(\alpha)) = (\lambda_{+} - \lambda_{-})\frac{\sin(2\alpha)}{2}\sqrt{k_{A}k_{B}},
\end{equation*}

\noindent where the last equality uses the double angle identity. Now, 

\begin{equation*} 
\lambda_{+} - \lambda_{-} = \sqrt{(\sigma_{A}^{2}-\sigma_{B}^{2})^{2}+ 4 r^{2} \sigma_{A}^{2} \sigma_{B}^{2}}>0, \lambda_{+} + \lambda_{-}=\sigma_{A}^{2} + \sigma_{B}^{2}.
\end{equation*}

\noindent So, let $\alpha$ be such that: 

\begin{equation*} 
\cos(2 \alpha) =  \frac{\sigma_{A}^{2}- \sigma_{B}^{2}}{\lambda_{+}-\lambda_{-}} ,  \qquad \sin(2 \alpha) = \frac{2r\sigma_{A}\sigma_{B}}{\lambda_{+}- \lambda_{-}}.
\end{equation*}

\noindent Indeed, such an $\alpha$ is guaranteed to exist since $\left(\frac{\sigma_{A}^{2}- \sigma_{B}^{2}}{\lambda_{+}-\lambda_{-}} \right)^{2} + \left(\frac{2r\sigma_{A}\sigma_{B}}{\lambda_{+}- \lambda_{-}} \right)^{2}=1.$ Then by the above, for $x \in [0,1/k_{A}]$, 

\begin{equation*} 
\text{Var}(W^{\circ}(x))=k_{A} \cdot \left(\frac{\sigma_{A}^{2}+\sigma_{B}^{2}}{2}+ \frac{\sigma_{A}^{2} - \sigma_{B}^{2}}{2} \right) = \frac{\sigma^{2}}{4}.
\end{equation*}

\noindent Similarly,  for $x' \in [2,2+1/k_{B}]$, $\text{Var}(W^{\circ}(x'))=\frac{\sigma^{2}}{4}$; and for any such $x, x'$, $\text{Cov}(W^{\circ}(x), W^{\circ}(x'))=r \sigma_{A} \sigma_{B} \sqrt{k_{A}}\sqrt{k_{B}}=r \frac{\sigma^{2}}{4}$. It follows that the process $W^{\circ}(x)$ has the same distribution as the common-shock component of the team-state process,  where the team-state common shocks are
$\frac{\theta_{A}}{2}$ on $[0,1/k_{A}]$ and $\frac{\theta_{B}}{2}$ on $[2,2+1/k_{B}]$ (where division by 2 reflects the worker's actions rather than the task state itself). Therefore, appending $W^{\circ}(x)$ to the Karhunen-Lo\'eve decomposition from the divisional problems when $\sigma=0$ yields a representation of the joint problem, completing the proof of the Lemma. \end{proof}

\begin{proof}[Proof of Proposition \ref{thm:mergerbetter}] I first consider an auxiliary problem where the organization devotes attention $\rho_{\theta}$ to the realizations correlated with $\theta_{A}$ and $\theta_{B}$. Suppose the eigenvalues associated with the two division states are $\nu_{1} \geq \nu_{2}$. If only one variable is in the attention problem, the posterior variance $\mu$ satisfies: 

\begin{equation*} 
\rho_{\theta}= \frac{1}{2} \log \left( \frac{\nu_{1}}{\mu} \right) \Rightarrow \mu = \nu_{1}e^{-2\rho_{\theta}},
\end{equation*}

\noindent so the total posterior variance is therefore $\nu_{1}e^{-2\rho_{\theta}} + \nu_{2}$. If both variables are paid attention to, then by the water-filling algorithm described in Proposition \ref{lem:RISol},

\begin{equation*} 
\rho_{\theta}= \frac{1}{2} \log \left( \frac{\nu_{1}\nu_{2}}{\mu^{2}} \right) \Rightarrow \mu = \sqrt{\nu_{1} \nu_{2}}e^{-\rho_{\theta}}.
\end{equation*}

\noindent Furthermore, the form of the water-filling algorithm implies that the first case applies if and only if $\mu > \nu_{2}$, which in turn occurs if and only if $\rho_{\theta}\leq \frac{1}{2} \log \left(\frac{ \nu_{1}}{\nu_{2}} \right)$.

Focusing on the special case where $\nu_{1}=x(1+d), \nu_{2} = x(1-d)$, note that $d \in (0,1)$ since $x=\frac{\nu_{1} + \nu_{2}}{2}$ and $d = \frac{\nu_{1} - \nu_{2}}{\nu_{1} + \nu_{2}}$. Define: 

\begin{equation*} 
A(x, d,\rho_{\theta})= \begin{cases} 
x(1+d)e^{-2 \rho_{\theta}}+x(1-d) & 0 \leq \rho_{\theta} \leq \frac{1}{2} \log \left( \frac{1+d}{1-d} \right) \\ 2 x\sqrt{(1+d)(1-d)}e^{-\rho_{\theta}} & \rho_{\theta} > \frac{1}{2} \log \left( \frac{1+d}{1-d} \right) .
\end{cases}
\end{equation*}

\noindent Next, I claim  $A(x, d, \rho_{\theta})$ is decreasing in $d$ for every $\rho_{\theta}$. Indeed, $A(x,d, \rho_{\theta})$ is continuous in $d$ since the two cases coincide when $\rho_{\theta} = \frac{1}{2} \log\left(\frac{1+d}{1-d} \right)$. The derivative with respect to $d$ in the first case is $x(e^{-2 \rho_{\theta}} -1)< 0$, and in the second case it is $- \frac{2x d e^{-\rho_{\theta}}}{ \sqrt{(1-d)(1+d)}} < 0$. Since the function $A(x,d, \rho_{\theta})$ is continuous in $d$ and decreasing in each case, it is strictly decreasing overall when $\rho_{\theta}>0$. 

Now, note that that $\lambda_{+} - \lambda_{-}= \sqrt{(\sigma_{A}^{2} - \sigma_{B}^{2})^{2} + 4r^{2} \sigma_{A}^{2}\sigma_{B}^{2}}$, which is greater than $\abs{\sigma_{A}^{2} - \sigma_{B}^{2}}$ whenever $r, \sigma_{A}^{2}, \sigma_{B}^{2} > 0$. Furthermore, let $V_{rest}(\tilde{\rho})$ denote the coordination loss in the organization's problem where $r=0$ given an attention budget $\tilde{\rho}$. Then the organization's coordination loss is: 

\begin{equation} 
\min_{\rho_{\theta} \in [0, \rho]} A\left(\frac{\sigma_{A}^{2}+\sigma_{B}^{2}}{2}, d, \rho_{\theta} \right)+V_{rest}(\rho-\rho_{\theta}), \label{eq:losses}
\end{equation}

\noindent since the organization's payoff in the unconstrained problem is the sum of the losses in each subproblem, with the organization free to set the attention in the subproblem freely. 

Now compare coordination losses in the case of the merged and decentralized organizations---importantly, assuming that team sizes are exogenously fixed to be as in the optimal decentralized case. I allow these to adjust later. Let $\rho_{\theta}^{S}$ denote the solution to (\ref{eq:losses}) in the separated problem and $\rho_{\theta}^{M}$ to denote the solution to (\ref{eq:losses}) in the merged problem. Let $d^{S}$ and $d^{M}$ denote the values of $d$ corresponding to the decentralized and merged problems, respectively.  By the above, $d^{S} < d^{M}$ whenever $r \neq 0$.  Furthermore, 

\begin{align*} 
A\left(\frac{\sigma_{A}^{2}+\sigma_{B}^{2}}{2}, d^{S}, \rho_{S} \right)+V_{rest}(\rho-\rho_{S}) & \geq
A\left(\frac{\sigma_{A}^{2}+\sigma_{B}^{2}}{2}, d^{M}, \rho_{S} \right)+V_{rest}(\rho-\rho_{S})   \\ & \geq A\left(\frac{\sigma_{A}^{2}+\sigma_{B}^{2}}{2}, d^{M}, \rho_{M} \right)+V_{rest}(\rho-\rho_{M}).
\end{align*}

\noindent Furthermore, a strict inequality obtains if $\rho_{M} \neq 0$; if $\rho_{S} > 0$, the first inequality is strict, and $\rho_{S} > 0$ implies $\rho_{M} > 0$. If $\rho_{S}=0$ and $\rho_{M} \neq 0$, then adjusting attention must improve the organization's objective. Furthermore, $\rho_{M} > 0$ if and only if the common shock enters into the organization's objective in the merged problem. Thus, since the organization does better when merged but assuming team sizes are fixed to be as in the decentralized problem, allowing team sizes to readjust can only improve the organization's objective, since the adaptation losses for any fixed team sizes are the same for the merged and the decentralized cases.    \end{proof}

The results on team sizes under mergers make use of the following Lemma: 

\begin{lemma}  \label{lem:teambound}
Define $k_{0}$ as: 

\begin{equation*} 
k_{0} = \argmin \frac{1}{4k}+ \frac{f(k)}{2}.
\end{equation*}

\noindent Then each division's optimal team size is weakly less than $k_{0}$, either when separated or merged. 
\end{lemma}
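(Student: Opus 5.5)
Consider a fixed division $j\in\{A,B\}$, holding the other division's team size fixed; in the merged case hold all other choices fixed as well. The goal is to show that the marginal coordination benefit of raising $k_j$ is never larger than it is with zero attention. Division $j$'s adaptation loss is $\frac{f(k_j)}{2}\,\Sigma_j$, where $\Sigma_j=\int_j \mathbb{E}[W_0(x)^2]dx$ is the division's total variance. Its unattended coordination loss is $\frac{1}{4k_j}\Sigma_j$. The plan is to show that, for any attention allocation, the part of the optimized coordination loss that depends on $k_j$ has a derivative in $k_j$ bounded in absolute value by $\frac{1}{4k_j^2}\Sigma_j$. Once that holds, the derivative of the full objective in $k_j$ is at least $-\frac{1}{4k_j^2}\Sigma_j+\frac{f'(k_j)}{2}\Sigma_j$. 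This is the derivative of $\Sigma_j\left(\frac{1}{4k}+\frac{f(k)}{2}\right)$, which is positive for $k>k_0$ by convexity of $f$, strict monotonicity of $f$, and the first-order characterization $1=2k_0^2f'(k_0)$ from Section \ref{subsec:noattention}. So no $k_j>k_0$ can be optimal.

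To obtain the bound I will use the envelope theorem applied to the attention problem. Every eigenvalue of the team-state process tied to division $j$ scales as $\nu/(4k_j)$. For the separated case these are the divisional eigenvalues $\lambda^j_m/(4k_j)$ together with $\sigma^2/(4k_j)$. For the merged case I use Lemma \ref{lem:mergedKL}: the divisional eigenvalues scale as $1/(4k_j)$, and $\lambda_\pm$ depend on $k_j$ only through $\sigma_j^2=\sigma^2/(4k_j)$. The water-filling loss is $\sum_i \min\{\nu_i,\mu\}$, and the multiplier satisfies $\partial(\text{loss})/\partial \nu_i=\min\{1,\mu/\nu_i\}\in[0,1]$. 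This is the same cross-partial computation as in the proof of Theorem \ref{prop:compstat}. By the chain rule, the derivative with respect to $k_j$ of the optimized loss is $-\sum_i \min\{1,\mu/\nu_i\}\,\nu_i\cdot\frac{\partial\log\nu_i}{\partial\log k_j}\cdot\frac{1}{k_j}$. For the pure divisional modes, $\partial\log\nu_i/\partial\log k_j=-1$. In the separated case, where $\rho_j$ is reallocated optimally, the envelope theorem lets me ignore the reallocation. The result is a benefit of at most $\frac{1}{k_j}\sum_{i\in j}\nu_i=\frac{1}{4k_j^2}\Sigma_j$, with $\Sigma_j=\sigma^2+\sum_m\lambda_m^j$.

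The main obstacle is the merged pair $\lambda_\pm$, since these mix the two divisions. My approach will be to write $\lambda_\pm$ as eigenvalues of the $2\times2$ matrix $M=\begin{pmatrix}\sigma_A^2 & r\sigma_A\sigma_B\\ r\sigma_A\sigma_B&\sigma_B^2\end{pmatrix}$ and apply Hellmann–Feynman. This gives $\partial\lambda_\pm/\partial\sigma_A^2=c_\pm^2$, where $c_\pm$ is the $A$-component of the unit eigenvector (that is, $\cos^2\alpha$ or $\sin^2\alpha$ in the notation of the proof of Lemma \ref{lem:mergedKL}), and $c_+^2+c_-^2=1$. Since $M$ depends on $\sigma_A^2$ through the diagonal entry and also through the off-diagonal entry $r\sigma_A\sigma_B$, the exact derivative of $\lambda_\pm$ with respect to $\log k_A$ is $-\lambda_\pm\times(\text{weight}_A)$, where the weight comes from the quadratic form $\frac{1}{\lambda_\pm}\,v_\pm^\top(\partial M/\partial\log\sigma_A^2)v_\pm$. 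Summing these over $\pm$ gives $\mathrm{tr}(\partial M/\partial\log\sigma_A^2)=\sigma_A^2$, because the off-diagonal derivative contributes nothing to the trace. The total sensitivity of $\lambda_++\lambda_-$ is therefore exactly $\sigma_A^2/k_A=\sigma^2/(4k_A^2)$. The weights $\min\{1,\mu/\lambda_\pm\}\le 1$ multiply nonnegative pieces only if each Hellmann–Feynman term is nonnegative. I will check this: $v^\top(\partial M)v=\sigma_A^2 v_A^2+r\sigma_A\sigma_B v_Av_B$. This can be negative for $\lambda_-$ when $r>0$, so the naive bound may fail there. If it does, I will fall back on a direct comparison instead: the merged coordination loss for $j$'s common modes, as a function of $k_j$, is the minimum of $\mathrm{tr}(M)-(\text{information gain})$, with the gain concave and increasing in the scale. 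Alternatively, I will note that the weight on $\lambda_-$ satisfies $\min\{1,\mu/\lambda_-\}\ge\min\{1,\mu/\lambda_+\}$, so the possibly negative $\lambda_-$ term only reduces the marginal benefit. In either case the total benefit stays at most $\frac{1}{4k_j^2}\Sigma_j$, and the comparison argument of the first paragraph gives $k_j\le k_0$ in both regimes.
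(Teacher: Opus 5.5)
Your proof is correct and reaches the same key inequality as the paper, $-\partial R/\partial k_j\le \Sigma_j/(4k_j^2)$, followed by the same comparison with the zero-attention objective. The route to that inequality is different.

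The paper sets aside a sub-budget $\rho_\theta$ for the common pair $(\lambda_+,\lambda_-)$. In each regime (neither attended, only $\lambda_+$, or both) it writes the residual common-mode variance in closed form and checks by direct differentiation that its derivative in $\sigma_A^2$ is at most one. In the middle case this uses $|\sigma_A^2-\sigma_B^2+2r^2\sigma_B^2|\le\sqrt{(\sigma_A^2-\sigma_B^2)^2+4r^2\sigma_A^2\sigma_B^2}$. It then invokes the envelope theorem.

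You instead combine the global water-filling marginal values $w_i=\min\{1,\mu/\nu_i\}$ with first-order eigenvalue perturbation of the $2\times 2$ common-shock covariance $M$. This has three advantages:
\begin{itemize}
\item It treats all attention regimes and active-set boundaries at once.
\item It gives an exact formula for the marginal benefit of $k_j$ as a $[0,1]$-weighted version of $\Sigma_j/(4k_j^2)$, the divisional analog of (\ref{eq:factor}).
\item It extends unchanged to more divisions or an arbitrary common-shock covariance.
\end{itemize}
The paper's computation is more explicit but specific to the two-mode case.

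The step you flag as the main obstacle is not one, and your worry there is mistaken. Multiply the $A$-row of $Mv_\pm=\lambda_\pm v_\pm$ by $c_\pm=v_{\pm,A}$. This gives $h_\pm:=v_\pm^\top(\partial M/\partial\log\sigma_A^2)v_\pm=\sigma_A^2c_\pm^2+r\sigma_A\sigma_B v_{\pm,A}v_{\pm,B}=\lambda_\pm c_\pm^2\ge 0$ for both eigenvalues. These two terms sum to $M_{AA}=\sigma_A^2$, so weights $w_\pm=\min\{1,\mu/\lambda_\pm\}\in[0,1]$ give the bound immediately, and no fallback is needed.

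Of your fallbacks, the second would have been valid: if $h_-<0$ and $w_-\ge w_+$, then $w_+h_++w_-h_-\le w_+(h_++h_-)=w_+\sigma_A^2$. The first is too vague to count as an argument.

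Two small slips do not affect the conclusion:
\begin{itemize}
\item Once the off-diagonal dependence on $\sigma_A$ is included, $\partial\lambda_\pm/\partial\sigma_A^2=\lambda_\pm c_\pm^2/\sigma_A^2$, not $c_\pm^2$.
\item Your chain-rule display computes $-\partial R/\partial k_j$, not $\partial R/\partial k_j$.
\end{itemize}
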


\noindent This result follows immediately from the form of the objective in the baseline model, and hence for the separated case as well, since additional attention only decreases the marginal benefit from larger team size. The proof below shows that this conclusion also holds in the merged case. 

\begin{proof}[Proof of Lemma \ref{lem:teambound}]
I show that  if $R$ is the coordination loss, then: 

\begin{equation*} 
- \frac{\partial R}{\partial k_{j}} \leq \frac{\sigma^{2}+ \sum_{i} \lambda_{i}^{j}}{4k_{j}^{2}},
\end{equation*}

\noindent which implies that the optimal $k_{j}$ must be smaller than $k_{0}$, since the marginal cost of $k$ is the same as the no-information case, so that if marginal benefits from increasing $k$ are lower at every $k$, optimality cannot hold at any $k > k_{0}$. I show this for all possible cases regarding the solution to the attention problem; the envelope theorem then implies the same bound after attention is optimally allocated across all modes.

If neither $\lambda_{+}$ or $\lambda_{-}$ are in the attention problem, then this is immediate since $\lambda_{+} + \lambda_{-} = \frac{\sigma^{2}}{4k_{A}}+ \frac{\sigma^{2}}{4k_{B}}$. So suppose that at least one of the modes is attended to, and as in the proof of Proposition \ref{thm:mergerbetter} let $\rho_{\theta}$ denote the attention to these terms. Let $\sigma_{A}^{2}$ and $\sigma_{B}^{2}$ be as in the statement of Lemma \ref{lem:mergedKL}. If both $\lambda_{+}$ and $\lambda_{-}$ are in the attention problem, then their residual variance is $2\mu=2 \sqrt{\sigma_{A}^{2}\sigma_{B}^{2}(1-r^{2})}e^{-\rho_{\theta}}$; the derivative with respect to $\sigma_{A}^{2}$ can be written as $\mu/\sigma_{A}^{2}$; but since by assumption $\mu \leq \sigma_{A}^{2}$, this is less than 1. Since in the no-information problem the derivative of the coordination loss with respect to $\sigma_{A}^{2}$ is exactly 1, the desired inequality holds.

Now suppose only $\lambda_{+}$ is in the attention problem. In this case the residual variance from these terms is $\lambda_{+}e^{-2\rho_{\theta}}+\lambda_{-}$, which is: 

\begin{equation*} 
\frac{\sigma_{A}^{2} + \sigma_{B}^{2}}{2}(1+e^{-2\rho_{\theta}})-\frac{1}{2}(1-e^{-2\rho_{\theta}}) \sqrt{(\sigma_{A}^{2}-\sigma_{B}^{2})^{2} + 4 \sigma_{A}^{2} \sigma_{B}^{2} r^{2}}. 
\end{equation*}

\noindent The derivative is:  

\begin{equation*} 
\frac{1+e^{-2\rho_{\theta}}}{2} - \frac{1-e^{-2 \rho_{\theta}}}{2} \cdot \frac{\sigma_{A}^{2} - \sigma_{B}^{2}+2 \sigma_{B}^{2} r^{2}}{\sqrt{(\sigma_{A}^{2}-\sigma_{B}^{2})^{2} + 4 \sigma_{A}^{2} \sigma_{B}^{2} r^{2}}}.
\end{equation*}

\noindent But $\abs{\sigma_{A}^{2} - \sigma_{B}^{2}+2 \sigma_{B}^{2} r^{2}} \leq \sqrt{(\sigma_{A}^{2}-\sigma_{B}^{2})^{2} + 4 \sigma_{A}^{2} \sigma_{B}^{2} r^{2}}$; indeed, squaring both sides of this inequality reduces to $4 \sigma_{B}^{4}(1- r^{2})r^{2} \geq 0$. If $\sigma_{A}^{2}-\sigma_{B}^{2}+2 \sigma_{B}^{2} r^{2} \geq 0$, then the ratio lies between $0$ and $1$. Therefore the derivative is bounded above by
\begin{equation*}
\frac{1+e^{-2 \rho_{\theta}}}{2}<1,
\end{equation*}

\noindent since by assumption $\rho_{\theta} > 0$. If $\sigma_{A}^{2}-\sigma_{B}^{2}+2 \sigma_{B}^{2} r^{2} \leq 0$, then the ratio lies between $-1$ and $0$, so the derivative is bounded above by
\begin{equation*}
\frac{1+e^{-2 \rho_{\theta}}}{2}+\frac{1-e^{-2 \rho_{\theta}}}{2}=1.
\end{equation*} Thus, the derivative is bounded above by one, completing the proof. 
\end{proof}

\begin{proof}[Proof of Proposition \ref{prop:teamsize}] For any optimal symmetric solution $k_{A}=k_{B}$, say $k_{reg}$ when the regime is $reg \in \{dec, mer\} $  (decentralized and merged, respectively), with the same adaptation costs and eigenvalues, write the organization's loss as a function of $reg$ and $k_{reg}$ as: 

\begin{equation*} 
L^{reg}(k_{reg})=\frac{\alpha_{\rho, reg}}{4k_{reg}} + \frac{f(k_{reg})}{2} \left(2 \sigma^{2} + \sum_{j=1}^{\infty} \lambda_{j}^{A} + \lambda_{j}^{B} \right),
\end{equation*}

\noindent for some fixed $\alpha_{\rho, reg}.$ Note $\frac{\partial^{2} L^{reg}}{\partial \alpha_{\rho,reg} \partial k_{r}} < 0$. Since by assumption $k_{dec}$ is interior, the first order condition is satisfied: 

\begin{equation*} 
- \frac{\alpha_{\rho,dec}}{4 k_{dec}^{2}}+ \frac{f'(k_{dec})}{2}\left(2 \sigma^{2} + \sum_{j=1}^{\infty} \lambda_{j}^{A} + \lambda_{j}^{B} \right)=0. 
\end{equation*}

\noindent So, \begin{equation*} \frac{\partial L^{mer}}{\partial k} \lvert_{k=k_{dec}} = - \frac{\alpha_{\rho,mer}}{4k_{dec}^{2}} + \frac{f'(k_{dec})}{2}\left(2 \sigma^{2} + \sum_{j=1}^{\infty} \lambda_{j}^{A} + \lambda_{j}^{B} \right)= \frac{\alpha_{\rho,dec}- \alpha_{\rho,mer}}{4k_{dec}^{2}} \geq 0. \end{equation*} 

Convexity of $L^{mer}(k)$ in $k$ implies that the symmetric merged minimizer is weakly below $k_{dec}$, strictly if $\alpha_{\rho,mer} < \alpha_{\rho,dec}$. \end{proof}

\noindent \textit{Example Showing a Division's Team Size can Increase under Mergers with Asymmetries} Take $f(k)=1+\beta(k-1)$ with $\beta=1/10$, which I note involves the $\rho=0$ team size being $\sqrt{5} \approx 2.236$. Suppose $\lambda_{1}^{A}=2$, $\sigma^{2}=1$, and $\lambda_{1}^{B}=1/2$; in this example, I take all other eigenvalues to be 0, although this is not essential. 

Take $\rho=1/2$, and solve for the optimal team size in the unmerged case. The first step, in particular, is to verify that in this example that all attention is allocated to division A. This immediately implies that division B's optimal team size is $\sqrt{5}$. Given any such $\mu^{*}$ such that $\mu^{*} < \sigma, \lambda_{1}^{A}$, the first-order condition implies that the optimal team size of division A is $\sqrt{\frac{10}{3} \mu^{*}}$. When $\rho=1/2$, $\mu^{*}$ solves the entropy constraint $1/2= \frac{1}{2} \left[ \log \left( \frac{2}{\mu^{*}} \right) + \log \left( \frac{1}{\mu^{*}} \right) \right]$, so $\mu^{*}=\sqrt{2/e} \approx 0.85776$. This corresponds to $\mu\approx 0.1268$, which is larger than $1/(4 \cdot \sqrt{5}))$.  Thus, not only does this identify the solutions for team sizes---1.6909 for division A and $\sqrt{5}$ for division $B$---but also verifies that the solution involves division A receiving all attention. 

Now consider the merged case with $r=4/5$.  The individual division eigenvalues are $\frac{1}{2k_{A}}$ for division $A$ and $\frac{1}{8k_{B}}$ for division B,  with the merged eigenvalues being: 

\begin{equation*} 
\lambda_{+} = \frac{1}{8k_{A}}+ \frac{1}{8k_{B}} + \frac{1}{2}\frac{\sqrt{25k_{A}^{2} + 14k_{A}k_{B}+25k_{B}^{2}}}{20k_{A}k_{B}},  \lambda_{m} = \frac{1}{8k_{A}}+ \frac{1}{8k_{B}} - \frac{1}{2} \frac{\sqrt{25k_{A}^{2} + 14k_{A}k_{B}+25k_{B}^{2}}}{20k_{A}k_{B}}. 
\end{equation*}

\noindent But $\lambda_{m}$ is decreasing in $k_{A}$ and $k_{B}$ for all $k_{A}, k_{B} \geq 1$,  so that the highest possible value for $\lambda_{m}$ is $0.05$.   Similarly,  $\lambda_{1}^{B}/(4k_{ah}) \leq 1/8$.  So consider the problem when $\lambda_{+}$ and $\lambda_{1}^{A}/(4k_{A})$ are in the attention problem.  The objective is: 

\begin{equation*} 
2 \mu(k_{A}, k_{B}) + \lambda_{m}(k_{A}, k_{B}) + \frac{\lambda_{1}^{B}}{4k_{B}} + \frac{3}{2}(1+\beta(k_{A}-1))+\frac{3}{4}(1+\beta (k_{B}-1))
\end{equation*}  

\noindent Minimizing this numerically, I find that $k_{A} \approx 1.7331$ and $k_{B} \approx 2.00095$.   Thus, division A has larger teams under the merger. Furthermore, at this solution,  $\mu \approx 0.160403 < 1/(2k_{A})$,  so that indeed both of these eigenvalues enter into headquarters' attention problem.  Since optimal team sizes are bounded above by the no-information team size, $\sqrt{5}$, the inequalities that define the active set hold so that the above objective is the relevant one for the merged problem, implying that this is the global optimum.

\begin{proof}[Proof of Proposition \ref{prop:training}] 
In the model with training, $a_{x}$ is chosen to maximize $-c(\alpha) a_{x}^{2} -(\beta(\alpha) a_{x} - W_{0}(x))^{2}$, so that: 

\begin{equation*} 
a_{x} = \overbrace{\frac{\beta(\alpha)}{\beta(\alpha)^{2}+c(\alpha)}}^{:=h(\alpha)} W_{0}(x). 
\end{equation*}

\noindent Therefore, the baseline stochastic process is $\frac{\beta(\alpha)}{\beta(\alpha)^{2}+c(\alpha)}W_{0}(x)$. Substituting in to $\ell_{x}(a, W_{0}(x),\alpha)$: 

\begin{equation*} 
l(W_{0}(x))= f(k) \cdot \overbrace{\frac{c(\alpha)}{\beta(\alpha)^{2} + c(\alpha)}}^{:=r(\alpha)} \cdot W_{0}(x)^{2} . 
\end{equation*}

\noindent Therefore, $\alpha$ and $k$ are chosen to maximize:

\begin{equation*} 
-\frac{1}{2}\alpha^{2} - n\frac{\mu^{*}}{k} \cdot h(\alpha)^{2} - \sum_{m=n+1}^{\infty} \frac{1}{k} \lambda_{m} \cdot h(\alpha)^{2} - f(k) r(\alpha) \sum_{m=1}^{\infty} \lambda_{m}. 
\end{equation*}

\noindent Consider a constant-factor increase in the variance of $W_{0}(x)$. The optimal value of $k$ satisfies a first-order condition that does not depend on $\gamma$. Thus, the comparative statics of team size depend on (i) the cross-partial of the first order condition with respect to $\alpha$, and (ii) how $\gamma$ changes $\alpha$.  

The first-order condition with respect to $k$ is: 

\begin{equation*} 
 n\frac{\mu^{*}}{k^{2}} \cdot h(\alpha)^{2} +\sum_{m=n+1}^{\infty} \frac{1}{k^{2}} \lambda_{m} \cdot h(\alpha)^{2} = f'(k) r(\alpha) \sum_{m=1}^{\infty} \lambda_{m}. 
\end{equation*}

Differentiating with respect to $\alpha$ yields: 

\begin{equation*} 
 n\frac{\mu^{*}}{k^{2}} 2h(\alpha) \cdot h'(\alpha) +\sum_{m=n+1}^{\infty} \frac{1}{k^{2}} \lambda_{m} 2h(\alpha) \cdot h'(\alpha) - f'(k) r'(\alpha) \sum_{m=1}^{\infty} \lambda_{m}. 
\end{equation*}

\noindent The first-order condition delivers an expression for $n\frac{\mu^{*}}{k^{2}} +\sum_{m=n+1}^{\infty} \frac{1}{k^{2}} \lambda_{m}$ which we can substitute in to the cross-partial:

\begin{equation*} 
\frac{f'(k)r(\alpha) \sum_{m=1}^{\infty} \lambda_{m}}{h(\alpha)^{2}} \cdot 2h(\alpha)h'(\alpha) - f'(k) r'(\alpha) \sum_{m=1}^{\infty} \lambda_{m} \propto 2\frac{r(\alpha)}{h(\alpha)} h'(\alpha) - r'(\alpha),
\end{equation*}

\noindent where proportionality holds since $f'(k), \sum_{m=1}^{\infty} \lambda_{m} > 0$. Using that $r(\alpha)/h(\alpha) = c(\alpha)/\beta(\alpha)$, and clearing the common denominator of $h'(\alpha)$ and $r'(\alpha)$, this expression has the same sign as: 

\begin{multline*} 
\frac{c(\alpha)}{\beta(\alpha)}  \left((\beta(\alpha)^{2} + c(\alpha))\beta'(\alpha) -(2\beta(\alpha)\beta'(\alpha)+c'(\alpha))\beta(\alpha) \right) \\- \frac{1}{2}  \left((\beta(\alpha)^{2} + c(\alpha))c'(\alpha)-(2\beta(\alpha)\beta'(\alpha)+c'(\alpha))c(\alpha) \right). 
\end{multline*}

\noindent Simplifying this expression yields the equation in the proposition statement. 

Now consider how $\alpha$ changes with respect to $\gamma$. Write headquarters' objective as $V(k, \alpha;\gamma)=-\frac{1}{2}\alpha^{2} - \gamma \cdot U(k,\alpha)$; in particular, the first-order conditions can be written:  \begin{equation*} [k]: U_{k}=0,  ~~~~ [\alpha]: \alpha  + \gamma U_{\alpha}=0
\end{equation*} Then, differentiating the first-order conditions with respect to $\gamma$:

\begin{equation*} 
\begin{pmatrix}  U_{kk} & U_{k\alpha} \\ \gamma U_{\alpha k}  & 1+ \gamma U_{\alpha \alpha} \end{pmatrix} \cdot  \begin{pmatrix}  \frac{d k }{d\gamma } \\ \frac{d \alpha}{d\gamma} \end{pmatrix}  =  \begin{pmatrix} 0 \\  -U_{\alpha} \end{pmatrix}.
\end{equation*}

Thus, applying Cramer's rule, $\frac{d \alpha}{d \gamma} =  -\frac{ U_{kk} U_{\alpha}}{U_{kk}(1+\gamma U_{\alpha \alpha}) - \gamma U_{k \alpha} U_{\alpha k}}$. Note that the denominator is the determinant of the Hessian of $V$ divided by $\gamma$; thus, assuming the second-order condition for a strict local maximum, so that the Hessian is negative definite, it follows that $\frac{d\alpha}{d\gamma}$ has a sign opposite of $U_{\alpha}$.  

I now show that necessary conditions hold when $\beta(\alpha) = \sqrt{1+\alpha}$ and $c(\alpha)=1+\alpha$. Here, $r(\alpha)=1/2$ and $h(\alpha)^{2} = \frac{1}{4(1+\alpha)}$; thus: 
\begin{multline*} 
U(k,\alpha) = \left(n \frac{\mu^{*}}{k} + \sum_{m=n+1}^{\infty} \frac{\lambda_{m}}{k} \right) \frac{1}{4(1+\alpha)} + \frac{f(k)}{2} \sum_{m=1}^{\infty} \lambda_{m} \\ \Rightarrow U_{\alpha} = -\left(n \frac{\mu^{*}}{k} + \sum_{m=n+1}^{\infty} \frac{\lambda_{m}}{k} \right) \frac{1}{4(1+\alpha)^{2}}  < 0,
\end{multline*}  so that $\frac{d \alpha}{d \gamma} > 0$ by the above. Furthermore, inspecting the objective: 

\begin{equation*} 
U_{kk} = \frac{n\mu^{*} + \sum_{m=n+1}^{\infty} \lambda_{m}}{2k^{3}(1+ \alpha)} + \frac{\sum_{m=1}^{\infty} \lambda_{m}}{2}f''(k), U_{\alpha \alpha} = \frac{n\mu^{*} + \sum_{m=n+1}^{\infty} \lambda_{m}}{2k(1+ \alpha)^{3}} , U_{k\alpha} = \frac{n\mu^{*} + \sum_{m=n+1}^{\infty} \lambda_{m}}{4k^{2}(1+ \alpha)^{2}}.
\end{equation*}

On the other hand, negative definiteness is equivalent to (i)  $U_{kk} > 0$ and  (ii) $U_{kk}(1+\gamma U_{\alpha \alpha}) - \gamma U_{k \alpha}^{2} > 0$; the former follows immediately from the previous, and the latter follows from noting that $U_{kk} > 0, \gamma  > 0$ and: 

\begin{equation*} U_{kk}U_{\alpha\alpha}  -U_{k\alpha}^{2} > \frac{(n\mu^{*} + \sum_{m=n+1}^{\infty} \lambda_{m})^{2}}{4k^{4}(1+ \alpha)^{4}}-\frac{(n\mu^{*} + \sum_{m=n+1}^{\infty} \lambda_{m})^{2}}{16k^{4}(1+ \alpha)^{4}}  > 0. \end{equation*}

\noindent Thus, the necessary conditions for the comparative statics hold in this specification, completing the proof. \end{proof}

\begin{proof}[Proof of Proposition \ref{prop:costlylink}]
Write the objective as $\gamma U(k) - \kappa \frac{k-1}{2}$, where 
\begin{equation*} U(k) = -\left(\frac{1}{4k}\left(n\mu^{*}+\sum_{m=n+1}^{\infty}\lambda_{m}\right) + \frac{f(k)}{2}\sum_{m=1}^{\infty}\lambda_{m}\right);\end{equation*} note $U$ is strictly concave. Let $\tilde{k}_{0} = \arg\max_{k} U(k)$, i.e., the optimal team size absent link-formation costs. For any $\gamma >0$, the optimal team size lies in $[1,\tilde{k}_{0}]$: for $k > \tilde{k}_{0}$, lowering $k$ to $\tilde{k}_{0}$ raises $\gamma U(k)$ and lowers the link-formation cost. On $[1,\tilde{k}_{0}]$, $U$ is increasing, so for $\gamma' > \gamma$ the difference $(\gamma'-\gamma)U(k)$ is increasing in $k$; the objective therefore has increasing differences in $(k,\gamma)$ on the relevant region, and hence the optimal team size is weakly increasing in $\gamma$ by \cite{Topkis1998}. If the optimum at $\gamma$ is interior, it satisfies the first-order condition $\gamma U'(k^{*})=\kappa/2 > 0$. An increase to $\gamma' > \gamma$ requires $U'$ to fall to $\kappa/2\gamma' < \kappa/2\gamma$ at the new optimum, which remains interior since it is weakly larger; strict concavity of $U$ then implies the new team size is strictly larger.
\end{proof}

\newpage

\section{Relating Eigenvalue Decay to Jaggedness} \label{app:jaggedness}

This appendix provides sufficient conditions under which, for a fixed eigenbasis, jaggedness is decreasing in the decay rate of the covariance eigenvalues. The conclusion requires conditions ensuring that higher-index eigenfunctions tend to vary more rapidly across nearby locations and that this variation does not vanish locally.

\begin{align*} 
\mathbb{E}[(W(x+h)-W(x))^{2}] &= \mathbb{E} \left[ \left(\sum_{k=1}^{\infty}\sqrt{\lambda_{k}} (\phi_{k}(x+h)-\phi_{k}(x)) Z_{k} \right)^{2} \right ]  \\ & =  \sum_{k=1}^{\infty} \lambda_{k}(\phi_{k}(x+h)- \phi_{k}(x))^{2},
\end{align*}

\noindent where the second line uses that the $Z_{k}$ sequence is IID standard normal.  Take as given that there is some $\overline{B}>0$ and $p >1$ such that: 

\begin{equation*} 
 \lambda_{k} \leq  \overline{B} \cdot k^{-p}, 
\end{equation*}

\noindent and some $q, \overline{C}_{x} >0$ such that: 

\begin{equation*} 
\abs{\phi_{k}(x+h) - \phi_{k}(x)} \leq   \overline{C}_{x} \cdot \min\{1,  hk^{q} \}; 
\end{equation*}

\noindent for all $k$ and sufficiently small $h$.  Thus: 

\begin{equation} 
\mathbb{E}[(W(x+h)-W(x))^{2}/h^{\alpha}] \leq \overline{B} \cdot \overline{C}_{x}^{2} \left( \sum_{k=1}^{k^{*}(h)} h^{2} k^{2q-p} +\sum_{k=k^{*}(h)+1}^{\infty} k^{-p}  \right)/h^{\alpha},   \label{eq:sum}
\end{equation}

\noindent where $k^{*}$ is the largest $k$ such that $h k^{q} \leq 1$.   Thus,

\begin{equation*}   \lim_{h \rightarrow 0}  \left( \sum_{k=1}^{k^{*}(h)} h^{2} k^{2q-p} +\sum_{k=k^{*}+1}^{\infty} k^{-p}  \right)/h^{\alpha} < \infty \Rightarrow 
\limsup_{h \rightarrow 0} \mathbb{E}[(W(x+h)-W(x))^{2}/h^{\alpha}] < \infty,
\end{equation*}

\noindent I pause to highlight the part of this argument that illustrates why $\alpha$ might vary in general: ignoring the $k > k^{*}+1$ terms---or, more precisely, assuming a finite-dimensional eigenbasis--- the right hand side of (\ref{eq:sum}) is finite for $\alpha=2$. However, while $\lambda_{k} \rightarrow 0$, typically $\abs{\phi_{k}'(x)}$ becomes large as $k \rightarrow \infty$. Thus, as $h \rightarrow 0$ and $k^{*}(h) \rightarrow \infty$, these ``higher-frequency'' terms may still contribute nontrivially to the overall sum. Roughly speaking, dividing by $h^{\alpha}$ instead of $h^{2}$ for $\alpha < 2$ ensures that these contributions have diminished impact. 

Substituting in for $k^{*}(h)$: 

\begin{equation*} 
\lim_{h \rightarrow 0}  \left( \sum_{k=1}^{\lceil h^{-1/q}  \rceil } h^{2} k^{2q-p} +\sum_{k=\lceil h^{-1/q}  \rceil}^{\infty} k^{-p}  \right)/h^{\alpha}
\end{equation*}

\noindent I consider each sum separately; for simplicity, assume that $h^{-1/q}$ is an integer, as this does not influence the argument. For the second, I note for some $C_{t}$: 

\begin{equation*} 
\sum_{k= h^{-1/q} }^{\infty} k^{-p}  \leq \int_{ h^{-1/q}  -1}^{\infty} x^{-p}dx =\frac{(h^{-1/q}-1)^{1-p}}{p-1} \leq  C_{t} \cdot h^{(p-1)/q}. 
\end{equation*}

For the first term, first assume that $(p-1)/q < 2$, and note that for some $C_{l}$: 

\begin{equation*} 
 \sum_{k=1}^{ h^{-1/q}   } h^{2}k^{2q-p}  \leq h^{2}  \left(1+\int_{1}^{ h^{-1/q}  } x^{2q-p} dx \right) \leq C_{l} h^{2} \cdot h^{(p-2q-1)/q}= C_{l}h^{(p-1)/q}. 
\end{equation*}

\noindent Putting this together: 

\begin{equation*} 
\lim_{h \rightarrow 0}  \left( \sum_{k=1}^{ h^{-1/q}   } h^{2} k^{2q-p} +\sum_{k= h^{-1/q}  }^{\infty} k^{-p}  \right)/h^{\alpha} \leq \lim_{h \rightarrow 0}  (C_{t}+C_{l})h^{(p-1)/q-\alpha},
\end{equation*}

\noindent  where the right-hand side is finite whenever $\alpha \leq (p-1)/q$. Thus, provided $(p-1)/q < 2$, the $\lim \sup$ is finite for every $\alpha \leq (p-1)/q$, so that $\alpha^{*}(x) \geq (p-1)/q$. The reverse inequality---that the ratio diverges for every $\alpha > (p-1)/q$, so that $\alpha^{*}(x) \leq (p-1)/q$---follows from the matching lower-bound argument below. When $(p-1)/q = 2$, identical steps show that: 

\begin{equation*} 
 \sum_{k=1}^{ h^{-1/q}   } h^{2}k^{2q-p} \leq C_{l} h^{2} \log(1/h). 
\end{equation*}

For $(p-1)/q> 2$: 

\begin{equation*} 
 \sum_{k=1}^{ h^{-1/q}   } h^{2}k^{2q-p} \leq  h^{2}\left(1+\int_{1}^{\infty} x^{2q-p}dx \right) \leq C_{l}h^{2}. 
\end{equation*}

\noindent Thus, in these cases, for any $\alpha <2$, the right hand side of (\ref{eq:sum}) is finite. (In the boundary case where $(p-1)/q=2$, this term diverges for $\alpha=2$ but converges for all smaller $\alpha$; when $(p-1)/q>2$, this term converges). 

This argument provides the basic intuition for why eigenvalue decay is related to jaggedness---if eigenvalues decay more slowly, then the more ``erratic'' eigenfunctions have more weight, which demands a smaller $\alpha$ for the overall limit to converge. The reason this cannot be made into a full proof \emph{in general} is that it is typically not possible to have a matching lower bound on $\abs{\phi_{k}(x+h) - \phi_{k}(x)}$. For instance, it may be that this is 0 infinitely often at some $x$, even though $q$ is the smallest value such that the upper bound holds. However, it is possible to provide such a lower-bound in examples.

A sufficient condition that ensures that the bound is tight is that (i) $p$ also lower-bounds the decay rate of the eigenvalues, i.e., $\lambda_{k} \geq \underline{B} \cdot k^{-p} $ and (ii) for some $0<a <b<\infty$:\footnote{As above, assume $h^{-1/q}$ is an integer to save on notation.} 

\begin{equation*} 
\sum_{k=ah^{-1/q}}^{bh^{-1/q}} (\phi_{k}(x+h)-\phi_{k}(x))^{2} \geq c_{x} h^{-1/q}. 
\end{equation*}

\noindent I first show that these conditions do indeed imply the lower bound is tight, which in turn shows that $\alpha^{*}(x) =\min\{2,\frac{p-1}{q}\}$; in particular, $\alpha^{*}(x) \leq 2$ holds provided $\phi_{k}'(x) \neq 0$ for some $k$, since under this case $\frac{(\phi_{k}(x+h)- \phi_{k}(x))^{2}}{h^{2}} \cdot \frac{1}{h^{\alpha-2}} \rightarrow \infty$ as $h \rightarrow 0$ for any $\alpha > 2$. And indeed,  

\begin{align*} 
\mathbb{E}[(W(x+h)-W(x))^{2}] &\geq   \sum_{k=ah^{-1/q}}^{bh^{-1/q}} \lambda_{k} (\phi_{k}(x+h)-\phi_{k}(x))^{2}  \\ & \geq \underline{B}((bh^{-1/q})^{-p}) \sum_{k=ah^{-1/q}}^{bh^{-1/q}} (\phi_{k}(x+h)-\phi_{k}(x))^{2} \\ & \geq \underline{C}_{x} h^{p/q}h^{-1/q}=\underline{C}_{x} h^{(p-1)/q}.
\end{align*}

\noindent Thus, for any $\alpha > (p-1)/q$: 

\begin{equation*} 
\frac{\mathbb{E}[(W(x+h)-W(x))^{2}]}{h^{\alpha}} \rightarrow \infty. 
\end{equation*} Since jaggedness is defined using the supremum over $\alpha$ such that $\limsup_{h \rightarrow 0} \frac{\mathbb{E}[(W(x+h)-W(x))^{2}]}{h^{\alpha}} < \infty$, combining the upper and lower bounds yields: 

\begin{equation*} 
\alpha^{*}(x) = \min\{2, \frac{p-1}{q} \}.
\end{equation*}

As a result, since the eigenbasis and hence $q$ is fixed, faster eigenvalue decay weakly increases $\alpha^{*}(x)$ and hence cannot increase jaggedness. This change is strict whenever $1 < p < 1+2q$; taking $p \leq 1$ violates the assumption that the integrated variance is finite, while $\alpha^{*}(x)=2$ for all $p \geq 1+2q$. To show that this condition is indeed not vacuous, I verify it for the eigenbasis corresponding to Brownian motion: 

\begin{equation*} 
\phi_{k}(x) = \sqrt{2} \sin \left((k-1/2)\pi x \right). 
\end{equation*}

\noindent Using that $\sin(x)-\sin(y)=2 \cos((x+y)/2)\sin((x-y)/2)$: 

\begin{equation*} 
\phi_{k}(x+h)- \phi_{k}(x) =2\sqrt{2} \cos((k-1/2)\pi(x+h/2)) \sin\left((k-1/2)\pi h/2 \right). 
\end{equation*}

\noindent Letting $a=1/3$ and $b=2/3$ for $h$ sufficiently small, $(k-1/2)\pi h/2 \in [\pi/8, \pi/3]$. Since $\sin((k-1/2)\pi h/2)$ is bounded away from 0 in this region, there exists a constant $\underline{c}_{s}$ such that $\sin\left((k-1/2)\pi h/2 \right)^{2} > \underline{c}_{s}$. 

For the $\cos$ term, note that for any $x \in (0,1)$ there exists a $\bar{h}_{x}$ sufficiently small such that whenever $h < \bar{h}_{x}$, $x+h/2 \in [\bar{h}_{x}, (1- \bar{h}_{x}) ]$. Let $\eta=\pi \bar{h}_{x}/4$. Suppose that, at some $k$ for some $m$ $\abs{(k-1/2)\pi(x+h/2)-m\pi-\pi/2}< \eta$. Note that the increment between successive $(k-1/2)\pi(x+h/2)$ terms lies in $[4\eta, \pi-4\eta]$. Thus, if $\abs{(k-1/2)\pi(x+h/2)-m\pi-\pi/2}< \eta$, then there cannot be any $m'$ such that $\abs{(k+1/2)\pi(x+h/2)-m'\pi-\pi/2}< \eta$; indeed, this would imply that $(k-1/2)\pi(x+h/2)$ is in an interval of width $2\eta$ centered at $(m+1/2)\pi$, while  $(k+1/2)\pi(x+h/2)$ is at least $\bar{h}_{x}\pi> 2\eta$ larger. Thus, $(k+1/2)\pi(x+h/2)$ cannot be within $\eta$ of $m\pi+\pi/2$. However, to be in the next $2\eta$-length interval around $\pi/2$ modulo $\pi$, it would need to be at least $\pi-2 \eta$ larger (noting that this is exactly what the increase would need to be if at the right endpoint of the region, then taking it to the left endpoint of the next region). However, it is no larger than $\pi-4\eta$, so that this is impossible.  It follows that in the sum: 

\begin{equation*} 
\sum_{k=1}^{\infty} \cos((k-1/2)\pi(x+h/2))^{2},
\end{equation*}

\noindent at least one of every two consecutive terms is at least $\eta$ away from $\pi/2$ mod $\pi$, so that they are at least $\sin(\eta)^{2}$. So, setting $\underline{k}=\lceil a/h \rceil $ and $\overline{k}=\lfloor b/h \rfloor$, putting this together, conclude that

\begin{equation*} 
\sum_{a/h\leq k \leq b/h} (\phi_{k}(x+h)-\phi_{k}(x))^{2} \geq 8\underline{c}_{s} \frac{\overline{k}-\underline{k}-1}{2}\sin^{2}(\eta) \geq c_{x}/h,
\end{equation*}

\noindent which is the required condition for $q=1$. Thus, for the Brownian eigenbasis, and eigenvalues bounded above and below by constant multiples of $1/k^{p}$, $\alpha^{*}(x) = \min\{2, p-1\}$ for $x \in (0,1)$. In particular, slower decay strictly increases jaggedness for $1 < p < 3$, with the same caveat that jaggedness is fixed for $p \geq 3$.

\newpage

\bibliographystyle{ecta}
\bibliography{orgcomplexity}

\end{document}